\documentclass[11pt]{article}

\usepackage[margin=1in]{geometry}
\usepackage{amsmath,amssymb,amsthm,mathtools,bbm}
\usepackage{thmtools}
\usepackage{enumitem}
\usepackage{hyperref}
\hypersetup{hidelinks}

\makeatletter
\AddToHook{env/algorithmic/before}{\def\@currentcounter{ALG@line}}
\makeatother
\usepackage[capitalize,nameinlink]{cleveref}
\crefalias{ALG@line}{line}

\usepackage{microtype}
\usepackage{xcolor}
\usepackage{braket}
\usepackage[final]{showlabels}
\usepackage{algorithm}
\usepackage[noend]{algpseudocode}
\usepackage{comment}
\usepackage{showlabels}
\usepackage{mdframed}
\usepackage{MnSymbol}
\usepackage[style=alphabetic,natbib=true,maxcitenames=2, maxbibnames=10,backend=bibtex]{biblatex}
\newcommand{\ip}[2]{\langle{#1},{#2}\rangle}

\newtheorem{theorem}{Theorem}[section]
\newtheorem{lemma}[theorem]{Lemma}

\newtheorem{definition}[theorem]{Definition}
\newtheorem{claim}[theorem]{Claim}

\crefname{definition}{definition}{definitions}

\newcommand{\Herm}{\textnormal{Herm}}
\newcommand{\LS}{\textsf{LocalSearch}}
\newcommand{\Cover}{\textsf{Cover}}
\newcommand{\dist}{\textnormal{dist}}
\newcommand{\Sep}{\textnormal{Sep}}
\newcommand{\Sch}{\mathsf{Sch}}
\newcommand{\Tr}{\textnormal{Tr}}
\newcommand{\C}{\mathbb{C}}
\newcommand{\cA}{\mathcal{A}}
\newcommand{\cC}{\mathcal{C}}
\newcommand{\Reals}{\mathbb{R}}
\newcommand{\cD}{\mathcal{D}}

\newcommand{\cT}{\mathcal{T}}
\newcommand{\cH}{\mathcal{H}}
\newcommand{\cK}{\mathcal{K}}
\newcommand{\cV}{\mathcal{V}}
\newcommand{\cS}{\mathcal{S}}
\newcommand{\cL}{\mathcal{L}}
\newcommand{\cG}{\mathcal{G}}
\newcommand{\cR}{\mathcal{R}}
\newcommand{\cN}{\mathcal{N}}

\newcommand{\Naturals}{\mathbb{N}}
\newcommand{\wt}{\widetilde}
\newcommand{\poly}{\textsf{poly}}
\newcommand{\OPT}{\textsc{Opt}}
\newcommand{\conv}{\textnormal{conv}}

\title{Polynomial-Time Algorithms for Nuclear Tensor Norms and Multipartite Separability}

\author{Martino Bernasconi \and Giulio Malavolta}
\author{
\begin{tabular}{cc}
& \\
{Martino Bernasconi}\thanks{Martino Bernasconi was supported by an ERC grant (Project 101165466 — PLA-STEER).} & {Giulio Malavolta}\thanks{Giulio Malavolta was supported by the European Research Council through an ERC Starting Grant (Grant agreement No.~101077455, ObfusQation) and by the Deutsche Forschungsgemeinschaft (DFG, German Research Foundation) under Germany's Excellence Strategy - EXC 2092 CASA – 390781972.}\\
\small{Bocconi University} & \small{Bocconi University}\\
{\textcolor{black}{\small\texttt{martino.bernasconi@unibocconi.it}}} & %
{\textcolor{black}{\small\texttt{giulio.malavolta@unibocconi.it}}}
\end{tabular}
}

\date{}

\begin{document}
\maketitle

\begin{abstract}
    We study a multilinear optimization problem for tensors with bounded Frobenius norm, where each of the $k$ local factors of dimension $d$ is chosen from a convex set. We give a deterministic algorithm running in time $d^{O(k)}$ for constant additive approximations. As applications, we obtain the first polynomial-time algorithms at constant accuracy for weak membership in the nuclear-norm unit ball of high-order tensors and for multipartite quantum separability in Frobenius norm. 
    Our approach views tensor optimization as a cooperative multiprover game and combines this perspective with a recursive spectral compression procedure. 
    We then show that related ideas admit a quantum implementation when the input state is given through copies, rather than an explicit classical description. In these settings we obtain an algorithm to solve Frobenius-norm separability testing using $\poly(k)$ copies and $\poly(k\log d)$ time.
\end{abstract}

\newpage
\tableofcontents

\newpage
\section{Introduction}

Let $\cH_1,\ldots, \cH_k$ be finite-dimensional real Hilbert spaces, each of dimension at most $d$. For each $j\in[k]$ let $\cK_j\subseteq B_{\cH_j}$ be a convex set which admits a polynomial-in-$d$ linear optimization algorithm and let
\[
\cC=\conv\left\{x_1\otimes\cdots\otimes x_k:x_j\in \cK_j\right\}.
\]
We are interested in the following \emph{weak membership problem}: Given $M\in\cH_1\otimes\cdots\otimes \cH_k$ decide whether 
\[
  M\in\cC,\quad\text{or}\quad \dist(M,\cC) = \inf_{M'\in \cC}\|M-M'\| \geq \epsilon,
\]
where the $\|\cdot\|$ is the norm induced by the inner product in $\cH_1\otimes \cdots\otimes \cH_k$. Besides being a natural algorithmic task on its own, this general formulation specializes to well-known computational problems, which we discuss next.

When $\mathcal{K}_j$ is the unit ball of $\Reals^d$, then the convex hull of unit-norm rank-one tensors $\mathcal{C}$ corresponds to all tensors with (real) \emph{nuclear norm} at most $1$ \citep{friedland2018nuclear}, where the nuclear norm of a tensor $\|\cdot \|_{*}$ is defined as
\[
    \|M\|_{*} = \inf\Bigg\{\sum_{i=1}^r|\lambda_i|:M = \sum_{i\in[r]}\lambda_i (x_{1,i}\otimes\cdots \otimes x_{k,i}),\|x_{j,i}\|=1, x_{j,i}\in \Reals^d,r\in \Naturals, \lambda_i\in \Reals\Bigg\}.
\]
Note that linear optimization over $v\in \Reals^d$ reduces to computing $\|v\|$, which is of course polynomial-time. Originating from ideas of \citet{grothendieck1955produits} and \citet{schatten1950theory}, the computational problem of weak membership in the nuclear ball was introduced by \citet*{friedland2018nuclear}, who proved NP-hardness of nuclear-norm evaluation for real tensors of order $k\ge3$ and complex tensors of order $k\ge4$.
They also proved NP-hardness of nuclear-ball weak membership when the accuracy is allowed to depend on the instance, already for fourth-order tensors satisfying additional positivity and symmetry restrictions. \citep{hillar2013most,liu2007complexity} established NP-hardness at inverse-polynomial Frobenius accuracy for $k=3$ over both fields. On the other hand, for fixed $k$ and $\epsilon>0$, the randomized additive spectral-norm algorithm of \citet*{fernandez2005tensor}, yields, as a consequence, polynomial-time weak membership at Frobenius accuracy $\epsilon$. Other works provide polynomial-time algorithms for computing nuclear norms with multiplicative approximation factor exponential in $k$ \citep{hu2015relations,li2016bounds,friedland2018nuclear,
chen2020partitions,qi2022spectral,he2023approximating}.

Conversely, when $\cK_j=\cD_d$ is the set of density matrices in dimension $d$ and $\cH_j=\Herm(\C^d)$: 
\[
\mathcal{C} =\Sep(d,k):=\conv\left\{\rho_1\otimes\cdots\otimes\rho_k:\rho_1,\ldots,\rho_k\in\cD_d\right\},
\]
is the set of separable $k$-partite states. Since linear optimization over $\cD_d$ corresponds to the computation of the largest eigenvalue, which is again polynomial-time, our problem specializes to the Frobenius-norm \emph{quantum separability} problem \citep{gurvits2003classical}. This problem is known to be NP-hard when $\epsilon = \exp(-\Omega (d))$~\citep{gurvits2003classical} and even $\epsilon=1/\poly(d)$~\citep{gharibian2010strong}, already for $k=2$. When distance is measured in \emph{trace norm} instead, \citet*{harrow2013testing} showed evidence that even a polynomial-time algorithm with \emph{constant} additive error would solve $3$-SAT in time $\exp(O(\sqrt n\,\mathrm{polylog}\,n))$; under the Exponential Time Hypothesis, this would rule out every running time $\exp(\log^{2-c}d)$, $c>0$. On the positive side \citet*{brandao2011quasipolynomial} (see also~\citep{shi2015epsilon}) gave a quasi-polynomial algorithm, $\exp(O(\eta^{-2}\log^2 d))$, for bipartite weak membership in the LOCC norm and in the Frobenius norm, extended by \citet*{brandao2017quantum} to $k$ parties, in time $\exp(O(k^4\eta^{-2}\log^2d))$, for the one-way LOCC norm and a multipartite variant of the Frobenius norm. Very recently, the bipartite version of the problem was shown to be solvable in polynomial time for constant $\epsilon >0$ and distance in Frobenius norm \citep{malavolta2026quantum,jeronimo2026optimal}. While these results plausibly generalize to a constant $k$, independent of $d$, a polynomial-time algorithm for the general case is, to the best of our knowledge, currently open.

\subsection{Our Results}\label{sec:ov-results}

In this work we propose a deterministic polynomial-time algorithm for the general problem stated in the introduction, for any $k$ and $d$, in the \emph{constant} accuracy $\epsilon$ regime. Specializing it to weak membership in the unit ball of the nuclear norm, we obtain the first polynomial-time algorithm for general tensors as an immediate consequence. The algorithm runs in time polynomial in the size of the input tensor, $d^k$.

\begin{theorem}[Nuclear Norm; Informal]\label{thm:ov-nuc}
Let $K_{*}=\{M\in(\Reals^d)^{\otimes k}:\|M\|_{*}\le1\}$. For every $\epsilon>0$ there is a deterministic algorithm that, given $M\in(\Reals^d)^{\otimes k}$, decides whether $M\in K_{*}$ or $\dist(M,K_{*})\ge\epsilon$ in time $d^{\,O_\epsilon(k)}$.
\end{theorem}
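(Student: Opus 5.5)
The plan is to reduce weak membership to approximate linear optimization over $K_*$, via a Frank--Wolfe (conditional gradient) scheme, and then to supply the linear-optimization oracle. Since $K_*=\conv\{x_1\otimes\cdots\otimes x_k:\|x_j\|\le 1\}$, linear optimization in a direction $G$ amounts to approximating the injective (spectral) norm $\max_{\|x_j\|\le1}\ip{G}{x_1\otimes\cdots\otimes x_k}$, with additive error $\delta\|G\|_F$.

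\textbf{Reduction.} We minimize $f(Y)=\tfrac12\|M-Y\|^2$ over $Y\in K_*$ by Frank--Wolfe, starting from $Y_0=0$. At step $t$ the gradient direction is $G_t=M-Y_t$, and we call an oracle returning a rank-one $S_t$ in $K_*$ with $\ip{G_t}{S_t}\ge \max_{S\in K_*}\ip{G_t}{S}-\delta\|G_t\|$. Then we set $Y_{t+1}=(1-\gamma_t)Y_t+\gamma_t S_t$. The standard analysis uses that the curvature constant is bounded by $\mathrm{diam}(K_*)^2\le 4$ and that $\|G_t\|\le\|M\|+1$. It shows that after $T=O(1/\epsilon^2)$ steps we have $f(Y_T)\le \min_{K_*} f+O(1/T+\delta(\|M\|+1))$. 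If $\|M\|_F>1+\epsilon$ we can reject immediately, because the Frobenius norm is dominated by the nuclear norm. Hence $\|M\|=O(1)$, and choosing $\delta=\Theta(\epsilon^2)$ separates the two cases $\dist(M,K_*)=0$ and $\dist(M,K_*)\ge\epsilon$ by thresholding $\|M-Y_T\|$. Alternatively, the duality gap $\ip{G_t}{Y_t-S_t}$ yields a certificate directly.

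\textbf{Oracle.} I need a deterministic $d^{O_\delta(k)}$ algorithm for additive-$\delta\|G\|$ approximation of the injective norm. The approach is to fix the factors one at a time, treating the problem as a cooperative game in which player $j$ picks $x_j$. The key compression step is spectral. Flatten $G$ as a $d\times d^{k-1}$ matrix and keep only its top $r=O(1/\delta^2)$ singular directions. This costs at most $\delta\|G\|_F$, because components with singular value at most $\delta\|G\|_F$ contribute at most $\delta\|G\|_F$ against any unit rank-one tensor, and there are at most $1/\delta^2$ singular values exceeding that threshold. The optimal $x_1$ can then be assumed to lie in an $r$-dimensional subspace. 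We enumerate a $\delta$-net of that subspace's unit ball, of size $(1/\delta)^{O(r)}=O_\delta(1)$, contract $G$ with each net point, and recurse on the resulting order-$(k-1)$ tensor, whose Frobenius norm is at most $\|G\|$. At the final level ($k=1$, or $k=2$ via SVD) the problem is solved exactly.

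\textbf{Main obstacle.} Naively the errors add to $k\delta$, and the enumeration tree has size $O_\delta(1)^k$, giving $d^{O(k)}\cdot 2^{O_\delta(k)}$, which is fine. The real problem is error accumulation: we need total error $\delta$, which forces per-level accuracy $\delta/k$ and a net of size $(k/\delta)^{O(k^2/\delta^2)}$ per level. That is not of the form $d^{O(k)}$ unless $k\lesssim\log d$. The step I expect to be hardest is to make the errors not accumulate additively. I would try charging level $j$'s truncation error against the Frobenius norm of the residual, i.e.\ truncating at threshold $\delta\|G\|_F$ relative to the \emph{remaining} mass. Since each discarded component reduces $\|G\|_F^2$ by at least $\delta^2\|G\|_F^2$, only $O(1/\delta^2)$ levels can incur nontrivial loss; at the other levels the flattening is already approximately low rank at no cost. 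This is an energy-increment argument, and it would give total error $O(\delta)$ with per-level net size $O_\delta(1)$. The overall running time would then be $d^{O(k)}\cdot 2^{O_\delta(k)}=d^{O_\epsilon(k)}$, as claimed.
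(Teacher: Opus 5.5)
Your Frank--Wolfe reduction (reject when $\|M\|>1+\epsilon$, then run $O(\epsilon^{-2})$ iterations with an additive-$O(\epsilon^2)$ linear oracle) is exactly what the paper does. The gap is in the oracle, at the step you flag as hardest: your energy accounting does not give total error $O(\delta)$.

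The directions you discard are those with singular value \emph{below} $\delta\|G\|$, so each one removes at most $\delta^2\|G\|^2$ of mass, not at least that much. The most you can say is weaker. A level whose truncation loses $\tau$ must discard mass at least $\tau^2$. Even granting that this mass is depleted for later levels, you only get per-level losses $e_j\le\delta$ with $\sum_j e_j^2\le 1$. That still allows $\sum_j e_j$ to be as large as $\min(k\delta,\sqrt k)$. Saying ``only $O(\delta^{-2})$ levels lose a lot'' does not control the many levels that each lose a little, and those sub-threshold losses are not zero.

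The sharp form of your idea is the paper's Lemma~\ref{lem:energy}. It bounds the loss at step $j$ by $\eta\|q_j\|$, where $q_j$ is the part of the compressed prefix discarded at that step, and the bound $\sum_j\|q_j\|^2\le 1$ comes from telescoping. Cauchy--Schwarz then gives $\eta\sqrt k$, not $O(\eta)$. So you are forced to take $\eta=\Theta(\epsilon/\sqrt k)$, i.e.\ retained dimension $\Theta(k/\epsilon^2)$ per level. The net-rounding errors also add along the chain, which forces net radius $\Theta(\epsilon/k)$.

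With these parameters your branching recursion cannot meet the time bound. You contract $G$ with the chosen net point and recompute an SVD, so the retained subspace at level $j$ depends on the whole branch. Nothing can be merged, and the search tree is a product of $k$ nets of size $(k/\epsilon)^{O(k/\epsilon^2)}$, i.e.\ $\exp(\wt O(k^2/\epsilon^2))$ in total.

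The idea you are missing is the paper's prefix compression. It compresses across the cut $V_{j-1}\otimes\cH_j$ versus $\cH_{j+1:k}$, so the maps $U_j$ and compressed tensors $T_j$ are computed once, independently of the strategies. The entire prefix then enters only through a message $p_j=U_j^{\dagger}(p_{j-1}\otimes\rho_j)$ in a space of dimension at most $\eta^{-2}$. Because $p\mapsto U_j^{\dagger}(p\otimes\rho)$ is nonexpansive, one keeps a single $\delta$-cover of reachable messages per layer, with stored witnesses. The cost is therefore additive over layers rather than multiplicative: roughly $k\,|\mathcal{G}|^2\,\poly(d^k)$ with $|\mathcal{G}|=\exp(O(k\epsilon^{-2}\log(k/\epsilon)))$.

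This bound still has a $\log k$ in the exponent, so it is $d^{O_\epsilon(k)}$ only when $k\le d^2$. For $k>d^2$ the paper additionally discretizes each factor to a $\Theta(\epsilon/k)$-net and regroups the sites into $\Theta(\sqrt{dk})$ blocks, handled by brute force. Your proposal contains neither the merging step nor the large-$k$ step, so it does not establish the $d^{O_\epsilon(k)}$ running time.
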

If we instead specialize the general problem to density matrices, we obtain the first polynomial-time algorithm for the \emph{multipartite} weak membership problem in the Frobenius norm. This improves upon recent works \citep{malavolta2026quantum,jeronimo2026optimal} that obtained an analogous result for $k=2$. Here the input size is an explicit $d^k \times d^k$ density matrix.

\begin{theorem}[Quantum Separability; Informal]\label{thm:ov-sep}
For every $\epsilon>0$ there is a deterministic algorithm that, given $\rho\in\Herm((\C^d)^{\otimes k})$, decides whether $\rho\in\Sep(d,k)$ or $\dist(\rho,\Sep(d,k))\ge\epsilon$ in time $d^{\,O_\epsilon(k)}$.
\end{theorem}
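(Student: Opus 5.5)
The plan is to derive Theorem~\ref{thm:ov-sep} from the general weak-membership result for $\cC=\conv\{x_1\otimes\cdots\otimes x_k: x_j\in\cK_j\}$, taking $\cH_j=\Herm(\C^d)$ as a real inner-product space of dimension $d^2$ under the Hilbert--Schmidt product, and $\cK_j=\cD_d$. Three hypotheses need checking. First, $\cD_d\subseteq B_{\cH_j}$, which holds because $\|\rho\|_F\le\Tr\rho=1$. Second, $\cD_d$ is convex. Third, linear optimization over $\cD_d$ is efficient: for $H\in\Herm(\C^d)$ we have $\max_{\rho\in\cD_d}\Tr(H\rho)=\lambda_{\max}(H)$, attained at the projector onto a top eigenvector, and this can be computed to any needed precision in $\poly(d)$ time. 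Finally, the Hilbert--Schmidt norm on $\Herm(\C^d)^{\otimes k}\cong\Herm((\C^d)^{\otimes k})$ is exactly the Frobenius norm of $d^k\times d^k$ matrices. So $\dist(\rho,\Sep(d,k))$ in the statement coincides with the distance in the general problem.

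Given that reduction, I will invoke the general algorithm, which should run in time $(\dim\cH_j)^{O_\epsilon(k)}=(d^2)^{O_\epsilon(k)}=d^{O_\epsilon(k)}$. The input $\rho$ is Hermitian by assumption, so it already lies in the right space and no projection is needed.

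For completeness, the general algorithm I expect works as follows. By the separation theorem, $\dist(M,\cC)\ge\epsilon$ is witnessed by a direction $W$ with $\ip{W}{M}-\max_{x_j\in\cK_j}\ip{W}{x_1\otimes\cdots\otimes x_k}\ge\epsilon$. Conversely, membership can be certified approximately by a Frank--Wolfe style procedure. This procedure needs an additive-$O(\epsilon^2)$ oracle for the multilinear maximization $\max_{x_j\in\cK_j}\ip{W}{\otimes_j x_j}$, with $\|W\|\le 2$, say. Running about $O(1/\epsilon^2)$ Frank--Wolfe iterations with such an oracle either drives the distance below $\epsilon$ or produces a separating witness. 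The oracle itself would be built as follows:
\begin{enumerate}[label=(\roman*)]
\item Compress $W$ spectrally, keeping only the $O(1/\epsilon^2)$ top singular directions across each bipartition $\{1..j\}|\{j+1..k\}$. Since $\|W\|_F$ is bounded, the discarded part contributes at most $O(\epsilon)$ to any product-state value.
\item Enumerate an $\epsilon$-net over the low-dimensional coefficients.
\item Optimize each factor by a linear optimization over $\cK_j$, processing the factors recursively one party at a time.
\end{enumerate}

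The main obstacle is this recursive compression. Done naively, it compounds the error or the net size multiplicatively in $k$. The aim is that each level costs a factor of $d^{O_\epsilon(1)}$, giving $d^{O_\epsilon(k)}$ in total, and that the errors add up to $O(\epsilon)$ rather than to $k\epsilon$. Achieving this likely requires the Frobenius bound to be "spent" across levels, so that the total squared norm discarded is bounded.

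For the separability statement itself, the only delicate point is numerical: the eigenvector computation gives approximate linear optimization. I will absorb its error into $\epsilon$ by computing to accuracy $\epsilon^2/\poly$, which keeps the running time polynomial.
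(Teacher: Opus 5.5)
Your proof is correct and matches the paper's: it instantiates the general weak-membership theorem (Theorem~\ref{th:maintheoremclassical}) with $\cH_j=\Herm(\C^d)$ of real dimension $d^2$, $\cK_j=\cD_d$, and the top-eigenvector linear optimization oracle, giving time $(d^2)^{O_\epsilon(k)}=d^{O_\epsilon(k)}$. Your optional sketch of the general algorithm is loose in its parameters, but since you invoke the general theorem as a black box this does not affect the proof: the energy argument only bounds the total compression error by $\eta\sqrt{k}$, so each cut must keep $O(k/\epsilon^2)$ directions; the message covers are built layer by layer rather than multiplied across levels; and a separate blocking argument is needed when $k$ is large relative to the local dimension.
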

Our algorithm relies on the fact that we can reduce the weak-membership problem to a dual optimization problem in which the tensor norm is guaranteed to be bounded. Akin to the idea of regularity, this lets us approximate the tensor with only a few spectral components and limit the information factor $j$ needs about the optimal solution on all other factors. Then each factor can ``guess'' the optimal solution and optimize its own factor. This is the fundamental idea behind our results. However, this simple strategy is not enough to obtain a truly polynomial-time algorithm, and it requires several technical improvements (elaborated below).

Next, motivated by applications in quantum information, we also consider the question of multipartite quantum separability in the \emph{property testing} setting, i.e., where the inputs are copies of an unknown quantum state $\rho\in\cD((\C^d)^{\otimes k})$. Here we obtain a much stronger result, namely an algorithm to solve Frobenius-norm weak membership where the number of copies is \emph{independent of $d$} and whose runtime is polynomial in $k\log d$, rather than $kd$. To the best of our knowledge, this is the first dimension-free test for weak membership in the set of separable states. Prior work obtained comparable dimension-free guarantees for the simpler task of \emph{state certification} \citep{buadescu2019quantum} or product state certification but restricted to \emph{pure states} \citep{soleimanifar2022testing,beckey2026optimal}.

\begin{theorem}[Quantum Separability from Copies; Informal]\label{thm:ov-quantum}
For every $\epsilon>0$ there is a quantum algorithm that, given copies of an unknown $\rho\in\cD((\C^d)^{\otimes k})$, distinguishes $\rho\in\Sep(d,k)$ from $\dist(\rho,\Sep(d,k))\ge\epsilon$ with high probability, using $k^{O_\epsilon(1)}$ copies of $\rho$ and total time $k^{O_\epsilon(1)}\cdot\mathrm{polylog}(d)$.
\end{theorem}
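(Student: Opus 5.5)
We use the same reduction that underlies \cref{thm:ov-sep}, but now run it on copies of $\rho$. A weak-membership instance reduces to a dual question. If $\dist(\rho,\Sep(d,k))\ge\epsilon$, the separating hyperplane theorem gives a witness $W=\rho-\pi_{\Sep}(\rho)$, normalized so that $\|W\|_F\le 1$, with
\[
\Tr(W\rho)-\max_{\sigma\in\Sep(d,k)}\Tr(W\sigma)\ge\epsilon.
\]
The normalization $\|W\|_F\le1$ is what makes dimension-free estimates possible. We search for such a witness inside a small, copy-accessible family. Concretely, we replace the explicit tensor by its spectral compression onto a few components. The optimal product strategy $\rho_1\otimes\cdots\otimes\rho_k$ against $W$ then only needs to know $O_\epsilon(1)$ "summary" numbers about the other factors. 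Each factor can guess these from an $\epsilon$-net of size $2^{O_\epsilon(1)}$ and then optimize locally with a top-eigenvector computation. This is the cooperative multiprover game view.

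\textbf{Step 1: sampling the compression.} The compression of \cref{thm:ov-sep} uses the top singular directions of flattenings of $\rho$ along a recursion over the $k$ parties. From copies, we estimate these quantities with swap-test style measurements. Quantities of the form $\Tr(\rho^{\otimes 2}\,\mathrm{SWAP}_S)$ and $\Tr((A\otimes B)\rho)$ for observables built on $O(1)$ copies have variance bounded by constants independent of $d$. Hence $\poly(k)/\epsilon^{O(1)}$ copies suffice, via Chebyshev/median-of-means and a union bound over the $O(k)$ recursion levels.

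To avoid writing down $d$-dimensional eigenvectors, we represent the local candidates implicitly. They are low-degree polynomials in $\rho$ (quantum principal component / density-matrix exponentiation style), or measurements on fresh copies. Each gate then acts on $O(\log d)$ qubits, which gives the $\mathrm{polylog}(d)$ time.

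\textbf{Step 2: the test.} For each of the $2^{O_\epsilon(1)}$ guesses of the summary statistics, we do two things. First, we estimate $\langle\rho,\rho\rangle=\Tr\rho^2$ and the best product overlap $\max_{\rho_j}\Tr\bigl(\rho\,(\rho_1\otimes\cdots\otimes\rho_k)\bigr)$ restricted to the guess. Second, we accept if
\[
\Tr\rho^2-2\,\widehat{\mathrm{ov}}+\widehat{\|\sigma\|^2}\lesssim\epsilon^2/2
\]
for some guess, where $\sigma$ is a mixture of at most $O(1/\epsilon^2)$ product states. This works because, by Carathéodory/Maurey sparsification in Frobenius norm, a near-optimal separable approximant can be chosen with $O(1/\epsilon^2)$ terms. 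The Frobenius distance to such a $\sigma$ expands into overlaps, each estimable by swap tests with constant variance. Completeness and soundness then follow from the classical analysis of \cref{thm:ov-sep}, plus the additive estimation errors.

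\textbf{Main obstacle.} The hardest step is implementing the local linear optimization, i.e.\ the top eigenvector of a partial contraction of $\rho$ against the guessed summaries, coherently from copies. It has to be done with error independent of $d$ and without the gap assumptions that phase estimation normally needs. My first attempt is to avoid exact eigenvectors. Instead I would use power-method states $\propto(\rho_{\text{contr}})^t$ with $t=O(1/\epsilon^2)$, prepared on $O(t)$ copies by postselected swap constructions. A gapless power iteration already achieves additive $\epsilon$ in Rayleigh quotient after $O(\log(1/\epsilon)/\epsilon)$-many steps when $\|\cdot\|_F\le1$. This keeps copies at $k^{O_\epsilon(1)}$.
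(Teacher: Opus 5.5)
Your plan has a genuine gap. It transports the classical compress-and-guess algorithm behind \Cref{thm:ov-sep} to the copy model, but that algorithm is exponential in $k$, and the objects it needs cannot be obtained from copies in $\mathrm{polylog}(d)$ time.

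\textbf{Exponential guessing.} Your claim that each party needs only $O_\epsilon(1)$ summary numbers and $2^{O_\epsilon(1)}$ guesses holds only for bounded $k$. Because \Cref{th:maincompression} incurs error $\eta\sqrt k$, you must take $\eta=\Theta(\epsilon/\sqrt k)$. Then $\dim V_j$ can be as large as $\eta^{-2}=\Theta(k/\epsilon^2)$, and messages must be tracked to accuracy $\Theta(\epsilon/k)$. So the covers of \Cref{def:reachable} have size $\exp(\tilde\Theta(k/\epsilon^2))$ in the worst case. This is not slack in the analysis. Suppose the tensor is a GHZ-like pure state with about $\sqrt k/\epsilon$ equal Schmidt coefficients across the first cut. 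Then the reachable messages in $V_1$ already contain a Euclidean ball of radius $\Theta(\epsilon/\sqrt k)$ in dimension $\Theta(k/\epsilon^2)$, and covering it at scale $\Theta(\epsilon/k)$ needs $\exp(\Omega(\epsilon^{-2}k\log k))$ points. Classically this is affordable only because the input has size $d^{2k}$. Nothing in your proposal replaces the enumeration, so it cannot run in $k^{O_\epsilon(1)}$ time.

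\textbf{No copy access to the recursion.} Step~1 does not deliver what the recursion uses. Each $U_j$ maps into $V_{j-1}\otimes\Herm(\C^d)$, an object with $\Omega(d^2)$ real parameters, while swap tests only estimate scalars. Your proposal gives no coherent access to these singular directions, for three reasons. First, for $j\ge1$ the tensors $T_j$ are not states. Second, the Gram operators of these flattenings are realignments, not states you hold copies of. Third, in Frank--Wolfe the tensor is the signed $\rho-\sigma_t$ rather than $\rho$. Density-matrix exponentiation or qPCA therefore does not apply. Moreover, the local step of \Cref{alg:meta} is not a top-eigenvector computation. It is the realizability program $\min_{\rho\in\cK_j}\|U_j^{\dagger}(p\otimes\rho)-g\|$, which needs $U_j$ itself. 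Your power-method states $\propto X^t$ also break for indefinite $X$: they concentrate on the eigenvalues of largest magnitude, which may be negative.

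\textbf{What the paper uses instead.} The paper's proof rests on two ideas your proposal lacks.

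For the $\poly(k)$ bound, it does not use spectral compression for the Frank--Wolfe oracle at all. Instead it extends the product-state learner of \citet{bakshi2025learning} (\Cref{th:mainfromBak}).
\begin{enumerate}
\item Since $\sigma\succeq0$, any product $x$ with signed value at least $\theta$ has $\langle x|\rho|x\rangle\ge\theta$.
\item Such vectors are covered by $O(1/\theta)$ reference products. These are built site by site with an overlap penalty and counted via a Gram/Gershgorin bound.
\item Near a reference, only $O(\log(1/\alpha))$ sites can deviate substantially. So tomography on a $(kq)^{O_\epsilon(1)}$-dimensional subspace plus bounded-degree polynomial optimization suffices.
\end{enumerate}

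For the $\mathrm{polylog}(d)$ bound, it first reduces the local dimension to $q=O(k^2/\epsilon^4)$ (\Cref{th:dimension-reduction}).
\begin{enumerate}
\item It filters each site with $F_j=f(\Tr_{-j}(\rho^2))$, implemented by swap tests and density-matrix exponentiation.
\item It rejects if the estimated discarded mass $e_j$ is large. This is safe because separable inputs have $e_j\le O(\epsilon^2/k)$.
\item It removes the unknown embedding by Schur-transform compression followed by a common local Haar twirl. The $q$-dimensional supports are therefore never learned.
\end{enumerate}
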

We highlight that for the analogous problem with the trace norm there is a lower bound of $d^{\Omega(k)}$ on the number of copies~\citep{badescu2020lower}. Other copy lower bounds \citep{beckey2025product,akresh2026tight} apply also to the Frobenius norm but assume that the measurements act on one copy at a time, or consider measurements that remain positive after partial transposition. Our algorithm bypasses these barriers by (necessarily) using coherent measurements across copies.

The proof relies on two main ingredients: A copy-efficient maximization of $\Tr((\rho-\sigma)\pi)$ over pure product states $\pi$, with $\rho$ given by copies and $\sigma$ an explicit mixture of product states, which extends the product-state learner of~\citep{bakshi2025learning} to qudits and to signed objective functions; and a reduction, performed on copies, from local dimension $d$ to local dimension $O(k^2/\epsilon^4)$ at a cost of $\mathrm{polylog}(d)$ time using tools from representation theory.

\subsection{Technical Outline I : Classical Algorithms}\label{sec:ov-classical}

The heart of our classical algorithm is the multilinear optimization problem
\begin{equation}\label{eq:ov-opt}
\OPT(M):=\max_{\rho_1\in\cK_1,\ldots,\rho_k\in\cK_k}
\ip{M}{\rho_1\otimes\cdots\otimes\rho_k},
\qquad \|M\|\le1,
\end{equation}
where each $\cK_j \subseteq B_{\cH_j}$ is a convex set of dimension $d$, and admits a linear optimization oracle. This is precisely the \emph{dual} problem of the general problem stated in the introduction, so a solution for the optimization problem implies an algorithm for the weak membership problem, by a standard application of Frank--Wolfe~\citep{frank1956algorithm}. We henceforth concentrate on solving \eqref{eq:ov-opt}, and it is going to be convenient to view this as a cooperative game: player $j$ chooses a strategy $\rho_j\in\cK_j$, and the players jointly maximize the payoff in~\eqref{eq:ov-opt}.

\paragraph{The bipartite case.}
Consider first $k=2$, viewing $M \in \cH_1 \otimes \cH_2$ as a linear map from $\cH_1$ to $\cH_2$.
Intuitively, think about the case in which the first player is given the optimal solution. The first player could communicate that solution to the second player, who then only has to optimize a linear function. We prove that the first player's optimal solution affects the objective only through a limited communication channel, whose bandwidth is controlled by the tensor's small Frobenius norm.
Thus, instead of guessing the optimal solution, the second player can only guess the ``message'' that an optimal solution would generate.

More concretely, fix a threshold $\eta>0$, and let $U:\Reals^r\to\cH_1$ be an isometry onto the subspace spanned by the input singular directions with singular values greater than $\eta$. Since $\|M\|\le1$, there are at most $r\le\eta^{-2}$ such directions. The discarded part has operator norm at most $\eta$, so
\begin{equation}\label{eq:ov-onecut}
\big|\ip{M}{x\otimes y}-\ip{M}{UU^{\dagger}x\otimes y}\big| \le\eta\,\|x\|\,\|y\|,\quad\forall x\in\cH_1,y\in\cH_2.
\end{equation}
Thus, up to error $\eta$, the payoff is $\ip{M}{Up\otimes\rho_2}$, where the \emph{message} $p=U^{\dagger}\rho_1$ has norm at most $1$. Player $1$'s strategy therefore matters only through a vector in dimension $r\le\eta^{-2}$.

This gives an algorithm by discretizing the message. For a net radius $\delta>0$, enumerate a $\delta$-net of the unit ball of $\Reals^r$. For each net point $g$, use convex optimization to find $\rho_1\in\cK_1$ with $\|U^{\dagger}\rho_1-g\|\le\delta$, if one exists. Then maximize $\ip{M}{Ug\otimes\rho_2}$ over $\rho_2\in\cK_2$ using the linear oracle, and return the best feasible pair. The compressed payoff is $1$-Lipschitz in the message, so the resulting payoff is at least $\OPT(M)-O(\eta+\delta)$. Taking $\delta=\eta$ gives running time $\poly(d)\cdot\exp(O(\eta^{-2}\log(1/\eta)))$.

For general $k$, compressing each site independently incurs error $k\eta$ and requires guessing $k$ messages of dimension at most $\eta^{-2}$. For desired accuracy $\epsilon$, taking $\eta=\epsilon/k$ therefore requires $\exp(\wt O(k^3/\epsilon^2))$ guesses. This bound is polynomial in $d^k$ only when $k$ is at most polylogarithmic in $d$. Two additional ideas overcome these losses.

\paragraph{Prefix compression.}
The first idea is to summarize the entire prefix by a single message shared by all players, up to player $j$. Starting with $V_0=\Reals$, at step $j=1,\ldots,k-1$ we apply spectral truncation across the cut between $V_{j-1}\otimes\cH_j$ and the remaining systems. Let $V_j$ be the space of retained coordinates and $U_j:V_j\to V_{j-1}\otimes\cH_j$ their isometric embedding. The message representing $\rho_1\otimes\cdots\otimes\rho_j$ is defined recursively by
\[
p_0=1,\qquad
p_j=U_j^{\dagger}(p_{j-1}\otimes\rho_j)\in V_j,
\qquad \dim V_j\le\eta^{-2}.
\]
The dimension bound holds because compression never increases the tensor's Frobenius norm. At each stage, we retain a net of realizable messages and record strategies producing them. For a fixed previous message, finding a feasible extension is a convex optimization problem over $\cK_j$. Moreover, the update never increases distances between messages, so nearby prefixes can share one representative with controlled error. We therefore enumerate one message space at a time, avoiding independent guesses for all $k$ players.

\paragraph{The energy argument.}
The second idea is that the compression errors draw on a shared, bounded quantity: the squared norm of the message. The component discarded at step $j$ has squared norm at most $\|p_{j-1}\|^2-\|p_j\|^2$, and contributes at most $\eta$ times its norm. These decreases telescope to at most $\|p_0\|^2=1$. Consequently, Cauchy--Schwarz bounds the total error by
\[
\textstyle 
\eta\sum_{j\in[k-1]}
\sqrt{\|p_{j-1}\|^2-\|p_j\|^2}
\le \eta\sqrt{k}.
\]
For desired additive accuracy $\epsilon$, we can therefore choose $\eta=\Theta(\epsilon/\sqrt{k})$, keeping every message space of dimension $O(k/\epsilon^2)$.
Rounding messages to accuracy $O(\epsilon/k)$ and optimizing the final player's response completes the algorithm when $k$ is polynomially bounded by $d$. For larger $k$, we first discretize local strategies and group sites into blocks, using exhaustive search within each block. Together, these constructions give running time $d^{O_\epsilon(k)}$.

\paragraph{Related Techniques.}
The idea of approximating solutions to such product objectives was already present in related domains, usually when the strategy space is a randomized choice over finitely many options and thus a simplex. This has yielded many quasi-polynomial-time algorithms in various settings. For instance, \citet{lipton2003playing,barman2018approximating,deligkas2022approximating} showed that mixed strategies in large games can be replaced by logarithmic-support empirical strategies. A closely related idea underlies the algorithmic results of \citet*{aaronson2014multiple}, where they devised a technique called \emph{peeling} to compute the value of a free-game (multi-prover interactive games in which each prover gets independent questions) by removing one prover at a time by guessing their strategy. Similar arguments were used for dense CSPs by \citet*{arora1995polynomial,manurangsi2015approximating}. 
\citet*{brandao2015estimating} uses this idea to compute operator norms, best-separable state, and injective tensor norms. Recently, \citet*{bernasconi2026unifying} showed a unifying framework that captures some of these results, through the lens of fixed-degree polynomial optimization. The crucial distinction is that most of these algorithms require quasi-polynomial time (and necessarily so, given the lower bounds of \citet{aaronson2014multiple,rubinstein2017settling}). This stems from the different techniques used to concisely describe a guessed solution. In most of these results, the optimization space is a randomized strategy over finite options, and the key compression statement is that an empirical sample of such a distribution approximates the objective. By Hoeffding's inequality, this guarantees logarithmic support; thus, exhaustive enumeration leads to quasi-polynomial time. Our compression mechanism is fundamentally different. We compress a player's interaction with the remaining players via spectral truncation, showing that only a constant number of directions need to be preserved, which explains the quasi-polynomial vs.~polynomial improvement.

\subsection{Technical Outline II: Quantum Algorithms}\label{sec:ov-quantum}
In the property-testing setting, the algorithm receives copies of an unknown state $\rho\in\cD((\C^d)^{\otimes k})$ and wants to test separability with Frobenius gap $\eta>0$. First we describe a tester whose runtime is polynomial in $k$ and $d$, then we reduce the dependency to $\log d$ via a dimensionality reduction.

In local dimension $q$, we again minimize the squared Frobenius distance by Frank--Wolfe, maintaining an explicit mixture $\sigma$ of pure product states. Each iteration requires solving
\begin{equation}\label{eq:ov-signed}
\max\Big\{\Tr[(\rho-\sigma)\ket{x}\bra{x}]:x=x_1\otimes\cdots\otimes x_k,\quad x_j\in\C^q,\quad \|x_j\|=1 \Big\}.
\end{equation}
The first difficulty is therefore implementing the signed optimization oracle from copies of $\rho$.

\paragraph{A small cover and local search.}
We extend the product-state learner of~\citet{bakshi2025learning} to qudits and signed objectives. The first observation is that $\sigma\succeq0$, so any product vector with signed value at least $\theta$ also satisfies $\langle x|\rho|x\rangle\ge\theta$. It therefore suffices to construct a small set of reference product vectors $a$, such that every such $x$ has squared overlap $\Omega(\theta^4)$ with one such $a$.

The key to searching near a reference is that this overlap bounds the total deviation across all sites. For a reference $a=a_1\otimes\cdots\otimes a_k$ with unit factors, choose local phases and write
\[
x_j=\frac{a_j+z_j}{\sqrt{1+\|z_j\|^2}},
\qquad z_j\perp a_j,
\qquad
\sum_{j=1}^k\log(1+\|z_j\|^2)
=-\log|\langle a,x\rangle|^2
=O(\log(1/\theta)).
\]
Only constantly many sites can have large deviations $z_j$. After guessing these sites, we truncate the product expansion to terms involving constantly many remaining deviations. The resulting subspace has dimension $(kq)^{O_\theta(1)}$, so tomography gives an explicit approximation to the objective there. This objective is a bounded-degree polynomial times a known normalization factor. Spectral compression, as in \cref{sec:ov-classical}, reduces the polynomial to constantly many coordinates; convex optimization handles the normalization within each net cell.

We build the cover one site at a time, using this local search with a penalty for overlap with previously selected vectors. Each selected vector has expectation at least $\theta/2$
against the marginal. If these vectors were orthogonal, their
expectations would sum to at most $1$, so there could be at most
$2/\theta$ of them. The overlap penalty makes distinct selected
vectors nearly orthogonal, which allows the same counting argument with a slightly larger
constant, giving $O(1/\theta)$ selected vectors.
Together, these steps give the required tester in $(kq)^{O_\eta(1)}$ copies and time.

\paragraph{Filtering according to Frobenius mass.} Next, we explain how to reduce the local dimension of the state $\rho$ to $q=\Theta(k^2/\eta^4)$, while maintaining its distance to the set of separable states. Then the above algorithm can be used to solve the general problem with the desired runtime. For each site $j$, define
$A_j:=\Tr_{[k]\setminus\{j\}}(\rho^2)$. Let $f$ be zero below $1/q$, one above $2/q$, and linear in between, and let $F_j=f(A_j)$ mean applying $f$ to the eigenvalues of $A_j$. Since $\Tr A_j\le1$, the range of $F_j$ has dimension less than $q$. For each site $j$, apply the channel
\begin{equation}\label{eq:ov-filter}
\Phi_j(X)=F_jXF_j+
\Tr[(I-F_j^2)X]\ket\bot\bra\bot,
\qquad
\nu_j:=\Tr[(I-F_j^2)A_j],
\end{equation}
where $\ket\bot$ is an additional orthogonal flag state and $\nu_j$ measures the discarded Frobenius mass. The important point is a dichotomy: separable inputs satisfy $\nu_j\le\sqrt{2/q}$ at every site. Estimating these quantities therefore lets us reject large losses while ensuring, with high probability, that separable inputs pass and that passing certifies $\sum_j\nu_j\le\eta^2/16$.

Let $\omega=(\Phi_1\otimes\cdots\otimes\Phi_k)(\rho)$ and let $P$ project onto the subspace with no $\bot$. Then
\[
\textstyle
\|\rho-P\omega P\|
\le2\sqrt{\sum_j\nu_j}\le\eta/2.
\]
The channels preserve separability. Conversely, a separable approximation to $\omega$ can be projected by $P$ and completed to trace one with the maximally mixed state, so an input at distance at least $\eta$ remains at distance at least $\eta/2$. Thus either we reject, or the filtered state retains the promise and has local support dimension at most $q$.

\paragraph{Using the support without learning its basis.} The above procedure can be implemented from copies of $\rho$ using swap tests and sample-based Hamiltonian simulation. The remaining obstacle is that the small local supports are unknown: learning their bases would cost polynomial time in $d$.

We avoid this by applying the Schur transform jointly to the $N$ registers at each site, where $N$ is the number of copies required by the tester. Within each representation sector, the unknown embedding acts only on the register carrying the unitary representation; it leaves the permutation register unchanged. We retain the sector label and permutation register, including its correlations with the other sites, replace the unitary representation register by its counterpart for local dimension $q$, and apply the inverse Schur transform.

This preserves exactly the information that survives averaging over local unitaries. Independently at each site, we therefore apply the same Haar-random local unitary to all $N$ outputs. Ideally, the result is a mixture of $N$-fold copies of low-dimensional states related by local unitaries. Every state in this mixture satisfies the same separability promise, so the tester applies. Controlling the implementation errors gives total cost $k^{O_\eta(1)}$ copies and $k^{O_\eta(1)}\mathrm{polylog}(d)$ time.

\subsection{AI Disclosure}

The first part of the paper (classical algorithms) was developed without the use of AI. In the second part of the paper (quantum algorithms), we used ChatGPT to extend the algorithm of \citet{bakshi2025learning} to arbitrary local dimensions and to navigate representation theory for the dimensionality reduction. The entire paper was written by us, and we take full responsibility for its content.

\section{Preliminaries}

All Hilbert spaces are finite-dimensional. We write
$[k]=\{1,\ldots,k\}$, $B_{\cH}=\{x\in\cH:\|x\|\le1\}$,
and $\cH_{i:j}=\cH_i\otimes\cdots\otimes\cH_j$.
Tensor products carry the inner product
\[
\left\langle\otimes_j x_j,\otimes_j y_j\right\rangle
=\prod_j\langle x_j,y_j\rangle,
\qquad
\left\|\otimes_j x_j\right\|=\prod_j\|x_j\|.
\]
We use $\|\cdot\|$ for the induced Hilbert-space
norm, also called the Frobenius norm for matrices and tensors.
We write $A^{\dagger}$ for the adjoint of $A$. For matrices,
\[
\|A\|=\sqrt{\Tr(A^\dagger A)},\qquad
\|A\|_{\mathrm{op}}=\sup_{\|x\|=1}\|Ax\|,\qquad
\|A\|_1=\Tr\sqrt{A^\dagger A}.
\]
The Hermitian operators $\Herm(\cH)$ form a real Hilbert space
with inner product $\langle A,B\rangle=\Tr(AB)$.
The density matrices on $\cH$ are
\[
\cD(\cH)=\{\rho\in\Herm(\cH):\rho\succeq0,\ \Tr(\rho)=1\},
\qquad \cD_d=\cD(\C^d).
\]
Every $\rho\in\cD(\cH)$ satisfies $\|\rho\|\le1$.
A pure state is $\ket{x}\bra{x}$ with $\|x\|=1$.
The set of separable states is
\[
\Sep(d,k)=\conv\left\{
\bigotimes_{j\in[k]}\ket{x_j}\bra{x_j}:
x_j\in\C^d,\ \|x_j\|=1
\right\}.
\]
For $S\subseteq[k]$, $\Tr_S$ denotes the partial trace over
the systems in $S$; in particular,
$\Tr_{-j}=\Tr_{[k]\setminus\{j\}}$ and
$\rho^{[j]}=\Tr_{\{j+1,\ldots,k\}}(\rho)$.
Access to $N$ copies of $\rho$ means access to $\rho^{\otimes N}$.
For a nonempty set $K$ in a Hilbert space, define $\dist(x,K)=\inf_{y\in K}\|x-y\|$.
Weak membership with gap $\epsilon>0$ asks to distinguish
$x\in K$ from $\dist(x,K)\ge\epsilon$, with no requirement
on inputs between these cases.
A finite set $\cG\subseteq K$ is a $\delta$-net if every
$x\in K$ satisfies $\|x-g\|\le\delta$ for some $g\in\cG$.
For $\delta>0$, the unit ball of $\Reals^r$ admits such a net
of size at most $(1+2/\delta)^r$.

\newpage
\part{Classical Algorithms}\label{part:classical}
\section{Spectral Compression}

The main theorem of this section is the following compression theorem for $k$-partite tensors.

\begin{theorem}\label{th:maincompression}
    Given finite-dimensional real Hilbert spaces $\cH_1,\ldots, \cH_k$ and $T\in \cH_1\otimes\cdots\otimes \cH_k$, with $\|T\|\le 1$ and any $\eta>0$, one can compute in polynomial time real vector spaces $(V_j)_{j\in[k-1]}$, with $\dim V_j\le \eta^{-2}$, linear maps $U_j:V_j\to V_{j-1}\otimes \cH_j$ for all $j\in[k-1]$ and a terminal tensor $\wt T\in V_{k-1}\otimes \cH_k$. Moreover, for any $\rho_1,\ldots, \rho_k$, with $\rho_j\in\cH_j$ and $\|\rho_j\|\le 1$, define recursively:
    \[
    p_0=1,\quad p_j = p_j(\rho_1,\ldots, \rho_{j})=U_j^{\dagger}(p_{j-1}\otimes \rho_j)\in V_j
    \]%
    then we have
    \[
    |\ip{T}{\rho_1\otimes\cdots\otimes \rho_k}-\ip{\wt T}{p_{k-1}\otimes \rho_k}|\le \eta\sqrt k.
    \]
\end{theorem}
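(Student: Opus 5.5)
We construct the compression recursively, truncating singular values one cut at a time and peeling off the processed prefix. Set $V_0=\Reals$ and $T_0=T\in V_0\otimes\cH_{1:k}$. At step $j\in[k-1]$ we hold a residual tensor $T_{j-1}\in V_{j-1}\otimes\cH_{j:k}$ with $\|T_{j-1}\|\le1$. View it as a linear map from $V_{j-1}\otimes\cH_j$ to $\cH_{j+1:k}$ and compute its SVD, $T_{j-1}=\sum_i\sigma_i\,u_i\otimes w_i$. Let $V_j=\Reals^{r_j}$ index the singular directions with $\sigma_i>\eta$. Since $\sum_i\sigma_i^2=\|T_{j-1}\|^2\le1$, we get $r_j\le\eta^{-2}$. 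Define $U_j:V_j\to V_{j-1}\otimes\cH_j$ by sending $e_i\mapsto u_i$; it is an isometry. Set
\[
T_j=\sum_{i\le r_j}\sigma_i\,e_i\otimes w_i\in V_j\otimes\cH_{j+1:k},
\]
so that $(U_j\otimes I)T_j=\Pi_jT_{j-1}$, where $\Pi_j=U_jU_j^\dagger$. Then $\|T_j\|\le\|T_{j-1}\|\le1$, which keeps the recursion well posed. Finally put $\wt T=T_{k-1}$. Each step is a single SVD of a matrix of size at most $(\eta^{-2}d)\times d^{k-j}$, so the whole construction runs in polynomial time in the input size.

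Next we prove the error bound by telescoping. Write $\rho_{>j}=\rho_{j+1}\otimes\cdots\otimes\rho_k$ and $a_j=\ip{T_j}{p_j\otimes\rho_{>j}}$. Then $a_0=\ip{T}{\rho_1\otimes\cdots\otimes\rho_k}$ and $a_{k-1}=\ip{\wt T}{p_{k-1}\otimes\rho_k}$. We also have
\[
a_j=\ip{(U_j\otimes I)T_j}{(p_{j-1}\otimes\rho_j)\otimes\rho_{>j}}=\ip{\Pi_jT_{j-1}}{q_j\otimes\rho_{>j}},
\]
where $q_j=p_{j-1}\otimes\rho_j$. Hence
\[
a_{j-1}-a_j=\ip{T_{j-1}}{(I-\Pi_j)q_j\otimes\rho_{>j}}.
\]
On the complement of $\Pi_j$, the map $T_{j-1}$ has operator norm at most $\eta$, and $\|\rho_{>j}\|\le1$. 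Therefore $|a_{j-1}-a_j|\le\eta\|(I-\Pi_j)q_j\|$.

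The key step is the energy bound. Because $U_j$ is an isometry, $\|p_j\|=\|\Pi_jq_j\|$, so
\[
\|(I-\Pi_j)q_j\|^2=\|q_j\|^2-\|p_j\|^2\le\|p_{j-1}\|^2-\|p_j\|^2,
\]
using $\|q_j\|=\|p_{j-1}\|\|\rho_j\|\le\|p_{j-1}\|$. Summing over $j$ and applying Cauchy--Schwarz gives
\[
|a_0-a_{k-1}|\le\eta\sum_{j=1}^{k-1}\sqrt{\|p_{j-1}\|^2-\|p_j\|^2}\le\eta\sqrt{k-1}\sqrt{\|p_0\|^2}\le\eta\sqrt k.
\]

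The main point needing care is the bookkeeping identity $(U_j\otimes I)T_j=\Pi_jT_{j-1}$, together with the fact that the discarded part has operator norm at most $\eta$ as a map on $V_{j-1}\otimes\cH_j$. Both follow directly from the SVD. They are what let the message $p_j$ simultaneously serve as the exact input to the next stage and carry the telescoping energy.
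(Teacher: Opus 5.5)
Your proposal is correct and follows the paper's proof essentially step by step: the same recursive SVD truncation across the cut between $V_{j-1}\otimes\cH_j$ and $\cH_{j+1:k}$, the same telescoping of the partially compressed values, and the same energy bound $\|(I-\Pi_j)(p_{j-1}\otimes\rho_j)\|^2\le\|p_{j-1}\|^2-\|p_j\|^2$ combined with Cauchy--Schwarz. The differences are cosmetic. You get $a_{j-1}-a_j=\ip{T_{j-1}}{(I-\Pi_j)q_j\otimes\rho_{>j}}$ directly from self-adjointness of $\Pi_j$, whereas the paper expands both $T_{j-1}$ and $p_{j-1}\otimes\rho_j$ into four terms and kills the two cross terms; and your Cauchy--Schwarz step yields the marginally sharper $\eta\sqrt{k-1}$.
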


The theorem says that any product vector $\rho_1\otimes\cdots\otimes\rho_k$ can be iteratively ``summarized'' by maintaining only a small amount of information about each prefix $\rho_1\otimes\cdots\otimes\rho_j$, and if the tensor $T$ is compressed to $\wt T$, then the value is preserved up to error $\eta\sqrt{k}$. The dependence of $O(\eta\sqrt{k})$ on the number of spaces $k$ is crucial to obtain polynomial-time algorithms, and one of the main technical challenges is to reach such a dependence, instead of the more straightforward $O(\eta k)$ theorem. In \Cref{sec:kpartite} we explain in more detail why this quantitative improvement is needed. In \Cref{sec:bipartite} we state a spectral compression lemma for bipartite systems. The idea is to keep only the singular directions corresponding to large singular values. \citet{bakshi2025learning} recently used similar ideas. 
In \Cref{sec:kpartite}, we present the proof of \Cref{th:maincompression}.

\subsection{Bipartite Spectral Compression}\label{sec:bipartite}

Intuitively, the lemma presented in this section states that given a tensor $S\in X\otimes Y$ of two real finite-dimensional Hilbert spaces $X$ and $Y$, we can find a suitable linear map $U^{\dagger}:X\to V$ to a small dimensional real vector space $V$ with $\dim V\le \eta^{-2}\|S\|^2_{X\otimes Y}$, so that for any product test $x\otimes y$ for $x\in X$ and $y\in Y$, we cannot distinguish (up to error $\eta\|x\|\|y\|$) $\ip{S}{x\otimes y}$ to $\ip{\wt S}{U^{\dagger}x\otimes y}:=\ip{S}{UU^{\dagger}x\otimes y }$.

\begin{lemma}[Spectral Bipartite Compression]\label{lem:onecutcompression}
    Let $X$ and $Y$ be finite-dimensional real Hilbert spaces and let $S\in X\otimes Y$. For any $\eta>0$, we can find in polynomial time a Hilbert space $V=\Reals^r$, and a linear map $U:V\to X$ and tensors $\wt S=(U^{\dagger}\otimes I_Y)S\in V\otimes Y$ and $E=((I_X-UU^{\dagger})\otimes I_Y)S\in X\otimes Y$ such that:
    \begin{enumerate}
        \item $S=(U\otimes I_Y)\wt S+E$
        \item $r\le \eta^{-2}\|S\|_{X\otimes Y}^2$
        \item $|\ip{E}{x\otimes y}_{X\otimes Y}|\le \eta\cdot \|x\|_X \|y\|_Y$ for all $x\in X$ and $y\in Y$.
        \item $\|\wt S\|_{V\otimes Y}\le\|S\|_{X\otimes Y}$ and $U^{\dagger}U = I_V$
    \end{enumerate}
\end{lemma}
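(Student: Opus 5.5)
We identify $S\in X\otimes Y$ with the linear operator $\hat S:X\to Y$ determined by $\ip{S}{x\otimes y}=\ip{\hat S x}{y}_Y$, and take its singular value decomposition. Under this identification, the Frobenius norm $\|S\|_{X\otimes Y}$ equals the Hilbert--Schmidt norm of $\hat S$. Fix orthonormal bases of $X$ and $Y$ and compute (in polynomial time, up to the usual numerical precision) the SVD
\[
\hat S=\sum_i \sigma_i\, v_i u_i^{\dagger},
\]
with $u_i\in X$ and $v_i\in Y$ orthonormal families and $\sigma_1\ge\sigma_2\ge\cdots\ge0$. Let $r$ be the number of indices with $\sigma_i>\eta$. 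Set $V=\Reals^r$ and define $U:V\to X$ by $U e_i=u_i$ for $i\le r$. Define $\wt S$ and $E$ exactly as in the statement.

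The four properties then follow in turn.
\begin{enumerate}
\item Because $(U\otimes I_Y)\wt S=((UU^{\dagger})\otimes I_Y)S$, adding $E$ gives back $S$ by linearity.
\item Since $\|S\|^2=\sum_i\sigma_i^2\ge r\eta^2$, we get $r\le\eta^{-2}\|S\|^2$.
\item The operator corresponding to $E$ is $\hat S(I_X-UU^{\dagger})=\sum_{i>r}\sigma_i v_iu_i^{\dagger}$, whose operator norm is $\max_{i>r}\sigma_i\le\eta$. Hence
\[
|\ip{E}{x\otimes y}|=|\ip{\hat E x}{y}|\le\eta\|x\|\|y\|.
\]
\item $U^{\dagger}U=I_V$ holds because the $u_i$ are orthonormal. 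For the norm bound, $U^{\dagger}$ is a contraction and $\|A\otimes I\|_{\mathrm{op}}=\|A\|_{\mathrm{op}}$, so
\[
\|\wt S\|=\|(U^{\dagger}\otimes I_Y)S\|\le\|S\|.
\]
Equivalently, $\|\wt S\|^2=\sum_{i\le r}\sigma_i^2$.
\end{enumerate}

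The only step that needs real care is the identification of the tensor with an operator. We must check that the Frobenius inner product on $X\otimes Y$ matches the Hilbert--Schmidt inner product on operators. We must also check that $(A\otimes I_Y)S$ corresponds to $\hat S A^{\dagger}$. Both are verified on product tensors and then extended by linearity.

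Polynomial time is immediate, since an SVD of a $\dim Y\times\dim X$ matrix is polynomial. Exact singular values are irrational in general. This can be handled by using a slightly perturbed threshold, and the resulting error can be absorbed in the constants.
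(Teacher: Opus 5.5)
Your proposal is correct and follows the paper's proof essentially line by line. Like the paper, you identify $S$ with the operator $A_S:X\to Y$, take its SVD, let $U$ embed the $r$ right singular vectors with $\sigma_t>\eta$, and read off the four properties, including bounding $\|\wt S\|^2=\sum_{t\le r}\sigma_t^2$ directly as in your second argument; your remark on irrational singular values and a perturbed threshold addresses a precision issue the paper leaves implicit.
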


\begin{proof}%
    Given the tensor $S\in X\otimes Y$, we associate the linear map $A_S:X\to Y$, defined by $\langle S,x\otimes y\rangle_{X\otimes Y}=\langle A_Sx,y\rangle_Y$. Now we can compute the singular value decomposition of $A_S = \sum_{t\in[K]} \sigma_t R_t L_t^{\dagger}$, where $\sigma_1\ge\sigma_2\ge\cdots\ge \sigma_K\ge  0$ is a decreasing family of singular values, $K$ is the rank of $A_S$, $\{R_t\}_{t\in[K]}\subseteq Y$, $\{L_t\}_{t\in[K]}\subseteq X$ are family of orthonormal vectors. %

    Define $r$ as the number of singular values that exceed $\eta$. Observe that $r$ is bounded since
    \[
    r\eta^2\le \sum_{t\in [K]}\sigma_t^2=\|A_S\|^2=\|S\|^2_{X\otimes Y},
    \]
    which proves that $r\le \eta^{-2}\|S\|^2_{X\otimes Y}$. 

    Now we are ready to define the map $U:V\to X$, which we can define by the image of the basis of $V$ as $Ue_t=L_t$ for all $t\le r$.

    Then we have $S=(U\otimes I_Y)\wt S+E$, since 
    \begin{align*}
        (U\otimes I_Y)\wt S+E &= (U\otimes I_Y)(U^{\dagger}\otimes I_Y)S+((I_X-UU^{\dagger})\otimes I_Y)S\\
        &=(UU^{\dagger}\otimes I_Y)S+(I_X\otimes I_Y)S-(UU^{\dagger}\otimes I_Y)S\\
        &=(I_X\otimes I_Y)S = S
    \end{align*}

    Then we prove that the effect of $E$ is negligible. First observe that for all $x\in X$ and $t\le r$
    \begin{align*}
        L_t^{\dagger}(I_X-UU^{\dagger}) x &= L_t^{\dagger} \left(x-\sum_{j\in[r]}L_jL_j^{\dagger} x\right)\\
        &=L_t^{\dagger} x-\sum_{j\in [r]} L_t^{\dagger} L_j\cdot L_j^{\dagger} x=0,
    \end{align*}
    since $L_t^{\dagger} L_j =\mathbb{I}(t=j)$. Similarly, if $t>r$, the term $\sum_{j\in[r]}L_t^{\dagger}L_j$ is identical to $0$ and thus $L_t^{\dagger}(I_X-UU^{\dagger})x=L_t^{\dagger} x$.
    From this, and the singular value decomposition of $A_S$ we get
    \[
    A_S(I_X-UU^{\dagger})x = \sum_{t\in[K]}\sigma_t R_t L_t^{\dagger}(I_X-UU^{\dagger})x=\sum_{t>r}\sigma_t R_tL_t^{\dagger} x
    \]
    
    And thus: 
    \[
    \|A_S(I_X-UU^{\dagger})x\|_Y^2\le \sum_{t> r} \sigma_t^2\|R_t\|^2_Y|L_t^{\dagger} x|^2\le \eta^2\sum_{t\in[K]}|L_t^{\dagger} x|^2\le \eta^2\|x\|_X^2.
    \]
    This implies that
    \begin{align*}
    |\ip{E}{x\otimes y}_{X\otimes Y}| &= |\ip{((I_X-UU^{\dagger})\otimes I_Y)S}{x\otimes y}_{X\otimes Y}|\\
    &= |\ip{S}{(I_X-UU^{\dagger})x\otimes y}_{X\otimes Y}|\\
    &=|\ip{A_S(I_X-UU^{\dagger})x}{y}_Y|\\
    &\le \|A_S(I_X-UU^{\dagger})x\|_Y\cdot\|y\|_Y\\
    &\le \eta \|x\|_X\cdot \|y\|_Y.
    \end{align*}

    Finally, observe that
    \begin{align*}
        \|\wt S\|^2_{V\otimes Y} =\sum_{t\in[r]} \sigma_t^2\le \sum_{t\in[K]} \sigma_t^2 = \| S\|^2_{X\otimes Y},
    \end{align*} %
    concluding the proof.
\end{proof}

This proves that we can apply $UU^{\dagger}$ to any $x$ and preserve up to error $\eta(\|x\|_X\cdot \|y\|_Y)$ any product test:
\[
|\ip{S}{x\otimes y}-\ip{S}{UU^{\dagger}x\otimes y}|=|\ip{E}{x\otimes y}|\le \eta (\|x\|_X\cdot \|y\|_Y).
\]

This is useful algorithmically because $U^{\dagger}x$ lives in a low-dimensional space, and the second local system $y$ can now guess the strategy $x$ of the first local system $X$, since it enters the objective only through $p:=U^{\dagger}x\in \Reals^r$ (by allowing a constant error $\eta=O(1)$). Formally, if we assume that $\|S\|_{X\otimes Y}\le 1$ and $\|x\|_X\le 1$, then i) $p$ lies in a real vector space of constant dimension $r\le \eta^{-2} = O(1)$, and ii) the $\ell_2$ norm of $p$ is bounded, as 
\[
\|U^{\dagger}x\|^2_2 = \sum_{t\in[r]} |L_t^{\dagger} x|^2 \le \sum_{t\in[K]} |L_t^{\dagger} x|^2 \le \|x\|_X^2\le 1.
\]
Thus, all the possible strategies of the first player can be compressed in an $\ell_2$ ball in a space of constant dimension, and thus can be (approximately) enumerated in constant time. Once the guess has been made by the second player, we need to do two steps: i) check that the guess is realizable by some strategy $x$, and ii) optimize the second player's strategy against the guessed value.

This strategy already gives a deterministic constant-accuracy algorithm in the bipartite setting under the Frobenius norm. However, for a growing number of local systems, applying it independently at each local system leads to an error and running time that deteriorate too quickly with $k$. We therefore need the stronger recursive compression theorem.

\subsection{$k$-partite Spectral Compression}\label{sec:kpartite}

In this section, we explain why we need two improvements of \Cref{th:maincompression} over the na\"ive application of \Cref{lem:onecutcompression} to a $k$-partite system.

In a $k$-partite system, applying the same algorithmic strategy offered by \Cref{lem:onecutcompression} directly, we could apply it independently across $k-1$ successive Hilbert spaces, each adding an error of $O(\eta)$, and thus a total error of $O(k\eta)$. Then we could guess all possible local strategies across the $k$ systems. Each would have a local dimension of $O(1/\eta^2)$, and thus amount to a total of $\exp(\wt O(k/\eta^2))$ to enumerate over all of them with local error $\eta$. Then, the local error $\eta$ would compound additively over the $k$ systems, leading to a final error of $O(\eta k)$ and total running time $\poly(d^k)\cdot\exp(\wt O(k/\eta^2))$ (where the $\poly(d^k)$ is the time required to run \Cref{lem:onecutcompression}). Thus, when normalizing $\eta=\Theta(\epsilon/k)$, we would obtain a $\poly(d^k)\exp(\wt O_\epsilon(k^3))$, which is not polynomial unless $k$ is at most polylogarithmic in $d$. To obtain a genuine polynomial-time algorithm for growing $k$, we need to improve on this direct strategy. We do it across two main directions.

\paragraph{Prefix Compression.}
First, we avoid guessing independently for each of the $k$ systems by compressing the interaction of the first $j$ systems with the remaining systems into a single low-dimensional message space $V_j$.
So instead of covering each local system, we only maintain the useful information across the prefix local systems. This avoids discretizing independently across the $k$ local systems, giving a $\poly(d^k)\cdot\exp(\wt O(1/\eta^2))$ running time. However, the error still compounds additively across the $k$ local systems, giving a total error of $O(\eta k)$. By tuning $\eta=O(\epsilon/k)$, we get a running time of $\poly(d^k)\cdot\exp(\wt O(k^2/\epsilon^2))$, which again is not polynomial in $d^{O(k)}$ for all $k$. 

\paragraph{Energy Argument.}
Here is our second improvement, which moves toward an overall error of $O(\eta\sqrt k)$ instead of $O(\eta k)$. The proof uses a potential energy argument (see \Cref{lem:energy}) that exploits a telescoping potential to show that the squared norms of the discarded components sum to at most $1$, so the total error is $O(\sqrt{k})$. Then, by tuning $\eta=\Theta(\epsilon/\sqrt k)$ we are able to reach a final running time of $\poly(d^k)\cdot\exp(\wt O(k/\epsilon^2))$ which is indeed polynomial in $d^{O(k)}$ when $\epsilon$ is constant.\footnote{A technical annoyance is that this bound is not actually polynomial for every $k$, but only for $k\le \poly(d)$, due to the tilde in the exponent. However, with a similar argument, we will get a true polynomial bound even for the regime $k\gg d$.}

\subsection{Proof of \Cref{th:maincompression}}

\begin{algorithm}[!t]
\caption{Recursive $k$-partite Prefix Spectral Compression}\label{alg:cap2}
\begin{algorithmic}
\Require $T\in \cH_1\otimes\cdots\otimes\cH_k$ with $\|T\|\le1$ and $\eta>0$.
\State $V_0:=\Reals$, $T_0:=T$
\For{$j\in[k-1]$}
    \State $T_{j-1}\in (V_{j-1}\otimes \cH_j)\otimes \cH_{j+1:k}$
    \State Apply \Cref{lem:onecutcompression} with $X=V_{j-1}\otimes \cH_j$, $Y=\cH_{j+1:k}$ and $S=T_{j-1}$.
    \State Build accordingly $V_j=\Reals^{r_j}$, the linear map $U_j:V_j\to V_{j-1}\otimes \cH_j$ and the compressed tensor $T_{j}=(U_j^{\dagger}\otimes I)T_{j-1}$ %
\EndFor
\State \Return $((U_j,V_j)_{j\in[k-1]},T_{k-1})$
\end{algorithmic}
\end{algorithm}

We now describe \Cref{alg:cap2} in more detail. Let $\cH_1,\ldots, \cH_k$ be finite-dimensional Hilbert spaces and 
\[
T\in \cH_1\otimes\cdots\otimes \cH_k,
\]
with $\|T\|\le 1$.
Formally, the idea behind \Cref{alg:cap2} is to start by using \Cref{lem:onecutcompression} across $\cH_1$ and $\cH_{2:k}$ producing the vector space $V_1$ and the embedding $U_1:V_1\to \cH_1$ and a compressed tensor $T_1=(U_1^{\dagger}\otimes I) T$\footnote{Here we use $\cH_{i:j} = \cH_i\otimes\cdots\otimes \cH_j$ for all $i<j\le k$.}. However, for the second player we do not simply compress across $\cH_2$ and $\otimes_{i\neq 2}\cH_i$, but we compound the compression by applying \Cref{lem:onecutcompression} across $V_1\otimes \cH_2$ and $\cH_{3:k}$, and in general, at step $j\in[k-1]$, we apply it across $V_{j-1}\otimes \cH_j$ and $\otimes_{i>j}\cH_i$. This means that we can maintain a single low-dimensional vector space $V_j$ in which we can summarize the entire interaction between the first $j$ systems and the remaining ones, instead of having independent $k$ low-dimensional local summaries of the strategies.

The final algorithm uses the last tensor $\wt T=T_{k-1}$ returned by \Cref{alg:cap2}, optimizes the linear operator over the last local system, and then reconstructs the entire solution by moving backward.

\begin{lemma}[Recursive $k$-partite Spectral Compression]\label{lem:kcutcompression}
    Let $\cH_1,\ldots, \cH_k$ be finite-dimensional Hilbert spaces and 
    \[
    T\in \cH_1\otimes\cdots\otimes \cH_k,
    \]
    with $\|T\|\le 1$. Then \Cref{alg:cap2} finds real spaces $(V_j)_{j\in[k-1]}$, linear maps $U_j:V_{j}\to V_{j-1}\otimes \cH_j$ for $j\in[k-1]$ and tensors $(T_{j})_{j\in[k-1]}$ (with $T_{j}\in V_{j}\otimes \cH_{j+1:k}$) such that (having defined $E_j:=((I-U_jU_j^{\dagger})\otimes I)T_{j-1}$ for all $j\in[k-1]$):
    \begin{enumerate}
        \item $T_{j-1}=(U_j\otimes I)T_j+E_j$
        \item $\|T_j\|\le 1$
        \item $\dim V_j \le \eta^{-2}$
        \item $|\ip{E_j}{u\otimes v}|\le \eta(\|u\|\cdot\|v\|)$ for all $u\in V_{j-1}\otimes \cH_j$ and $v\in \cH_{j+1:k}$.
    \end{enumerate}
\end{lemma}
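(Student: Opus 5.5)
The plan is induction on $j\in[k-1]$, with each step a direct application of \Cref{lem:onecutcompression}. To set up step $j$, I regard $T_{j-1}\in V_{j-1}\otimes\cH_{j:k}$ as an element of $X\otimes Y$, where $X=V_{j-1}\otimes\cH_j$ and $Y=\cH_{j+1:k}$. This uses the canonical isometric identification $V_{j-1}\otimes\cH_{j:k}\cong(V_{j-1}\otimes\cH_j)\otimes\cH_{j+1:k}$, which preserves inner products and hence norms. For $j=1$ we have $V_0=\Reals$, so $T_0=T$ is identified with an element of $\cH_1\otimes\cH_{2:k}$.

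At step $j$, \Cref{lem:onecutcompression} with $S=T_{j-1}$ gives $V_j=\Reals^{r_j}$, a map $U_j:V_j\to V_{j-1}\otimes\cH_j$ with $U_j^\dagger U_j=I_{V_j}$, the tensor $T_j=(U_j^\dagger\otimes I)T_{j-1}$, and $E_j=((I-U_jU_j^\dagger)\otimes I)T_{j-1}$. Each property then follows from the corresponding item of the lemma.
\begin{itemize}
\item Item 1 of the lemma is exactly property 1: $T_{j-1}=(U_j\otimes I)T_j+E_j$.
\item Item 4 gives $\|T_j\|\le\|T_{j-1}\|$. Since $\|T_0\|=\|T\|\le1$, induction gives property 2, $\|T_j\|\le1$ for every $j$.
\item Item 2 gives $r_j\le\eta^{-2}\|T_{j-1}\|^2\le\eta^{-2}$, using the bound just established for $T_{j-1}$. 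This is property 3.
\item Item 3, read with $x=u\in V_{j-1}\otimes\cH_j$ and $y=v\in\cH_{j+1:k}$, is property 4 verbatim.
\end{itemize}

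The only point needing care is the ordering of the induction. The dimension bound at step $j$ depends on $\|T_{j-1}\|\le1$, so the norm monotonicity has to be established first and carried along. I also need to check that the regrouping of tensor factors is an isometry, so that the Frobenius norm and the product-test inner products are unchanged by the identification. Both are routine, and I expect no real obstacle.

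Polynomial running time follows because each step performs one SVD of a matrix of size at most $(\eta^{-2}\dim\cH_j)\times\dim\cH_{j+1:k}$. This is polynomial in the size of $T$, and there are $k-1$ such steps.
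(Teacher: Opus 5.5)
Your proposal is correct and follows the paper's proof. Both apply \Cref{lem:onecutcompression} at each cut $X=V_{j-1}\otimes\cH_j$, $Y=\cH_{j+1:k}$, and carry $\|T_j\|\le\|T_{j-1}\|\le 1$ inductively so that $\dim V_j\le\eta^{-2}\|T_{j-1}\|^2\le\eta^{-2}$ holds at every step; the only cosmetic difference is that the paper derives the norm monotonicity from $\|U_j^{\dagger}\otimes I\|_{\mathrm{op}}\le 1$ rather than citing item 4 of the bipartite lemma.
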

\begin{proof}
    We apply \Cref{lem:onecutcompression} recursively. More precisely, we can observe that
\begin{align*}
\|T_j\| &=\|(U_j^{\dagger}\otimes I)T_{j-1}\|\\
&\leq \|U_j^{\dagger}\otimes I\|_{\mathrm{op}}\|T_{j-1}\|\\
&\leq\|T_{j-1}\|.
\end{align*}
Thus $\|T_j\|\leq1$ inductively, and consequently
\[
\dim V_j\leq\eta^{-2}\|T_{j-1}\|^2\leq\eta^{-2},
\]
showing that we can indeed apply \Cref{lem:onecutcompression} recursively over $j$.
\end{proof}

The algorithmic usefulness of this lemma is explained as follows. If the local strategies $\rho_1,\ldots, \rho_{j-1}$ have been fixed, we can define a recursive message to pass to the next systems as
\[
p_0=1,\quad\text{and}\quad p_j:=U_j^{\dagger}(p_{j-1}\otimes \rho_j)\in V_j.
\]

Now, just as before, $V_j$ remains a real low-dimensional space and the message $p_j$ has small $\ell_2$-norm:
\begin{align}\label{eq:iterativemex}
\|p_j\|^2=\|U_j^{\dagger}(p_{j-1}\otimes \rho_j)\|^2\le \|p_{j-1}\otimes \rho_j\|^2\le \|p_{j-1}\|^2.
\end{align}

We now prove the energy bound that controls the cumulative compression error, which allows us to collect a cumulative error of $O(\eta\sqrt k)$ instead of $O(\eta k)$.
To measure the information of the discarded message, we can compute the quantity $q_j=(I-U_jU_j^{\dagger})(p_{j-1}\otimes \rho_j)=p_{j-1}\otimes \rho_j-U_jp_j$.

\begin{lemma}[Energy Bound]\label{lem:energy}
    Given $\rho_j\in \cH_j$ with $\|\rho_j\|\le 1$ for all $j\in[k]$, compute $(p_1,\ldots, p_{k-1})$ according to \Cref{eq:iterativemex}. Then, if we let $q_j=(I-U_jU_j^{\dagger})(p_{j-1}\otimes \rho_j)$, we then have
    \[
    \sum_{j\in[k-1]}\|q_j\|^2\le 1.
    \]
\end{lemma}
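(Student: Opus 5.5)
The plan is a telescoping argument on the squared norms of the messages $p_j$. The key identity is the orthogonal decomposition
\[
p_{j-1}\otimes\rho_j \;=\; U_jU_j^{\dagger}(p_{j-1}\otimes\rho_j) + (I-U_jU_j^{\dagger})(p_{j-1}\otimes\rho_j) \;=\; U_jp_j + q_j .
\]
By item 4 of \Cref{lem:onecutcompression} (applied at step $j$ as in \Cref{lem:kcutcompression}), $U_j^{\dagger}U_j=I_{V_j}$. Hence $U_j$ is an isometry and $P_j:=U_jU_j^{\dagger}$ is the orthogonal projection onto $\mathrm{range}(U_j)\subseteq V_{j-1}\otimes\cH_j$. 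The two summands $U_jp_j=P_j(p_{j-1}\otimes\rho_j)$ and $q_j=(I-P_j)(p_{j-1}\otimes\rho_j)$ are therefore orthogonal.

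Pythagoras then gives
\[
\|p_{j-1}\otimes\rho_j\|^2=\|U_jp_j\|^2+\|q_j\|^2=\|p_j\|^2+\|q_j\|^2,
\]
where the isometry property gives $\|U_jp_j\|=\|p_j\|$. Next I use multiplicativity of the tensor norm (recalled in the Preliminaries) together with $\|\rho_j\|\le1$:
\[
\|p_{j-1}\otimes\rho_j\|^2=\|p_{j-1}\|^2\|\rho_j\|^2\le\|p_{j-1}\|^2 .
\]
Combining the two displays yields
\[
\|q_j\|^2\le\|p_{j-1}\|^2-\|p_j\|^2 .
\]

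Summing over $j\in[k-1]$, the right-hand side telescopes:
\[
\sum_{j\in[k-1]}\|q_j\|^2\le\|p_0\|^2-\|p_{k-1}\|^2\le\|p_0\|^2=1,
\]
since $p_0=1\in V_0=\Reals$. This is exactly the claim.

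None of these steps is a real obstacle. The only point needing care is that $U_j$ is a genuine isometry, so that $I-U_jU_j^{\dagger}$ is an orthogonal projection and the cross term vanishes. This follows from the construction in \Cref{lem:onecutcompression}: $Ue_t=L_t$ with the $L_t$ orthonormal, which gives $U^{\dagger}U=I_V$. With that in hand, the cross term
\[
\ip{U_jp_j}{q_j}=\ip{P_jw}{(I-P_j)w}=0,\qquad w:=p_{j-1}\otimes\rho_j,
\]
vanishes because $P_j$ is self-adjoint and idempotent.
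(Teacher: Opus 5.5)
Your proof is correct and follows the paper's argument essentially step for step. Orthogonality of $U_jp_j$ and $q_j$ together with $U_j^{\dagger}U_j=I$ gives $\|p_{j-1}\otimes\rho_j\|^2=\|p_j\|^2+\|q_j\|^2$. Multiplicativity of the tensor norm with $\|\rho_j\|\le 1$ bounds the left side by $\|p_{j-1}\|^2$, and the resulting inequalities telescope to $\|p_0\|^2=1$.
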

\begin{proof}
    First observe that $q_j=(I-U_jU_j^{\dagger})(p_{j-1}\otimes \rho_j)=p_{j-1}\otimes \rho_j-U_jp_j$, and that $q_j$ is orthogonal to $U_jp_j$. Thus
    \[
    \|p_{j-1}\otimes \rho_j\|^2 = \|q_j\|^2+\|U_jp_j\|^2= \|q_j\|^2+\|p_j\|^2,
    \]
    where the last equality follows from the fact that $U_j^{\dagger}U_j=I$.
    Moreover, since $p_{j-1}\otimes \rho_j\in V_{j-1}\otimes \cH_j$, we have that 
    \begin{align*}
        \|p_{j-1}\otimes \rho_j\|^2&=\ip{p_{j-1}\otimes \rho_j}{p_{j-1}\otimes \rho_j}_{V_{j-1}\otimes \cH_j}\\
        &=\ip{p_{j-1}}{p_{j-1}}_{V_{j-1}}\ip{\rho_j}{\rho_j}_{\cH_{j}}\\
        &=\|p_{j-1}\|^2_{V_{j-1}}\|\rho_j\|^2_{\cH_{j}}\\
        &\le \|p_{j-1}\|^2.
    \end{align*}
    Combining the two, we obtain that 
    $\|q_j\|^2\le \|p_{j-1}\|^2-\|p_{j}\|^2$, from which it follows that
    \[
    \sum_{j=1}^{k-1}\|q_j\|^2\le \|p_0\|^2-\|p_{k-1}\|^2\le 1,
    \]
    since $p_0=1$.
\end{proof}

This energy-conservation property in compression, combined with \Cref{lem:kcutcompression}, gives the desired bound. For $j\in\{0\}\cup[k-1]$, the partially compressed value is
\[
    \Phi_j=\ip{T_{j}}{p_j\otimes\rho_{j+1}\otimes\cdots\otimes \rho_k},
\]
where, $(T_{j})_{j\in[k-1]}$ are the tensors produced by \Cref{alg:cap2}, and the $(p_{j})_{j\in[k-1]}$ are those produced by the iterative procedure of \Cref{eq:iterativemex} from any fixed $(\rho_1,\ldots,\rho_k)\in \cH_1\times\cdots\times \cH_k$. 

Rather than bounding the approximation error independently at each cut, we telescope the partially compressed values:
\[
\Phi_0-\Phi_{k-1} = \sum_{j\in[k-1]} (\Phi_{j-1}-\Phi_{j}),
\]
instead of each individual error.

\begin{proof}[Proof of \Cref{th:maincompression}]
    Recall from the notation of \Cref{lem:kcutcompression}, that $T_{j-1} = (U_j\otimes I)T_j+E_j$ and that $p_{j-1}\otimes \rho_j = U_jp_j+q_j$ from \Cref{lem:energy}. We also abbreviate $\rho_{i:j} = \rho_{i}\otimes\cdots\otimes\rho_j$. Thus:
    \begin{align*}
    \Phi_{j-1} &= \ip{(U_j\otimes I)T_j+E_j}{(U_jp_j+q_j)\otimes\rho_{j+1:k}}\\
    &= \ip{(U_j\otimes I)T_j}{U_jp_j\otimes \rho_{j+1:k}}+\ip{(U_j\otimes I)T_j}{q_j\otimes \rho_{j+1:k}}\\
    &\quad+\ip{E_j}{U_jp_j\otimes \rho_{j+1:k}}+\ip{E_j}{q_j\otimes \rho_{j+1:k}}\\
    &=\ip{T_j}{U_j^{\dagger}U_jp_j\otimes \rho_{j+1:k}}+\ip{T_j}{U_j^{\dagger}q_j\otimes \rho_{j+1:k}}\\
    &\quad+\ip{E_j}{U_jp_j\otimes \rho_{j+1:k}}+\ip{E_j}{q_j\otimes \rho_{j+1:k}}\\
    &=\ip{T_j}{p_j\otimes\rho_{j+1:k}}+\ip{E_j}{q_j\otimes \rho_{j+1:k}}\\
    &=\Phi_j+\ip{E_j}{q_j\otimes \rho_{j+1:k}},
    \end{align*}
    where $\ip{T_j}{U_j^{\dagger}q_j\otimes \rho_{j+1:k}}=0$ since $U_j^{\dagger}q_j=0$, and $\ip{E_j}{U_jp_j\otimes \rho_{j+1:k}}=0$ since $E_j=((I-U_jU_j^{\dagger})\otimes I)T_{j-1}$ and thus $(U_j^{\dagger}\otimes I)E_j=0$.
    Finally:
    \begin{align*}
    |\Phi_0-\Phi_{k-1}| &\le \sum_{j\in[k-1]} |\ip{E_j}{q_j\otimes \rho_{j+1:k}}|\\
    &\leq\eta\sum_{j\in[k-1]}\|q_j\|\cdot\|\rho_{j+1:k}\|\\
    &\le \eta\sum_{j\in[k-1]}\|q_j\| \\
    &\le \eta\sqrt{k}\left({\sum_{j\in[k-1]}\|q_j\|^2}\right)^{1/2}\le \eta\sqrt k,
    \end{align*}
    where the penultimate inequality is Cauchy-Schwarz and the last uses the energy bound of \Cref{lem:energy}. 
\end{proof}

The goal of this section is to design an algorithm for computing
\begin{align}\label{eq:OTP}
\OPT:= \max_{\rho_1\in\cK_1,\ldots,\rho_m\in\cK_m}\ip{M}{\rho_1\otimes\cdots\otimes \rho_m},
\end{align}
where $\cK_j\subseteq B_{\cH_j}$ are subsets of the $\ell_2$-ball of a finite-dimensional Hilbert space $\cH_j$ and $M\in\cH_1\otimes\cdots\otimes\cH_m$ is a tensor with Hilbert-Schmidt norm bounded by $1$. In particular, we focus on additive approximations. The algorithmic procedure exploits \Cref{th:maincompression}.
We make the content of this section modular in the sense that we do not provide a concrete algorithm here, but rather a meta-algorithm. 
The reason is that the construction of the reachable-message cover will depend on the structure of the sets $\cK_j$ and for our applications we will need different instantiations of the sets $\cK_j$.

The key concept that turns \Cref{th:maincompression} into an algorithmic procedure is an efficient cover of the ``messages'' produced by prefix local systems. We view the low-dimensional spaces $V_j$ generated by \Cref{th:maincompression} as the spaces that contain all the essential information needed for the prefix local systems $\cH_{1:j}$ to be transmitted to the suffix local systems $\cH_{j+1:m}$. Then the algorithm needs to build a cover of such messages iteratively, while also storing witnesses for the local vectors $\rho_1,\ldots,\rho_j$ that could have produced the each retained message $p_j\in V_j$.

In the following section, we give the precise definition of the cover and a meta-algorithm that, given a way to build such a cover, finds an additive approximation of $\OPT$ of \Cref{eq:OTP}.

\section{Reachable Messages}

In this section we make precise the definition of cover needed to turn \Cref{th:maincompression} into a proper algorithmic procedure.

Let $\cH_1,\ldots,\cH_m$ be finite dimensional real Hilbert spaces and $\cK_j\subseteq B_{\cH_j}$ for every $j\in[m]$. For every sequence of local vectors, \Cref{th:maincompression} associates a low-dimensional message to each prefix $\rho_1,\ldots,\rho_j$. The idea of this section is to maintain a small cover of all messages that can arise from feasible prefixes.
Fix the maps $(U_j)_{j\in[m-1]}$ produced by \Cref{th:maincompression}.
We define all possible reachable messages recursively:
\[
\cR_0^{exact} := \{1\},\quad\text{and}\quad \mathcal R_j^{{exact}}:=\left\{U_j^{\dagger}(p\otimes\rho):p\in\mathcal R_{j-1}^{{exact}}, \rho\in\mathcal K_j\right\}.
\]

We now define the recursive notion of cover used by the algorithm.

\begin{definition}[Cover of reachable messages]\label{def:reachable}
Let $\cS_0,\ldots,\cS_{m-1}$ be sets such that $\cS_j\subseteq V_j$ for all $j\in\{0\}\cup [m-1]$.
Given any $\cS_{j-1}\subseteq V_{j-1}$, we define the set of messages reachable at layer $j$ as
\[
    \cR_j(\cS_{j-1}):=\left\{U_j^{\dagger}(p\otimes\rho): p\in\cS_{j-1},\ \rho\in\cK_j\right\}\subseteq V_j.
\]
A sequence $\cS_0,\ldots,\cS_{m-1}$ is a $\delta$-cover of the reachable messages if $\cS_0=\{1\}$ and, for every $j\in[m-1]$: $\cS_j\subseteq \cR_j(\cS_{j-1})$, and for every $q\in\cR_j(\cS_{j-1})$ there exists $\hat q\in\cS_j$ such that $\|q-\hat q\|\le\delta$.
For every $\hat q\in\cS_j$ we additionally store the witness $p\in\cS_{j-1}$ and
$\rho\in\cK_j$ such that
\[
    \hat q=U_j^{\dagger}(p\otimes\rho),
\]
so that the corresponding strategies can be recovered at the end.
\end{definition}

Then we show that this definition actually produces a cover of $\cR_j^{exact}$.

\begin{lemma}\label{claim:close}
    Let $\cS_0,\ldots,\cS_{m-1}$ be a $\delta$-cover of reachable messages.
    For all $p^\star_j\in\cR_j^{exact}$ there exists $\hat p_j\in \cS_j$ such that $\|\hat p_j-p_j^\star\|\le j\delta$
\end{lemma}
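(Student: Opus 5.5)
We argue by induction on $j$. For $j=0$ the claim is immediate: $\cR_0^{exact}=\cS_0=\{1\}$, so $\hat p_0=1$ gives error $0$.

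For the inductive step, fix $j\ge1$ and $p^\star_j\in\cR_j^{exact}$. By definition of $\cR_j^{exact}$ we can write $p^\star_j=U_j^{\dagger}(p^\star_{j-1}\otimes\rho)$ for some $p^\star_{j-1}\in\cR_{j-1}^{exact}$ and $\rho\in\cK_j$. The inductive hypothesis gives $\hat p_{j-1}\in\cS_{j-1}$ with $\|\hat p_{j-1}-p^\star_{j-1}\|\le (j-1)\delta$.

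Now consider $q:=U_j^{\dagger}(\hat p_{j-1}\otimes\rho)$. Since $\hat p_{j-1}\in\cS_{j-1}$, we have $q\in\cR_j(\cS_{j-1})$. The cover property of \Cref{def:reachable} then yields $\hat p_j\in\cS_j$ with $\|q-\hat p_j\|\le\delta$. By the triangle inequality,
\[
\|\hat p_j-p^\star_j\|\le \|\hat p_j-q\|+\|q-p^\star_j\|\le \delta+\|U_j^{\dagger}((\hat p_{j-1}-p^\star_{j-1})\otimes\rho)\|.
\]

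The only substantive point is that the message update is non-expansive. Since $U_j^{\dagger}U_j=I$ (item 4 of \Cref{lem:onecutcompression}), $U_j$ is an isometry, so $\|U_j^{\dagger}\|_{\mathrm{op}}\le1$. Moreover $\|a\otimes\rho\|=\|a\|\|\rho\|\le\|a\|$ because $\rho\in\cK_j\subseteq B_{\cH_j}$. Therefore the second term is at most $\|\hat p_{j-1}-p^\star_{j-1}\|\le(j-1)\delta$, which gives the total bound $j\delta$. This is the same contraction computation as in \eqref{eq:iterativemex}, applied to a difference of messages rather than a single message.
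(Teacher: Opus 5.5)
Your proof is correct and follows essentially the same route as the paper: induction on $j$, the intermediate message $q=U_j^{\dagger}(\hat p_{j-1}\otimes\rho)\in\cR_j(\cS_{j-1})$, the cover property, and the triangle inequality together with non-expansiveness of $U_j^{\dagger}$ and $\|\rho\|\le 1$. Your explicit justification of $\|U_j^{\dagger}\|_{\mathrm{op}}\le 1$ from $U_j^{\dagger}U_j=I$ is a small detail the paper leaves implicit.
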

\begin{proof}[Proof of \Cref{claim:close}]
    We prove this by induction. It clearly holds for $j=0$ since $\cR^{exact}_0=\{1\}$ and $\cS_0=\{1\}$. Now suppose it holds for $j-1$. By definition, any message $p_j^\star\in\cR_j^{exact}$ can be written as $p_j^\star=U_j^{\dagger}(p_{j-1}^\star\otimes \rho_j)$, and by induction hypothesis there exists a $\hat p_{j-1}\in \cS_{j-1}$ such that 
    \[
        \|\hat p_{j-1}-p_{j-1}^\star\|\le (j-1)\delta.
    \]
    Define $q:=U_j^{\dagger}(\hat p_{j-1}\otimes\rho_j)\in\cR_j(\cS_{j-1})$.
    Since $\cS_j$ is a $\delta$-cover of $\cR_j(\cS_{j-1})$, there exists
    $\hat p_j\in\cS_j$ such that
    \[
        \|\hat p_j-q\|\le\delta.
    \]
    Then
    \begin{align*}
        \|\hat p_j-p_j^\star\|\le \|\hat p_j-q\|+\|U_j^{\dagger}((\hat p_{j-1}-p_{j-1}^\star)\otimes\rho_j)\|\le \delta+\|\hat p_{j-1}-p_{j-1}^\star\|\|\rho_j\|\le j\delta,
    \end{align*}
   concluding the proof of the claim.
\end{proof}

\subsection{Meta Algorithm}

The following meta-algorithm leaves two subroutines unspecified: the construction of the reachable-message cover and the optimization of the terminal tensor.

\begin{algorithm}[!t]
\caption{Meta Algorithm for \Cref{eq:OTP}}\label{alg:meta}
\begin{algorithmic}[1]
\Require \\
\begin{enumerate}
    \item $M\in \cH_1\otimes\cdots\otimes\cH_m$ with $\|M\|\le 1$ and $\epsilon>0$;
    \item an algorithm to build a $\delta$-cover of reachable messages as per \Cref{def:reachable};
    \item an algorithm to compute $\max_{\rho_m\in \cK_m}\ip{\wt T}{x\otimes \rho_m}$ for all $x\in X$, and $\wt T\in X\otimes \cH_m$.
\end{enumerate}
\State Set $\eta=\Theta(\epsilon/\sqrt{m})$ and $\delta=\Theta(\epsilon/m)$.
\State Apply \Cref{alg:cap2} with approximation $\eta$ to obtain
\[
((V_j,U_j))_{j\in[m-1]}
\qquad\text{and}\qquad
\wt M\in V_{m-1}\otimes \cH_m.
\]
\State Get $\cS_0,\ldots, \cS_{m-1}$ from the algorithm building the reachable-message $\delta$-cover.
\State Solve $G(p)=\max_{\rho_m\in \cK_m}\ip{\wt M}{p\otimes \rho_m}$ for all $p\in \cS_{m-1}$ using the optimization algorithm.
\State Record the $p_{m-1}\in \cS_{m-1}$ that attains the largest value and the corresponding maximizer $\rho_m$.
\For{$j=m-1,\ldots,1$}\Comment{Extract local vectors}
    \State From $p_j$, using the witnesses given by the $\delta$-cover, find $\rho_{j}$ and $p_{j-1}$ %
\EndFor

\State \Return $\rho_1,\ldots,\rho_m$
\end{algorithmic}
\end{algorithm}

\begin{lemma}\label{th:meta}
    Let $\cK_j\subseteq B_{\cH_j}$ be compact for all $j\in[m]$.
    Instantiate \Cref{alg:meta} with:
    \begin{enumerate}
    \item an algorithm that constructs a $\delta$-cover
    $\cS_0,\ldots,\cS_{m-1}$ of the reachable messages, together with
    witnesses as in \Cref{def:reachable}, in time $T_{\mathrm{cover}}$;
    \item an algorithm that, given $p\in V_{m-1}$, computes
    \[
        \max_{\rho_m\in\cK_m}
        \ip{\wt M}{p\otimes\rho_m}
    \]
    together with a maximizing $\rho_m$, in time $T_{\mathrm{term}}$.
    \end{enumerate}

    Then \Cref{alg:meta} finds $(\rho_1,\ldots,\rho_m)\in \cK_1\times\cdots\times\cK_m$ such that 
    \[
        \ip{M}{\rho_1\otimes\cdots\otimes\rho_m}\ge \max_{\rho_1\in\cK_1,\ldots,\rho_m\in\cK_m}\ip{M}{\rho_1\otimes\cdots\otimes\rho_m}-O(\epsilon),
    \]
    whenever the tensor $M$ has Hilbert-Schmidt norm bounded by $1$, in time 
    \[
    \poly(|M|,m)+T_{\mathrm{cover}}+|\cS_{m-1}|T_{\mathrm{term}}.
    \]
\end{lemma}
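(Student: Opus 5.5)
The plan is to compare the value of the returned tuple with the optimum through the compressed objective, using \Cref{th:maincompression} twice (once for the optimizer, once for the output) and \Cref{claim:close} to control the discretization.

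\textbf{Feasibility of the output.} Each $\hat q\in\cS_j$ carries a witness $(p,\rho)$ with $p\in\cS_{j-1}$, $\rho\in\cK_j$ and $\hat q=U_j^{\dagger}(p\otimes\rho)$. Hence the backward loop of \Cref{alg:meta} produces $\rho_j\in\cK_j$ for $j<m$. It also produces messages $p_{m-1},\ldots,p_0=1$ that satisfy exactly the recursion $p_j=U_j^{\dagger}(p_{j-1}\otimes\rho_j)$ of \Cref{th:maincompression}. Therefore $p_{m-1}$ equals the exact message $p_{m-1}(\rho_1,\ldots,\rho_{m-1})$, and \Cref{th:maincompression} gives
\[
\ip{M}{\rho_1\otimes\cdots\otimes\rho_m}\ge \ip{\wt M}{p_{m-1}\otimes\rho_m}-\eta\sqrt m .
\]

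\textbf{Lower bound on what the algorithm selects.} Let $(\rho_1^\star,\ldots,\rho_m^\star)$ be an optimizer, which exists by compactness. Let $p^\star_{m-1}\in\cR^{exact}_{m-1}$ be its message. By \Cref{claim:close} there is $\hat p\in\cS_{m-1}$ with $\|\hat p-p^\star_{m-1}\|\le (m-1)\delta$. The map $p\mapsto\ip{\wt M}{p\otimes\rho_m}$ is Lipschitz with constant $\|\wt M\|\,\|\rho_m\|\le 1$: by Cauchy--Schwarz, $|\ip{\wt M}{(p-p')\otimes\rho_m}|\le\|\wt M\|\|p-p'\|\|\rho_m\|$, and $\|\wt M\|=\|T_{m-1}\|\le1$ by \Cref{lem:kcutcompression}. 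Since the algorithm maximizes $G$ over $\cS_{m-1}$, we get
\[
\ip{\wt M}{p_{m-1}\otimes\rho_m}=\max_{p\in\cS_{m-1}}G(p)\ge G(\hat p)\ge \ip{\wt M}{\hat p\otimes\rho_m^\star}\ge \ip{\wt M}{p^\star_{m-1}\otimes\rho^\star_m}-(m-1)\delta .
\]
Applying \Cref{th:maincompression} to the optimizer gives $\ip{\wt M}{p^\star_{m-1}\otimes\rho^\star_m}\ge\OPT-\eta\sqrt m$.

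\textbf{Combining.} Chaining the three inequalities yields a loss of at most $2\eta\sqrt m+(m-1)\delta=O(\epsilon)$ with the parameter choices $\eta=\Theta(\epsilon/\sqrt m)$ and $\delta=\Theta(\epsilon/m)$.

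\textbf{Running time.} \Cref{alg:cap2} performs $m-1$ SVDs of matrices of size polynomial in $|M|$. These have dimensions at most $\eta^{-2}\dim\cH_j\times\prod_{i>j}\dim\cH_i$, which is $\poly(|M|,m)$ because $\eta^{-2}=O(m/\epsilon^2)$. Building the cover costs $T_{\mathrm{cover}}$, and the terminal maximization costs $|\cS_{m-1}|T_{\mathrm{term}}$. The backward witness recovery takes $O(m)$ lookups.

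\textbf{Main obstacle.} The point needing care is that the recovered prefix really generates the selected message exactly, so that \Cref{th:maincompression} can be applied to the output tuple. This holds precisely because \Cref{def:reachable} requires $\cS_j\subseteq\cR_j(\cS_{j-1})$ with exact witnesses, rather than allowing arbitrary net points. A secondary point is the Lipschitz step, which needs $\|\wt M\|\le1$; this is guaranteed by the norm bound in \Cref{lem:kcutcompression}.
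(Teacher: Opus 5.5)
Your proposal is correct and follows essentially the same route as the paper: apply \Cref{th:maincompression} to an optimizer, use \Cref{claim:close} and the $1$-Lipschitz property of the compressed terminal objective (from $\|\wt M\|\le 1$) to pass to a cover element, then apply \Cref{th:maincompression} again to the tuple recovered from the exact witnesses. Your small refinements make explicit steps the paper leaves implicit. These are applying Lipschitzness with $\rho_m^\star$ fixed instead of to $G$, and comparing against the actual argmax $p_{m-1}$ rather than against $\hat p_{m-1}$ itself.
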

\begin{proof}
    First observe that $p_j$ will always be in the unit $\ell_2$ ball of $V_j$. Indeed
    \begin{align*}
        \|p_j\|&=\|U_j^{\dagger}(p_{j-1}\otimes \rho_j)\|\\
        &\le\|p_{j-1}\otimes \rho_j\|\\
        &\leq\|p_{j-1}\|,
    \end{align*}
    and by recursion it proves that $\|p_j\|\le 1$, since $p_0=1$.

    For a final message $p\in V_{m-1}$ define $G(p)=\max_{\rho_m\in \cK_m} \ip{\wt M}{p\otimes \rho_m}$.
    It is easy to see that this function is $1$-Lipschitz in the euclidean $\ell_2$-norm.
    
    \renewcommand\qedsymbol{}
    \begin{claim}\label{claim:Lip}
        The function $p\to G(p)$ is $1$-Lipschitz.
    \end{claim}
    \begin{mdframed}[leftmargin=0pt, rightline = false, topline = false, bottomline = false, linecolor = black!99]
    \begin{proof}[Proof of \Cref{claim:Lip}]
    Simply observe that
        \begin{align*}
            |G(p)-G(p')|&=\left|\max_{\rho_m\in \cK_m} \ip{\wt M}{p\otimes \rho_m}-\max_{\rho_m\in \cK_m} \ip{\wt M}{p'\otimes \rho_m}\right|\\
            &\le \max_{\rho_m\in \cK_m}  |\ip{\wt M}{(p-p')\otimes\rho_m}|\\
            &\le \|\wt M\|\cdot \|p-p'\|\max_{\rho_m\in \cK_m} \|\rho_m\|\\
            &\le \|p-p'\|,
        \end{align*}
        which concludes the proof of the claim.
    \end{proof}
    \end{mdframed}
    \renewcommand\qedsymbol{$\square$}

    Now let $p^*_{m-1}\in \cR^{exact}_{m-1}$ be a reachable message originating from an optimal solution $\rho_1^*,\ldots,\rho_m^*$ to our problem.
    By \Cref{claim:close}, there exists some $\hat p_{m-1}\in\cS_{m-1}$ such that $\|\hat p_{m-1}-p^*_{m-1}\|\le m\delta$ and by \Cref{claim:Lip} we get:
    \begin{align*}
        G(\hat p_{m-1})&\ge G(p^*_{m-1})-m\delta\\
        &\ge \OPT-m\delta-\eta\sqrt{m}.
    \end{align*}
    where in the last inequality we used \Cref{th:maincompression}.
    Then, from $\hat p_{m-1}$ we can recover all the witnesses $\rho_1,\ldots, \rho_{m-1}$ that originated the message $\hat p_{m-1}$ by definition of the cover in \Cref{def:reachable}. Finally using again \Cref{th:maincompression} we get that $\ip{M}{\rho_1\otimes\cdots\otimes\rho_m}\ge G(\hat p_{m-1})-\eta\sqrt{m}$ showing that
    \[
    \ip{M}{\rho_1\otimes\cdots\otimes\rho_m}\ge \OPT - m\delta-2\eta\sqrt{m},
    \]
    and choosing $\delta=\Theta(\epsilon/m)$ and $\eta=\Theta(\epsilon/\sqrt m)$, we get an $O(\epsilon)$ approximation to $\OPT$.
\end{proof}

It remains to instantiate the two subroutines in the relevant setting. We will use different constructions depending on whether the number $k$ of local systems is smaller or larger than the local dimension $d^2$.

\section{Product Optimization and Weak Membership}

In this section we use \Cref{th:meta} to prove the following theorem. In particular, the following proves that a linear oracle for the local systems is enough to build the objects required by \Cref{alg:meta}.

The proof is divided into two regimes: $k\le d$ and $k>d$. For the $k\le d$ case we can instantiate \Cref{alg:meta} directly and find an algorithm with final running time of $\poly(d^k)\cdot\exp\left(O_\epsilon \left(k\log k\right)\right)$, which is polynomial for every $k=\poly(d)$, since we only need $\log k=O(\log d)$ for this running time to be a PTAS.

However, this running time is not polynomial when $d\ll k$ (which is a fundamental regime; for instance, it models many entangled qubits). The previous theorem cannot address this case without removing the extra $\log k$ term in the exponent. Sadly, we could not find a cleverer analysis to remove that dependence directly, so we had to resort to a different analysis when the number of systems $k$ is significantly larger than the local dimension $d$.

In this regime, the idea is that when the local systems have a small dimension $d$ (relative to $k$), we can do brute-force optimization over the product of a small number of local systems.

\begin{lemma}\label{th:mainvero} %
    Let $\cH_1,\ldots, \cH_k$ be finite-dimensional real Hilbert spaces with $d=\max_{j\in[k]} \dim \cH_j$ with $d\ge 2$. For each $j\in[k]$ let $\cK_j\subseteq B_{\cH_j}$ be a convex set and assume that given $v\in \cH_j$ and $\tau>0$ we can find $x\in\cK_j$ such that
    \[
    \ip{v}{x}\ge \max_{y\in \cK_j}\ip{v}{y}-\tau,
    \]
    in time polynomial in $d,\tau^{-1}$. Define $\cA:=\{x_1\otimes\cdots\otimes x_k:x_j\in \cK_j\}$, $\cC=\conv(\cA)$, then for every $\epsilon>0$, given $M\in\cH_1\otimes\cdots\otimes \cH_k$, with $\|M\|\le 1$, we can find $\rho_i\in \cK_i$, such that
    \[
    \ip{M}{\rho_1\otimes\cdots\otimes \rho_k}\ge \max_{\rho_1\in\cK_1,\ldots,\rho_k\in\cK_k}\ip{M}{\rho_1\otimes\cdots\otimes \rho_k}-\epsilon.
    \]
\end{lemma}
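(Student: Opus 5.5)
Following the preceding discussion, I split into the two regimes $k\le d$ and $k>d$. In both, the work is instantiating \Cref{th:meta}: build a $\delta$-cover of the reachable messages, with $\delta=\Theta(\epsilon/m)$ and $\eta=\Theta(\epsilon/\sqrt m)$, and give a terminal optimizer. The terminal step is immediate. For fixed $p\in V_{m-1}$, the map $\rho_m\mapsto\ip{\wt M}{p\otimes\rho_m}=\ip{v_p}{\rho_m}$ is linear, with $v_p=(p^{\dagger}\otimes I)\wt M$. So one call to the approximate linear oracle with $\tau=\Theta(\epsilon)$ suffices, and the $\tau$ slack is absorbed into the $O(\epsilon)$.

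\textbf{Case $k\le d$ ($m=k$).} I would build $\cS_j$ layer by layer. Fix a $\delta/2$-net $\cN_j$ of the unit ball of $V_j\cong\Reals^{r_j}$, with $r_j\le\eta^{-2}=O(k/\epsilon^2)$. For every pair $(p,g)\in\cS_{j-1}\times\cN_j$, decide approximately whether some $\rho\in\cK_j$ satisfies $\|U_j^{\dagger}(p\otimes\rho)-g\|\le\delta/2$. Since $\rho\mapsto U_j^{\dagger}(p\otimes\rho)$ is linear, this is minimizing a convex quadratic over $\cK_j$. I would solve it with Frank--Wolfe using the approximate linear oracle, which reaches additive accuracy $\delta^2/8$ in $\poly(1/\delta)$ oracle calls. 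If the returned $\rho$ achieves distance at most $\delta$, I add $U_j^{\dagger}(p\otimes\rho)$ to $\cS_j$ together with the witness $(p,\rho)$.

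Every $q\in\cR_j(\cS_{j-1})$ has some $g$ within $\delta/2$, so its pair is found, and the added point lies within $\delta+\delta/2$ of $q$. Rescaling $\delta$ gives a valid cover with $\cS_j\subseteq\cR_j(\cS_{j-1})$ exactly. The sizes satisfy $|\cS_j|\le|\cN_j|\le(1+4/\delta)^{\eta^{-2}}=\exp(O(\epsilon^{-2}k\log(k/\epsilon)))$. The total work is therefore $k\,|\cS_{j-1}||\cN_j|\poly(d,k/\epsilon)$, which equals $d^{O_\epsilon(k)}$ because $\log k\le\log d$.

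\textbf{Case $k>d$.} Here I need to remove the $\log k$. The idea is to first discretize each $\cK_j$ and then group sites into blocks so that $m$ becomes small.

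For discretization, I would replace $\cK_j$ by a finite set $\wt\cK_j\subseteq\cK_j$ such that every $x\in\cK_j$ is within $\gamma$ of some member. One way is to take the $\gamma/2$-net of $B_{\cH_j}$, of size $(1+4/\gamma)^d$, and keep one oracle-found point of $\cK_j$ near each net point, again via Frank--Wolfe projection. A naive hybrid argument gives error $k\gamma$, which forces $\gamma=\epsilon/k$ and sizes $(k/\epsilon)^{O(d)}$. This is still polynomial in $d^k$ when $d<k$, since $d\log k\le k\log k$. That is not good enough alone, so the discretization needs care.

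The plan is to group consecutive sites into $m=\lceil k/b\rceil$ blocks with $b$ chosen so that $\dim(\text{block})=d^b\ge k$, i.e.\ $b=\lceil\log k/\log d\rceil$. Each block space $\cH'_i=\bigotimes_{j\in\text{block}}\cH_j$ has feasible set $\cK'_i=\{\otimes x_j: x_j\in\wt\cK_j\}$. This set is not convex, but \Cref{th:meta} only needs compactness, a cover construction, and a terminal optimizer. Both the cover extension step and the terminal step are done by brute-force enumeration over $|\cK'_i|\le(k/\epsilon)^{O(db)}$ elements, which is $d^{O(b)}\cdot\poly$ when $d\log(k/\epsilon)$ is comparable to $b\log d$.

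Now apply \Cref{th:meta} with $m$ blocks. The cover dimension is $O(m/\epsilon^2)$ and the net sizes are $\exp(O(\epsilon^{-2}m\log(m/\epsilon)))$. Since $m\log m\le(k/b)\log k\approx k\log d$, this gives $d^{O_\epsilon(k)}$.

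\textbf{Main obstacle.} The bottleneck is the discretization error in the $k>d$ case. To avoid the $k\gamma$ loss I would reuse the energy argument in the style of \Cref{th:maincompression}. Replacing $\rho_j$ by a nearby $\tilde\rho_j$ inside the compressed recursion perturbs $p_j$ by at most $\|p_{j-1}\|\gamma$ and the value $\Phi$ correspondingly. Alternatively, I would take $\gamma=\epsilon/k$ and check that the brute-force block size $(k/\epsilon)^{db}$ stays $d^{O(k)}$ overall. It does, since the total enumeration is about $m\cdot(k/\epsilon)^{db}\le k\,(k/\epsilon)^{O(d\log k/\log d)}$, and $d\log k\lesssim k\log d$ when $k>d$. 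I expect verifying this exponent bookkeeping across both regimes to be the delicate step.
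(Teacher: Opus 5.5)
Your proposal is correct and follows the paper's proof. For small $k$ you instantiate the meta-algorithm directly: one convex projection problem per pair $(p,g)$ builds the cover, and one linear-oracle call handles the terminal step. For large $k$ you replace each $\cK_j$ by a finite $\gamma$-net inside $\cK_j$ with $\gamma=\Theta(\epsilon/k)$. The hybrid bound $\|\bigotimes_i\rho_i-\bigotimes_i\rho_i'\|\le\sum_i\|\rho_i-\rho_i'\|$ then absorbs the $k\gamma$ loss, so the energy-argument alternative you float is unnecessary. Finally you group sites into blocks and handle them by brute force. Your Frank--Wolfe projection of net points of $B_{\cH_j}$ onto $\cK_j$ is a concrete way to build that net from the linear oracle alone; the paper only asserts such a net exists.

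The real difference is the parameterization of the second regime. The paper splits at $k=d^2$ and uses $m=\Theta(\sqrt{dk})$ blocks. This balances the enumeration exponent $\frac{dk}{m}\log\frac{dk}{\epsilon}$ against the message-net exponent $\frac{m}{\epsilon^2}\log\frac{m}{\epsilon}$ at $\wt O_\epsilon(\sqrt{dk})\le\wt O_\epsilon(k^{3/4})$, so the overhead beyond $\poly(d^k)$ is subexponential in $k$. You split at $k=d$ and use blocks of size $b=\lceil\log k/\log d\rceil$. Then the message-net exponent is $O_\epsilon(m\log m)=O_\epsilon(k\log d)$ and the enumeration exponent is $O(db\log(k/\epsilon))$. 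This yields only $d^{O_\epsilon(k)}$, which is all that is needed, but with a cleaner threshold.

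Two bookkeeping points need fixing.

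\textbf{The exponent inequality.} The inequality you cite for the enumeration term is not the one required. You need $db\log(k/\epsilon)=O_\epsilon(k\log d)$, which amounts to $d\log^2k=O(k\log^2d)$ for $2\le d<k$, not $d\log k=O(k\log d)$. It does hold, for three reasons:
\begin{enumerate}
\item $x/\log^2x\ge e^2/4$ on $(1,\infty)$;
\item $x/\log^2x\le 2/\log^2 2$ on $[2,e^2]$;
\item $x/\log^2x$ is increasing on $[e^2,\infty)$.
\end{enumerate}
Together these give $d/\log^2 d=O(k/\log^2 k)$ whenever $2\le d<k$. Your remark that $|\cK'_i|$ is $d^{O(b)}\cdot\poly$ is false in general and should be dropped.

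\textbf{The cover construction.} You must add at most one point per net point $g$, stopping the scan over $p\in\cS_{j-1}$ at the first success, as the paper does. Adding a point for every successful pair $(p,g)$ lets $|\cS_j|$ multiply across layers. That breaks the bound $|\cS_j|\le|\cN_j|$ your running-time estimate relies on. With one point per $g$, your $\delta+\delta/2$ covering estimate still goes through: some, possibly different, $p'\in\cS_{j-1}$ succeeds for the $g$ near $q$, and its point is within $\delta$ of $g$.
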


\begin{proof}
    We will divide the proof into two regimes.

    \paragraph{Regime I: $k\le d^2$.} Apply \Cref{alg:meta} directly. We thus have $((V_j,U_j))_{j\in[k-1]}$ and $\wt M\in V_{k-1}\otimes \cH_k$ with $r_j:=\dim V_j\le \eta^{-2}=\Theta(k/\epsilon^2)$. %

    For each $j\in[k-1]$ we consider $\cG_j$ which is a $\delta/4$-net in $\ell_2$-norm of the Euclidean ball of $V_j$, with $\delta=\Theta(\epsilon/k)$. It is standard to obtain an explicit $\cG_j$ such that there exists an absolute constant $C>0$, such that
    \[
    |\cG_j| = \left(\frac C \delta\right)^{r_j}\le \left(\frac C \delta\right)^{\eta^{-2}}=\exp\left(O\left(\frac{k}{\epsilon^2}\log\left(\frac{k}{\epsilon}\right)\right)\right).
    \]

    We now show how to construct the $\delta$-cover $\cS_0,\ldots,\cS_{k-1}$, required by \Cref{alg:meta}, inductively. Set $\cS_0=\{1\}$, and suppose $\cS_{j-1}$ has already been constructed. For each pair $(p,g)\in\cS_{j-1}\times\cG_j$, consider the convex optimization problem
    \[
    \min_{\rho\in\cK_j}\|U_j^{\dagger}(p\otimes\rho)-g\|,
    \]
    which can be solved with precision $\tau=\delta/8$ in time $\poly(d^k/\tau)$. If the value of the program is less than $\delta/2$ for the minimizer $\rho$, then we add $U^{\dagger}_j(p\otimes \rho)$ to $\cS_j$ and store the witnesses $p\in \cS_{j-1}$ and $\rho\in\cK_j$. As soon as we find a solution to such a program, for a particular $g\in\cG_j$ and $p\in\cS_{j-1}$, we stop looping over $p\in \cS_{j-1}$ and move to the next point in $\cG_j$. In this way, there is at most one point of $\cS_j$ for each $g\in \cG_j$ and $|\cS_j|\le|\cG_j|$. %

    We claim that the resulting $\cS_j$ is a $\delta$-cover according to \Cref{def:reachable}.
    We have to show that for all $p\in\cS_{j-1}$ and $\rho\in \cK_j$, and every reachable message $q(\rho,p):=U_j^{\dagger}(p\otimes \rho)\in\cR_j(\cS_{j-1})$ there exists $\hat q=U_j^{\dagger}(\hat p\otimes\hat \rho)\in \cS_j$ for some $\hat \rho\in\cK_j$ and $\hat p\in\cS_{j-1}$ with $\|q(\rho,p)-\hat q\|\le \delta$. 
    Since $\|q(\rho,p)\|\le 1$, we have that there exists $g\in\cG_j$ with $\|q(\rho,p)-g\|\le\delta/4$, which exists by definition of $\cG_j$. 
    Consider the convex problem with a tuple $(g,p)$. The problem is feasible since the original $\rho$ certifies it:
    \[
    \min_{\sigma\in \cK_j}\|U_j^{\dagger}(p\otimes\sigma)-g\|\le \|U_j^{\dagger}(p\otimes\rho)-g\|=\|q-g\|\le\delta/4.
    \]
    Therefore, since we are solving it with accuracy $\delta/8$ we are going to find some $\rho'$ such that
    \[
    \|U_j^{\dagger}(p\otimes \rho')-g\|\le \delta/4+\delta/8\le \delta/2.
    \]
    We are not guaranteed that such $p$ will be the one passing the test, but this proves that there exists some (possibly different) $p'\in \cS_{j-1}$ that does. In that case we will add some $\hat q=U_j^{\dagger}(p'\otimes \hat\rho)$ to $\cS_j$ such that $\|\hat q-g\|\le \delta/2$. Then 
    \[
    \|\hat q-q(\rho,p)\|\le \|\hat q-g\|+\|g-q(\rho,p)\|\le \frac\delta2+\frac\delta4\le\delta,
    \]
    which proves that we are building a $\delta$-cover.

    Moreover, it is clear that the time $T_{\mathrm{cover}}$ required to build the sets $\cS_0,\ldots,\cS_{k-1}$ is $\poly(d^k/\delta)\cdot O(k \left( C /\delta\right)^{2\eta^{-2}})$, since we have to repeat it $k$-times and each time it takes time $\poly(d^k /\delta)$ to solve the optimization problem, which we have to do at most $|\cG_j|\cdot|\cS_{j-1}|\le(\left( C /\delta\right)^{\eta^{-2}})^2$ times.
    
    We still need to implement the terminal optimization required by the meta-algorithm. For fixed $p\in V_{k-1}$, the problem %
    \[
        \max_{\rho_k\in\cK_k}
        \ip{\wt M}{p\otimes\rho_k}
    \]
    can be solved with a single call to the linear optimization oracle up to error $\tau=O(\epsilon)$. Thus the terminal problem is solvable in time $T_{\mathrm{term}}=\poly(d^k)$ since $\epsilon=O(1)$.
    
    Thus, according to \Cref{th:meta}, the total time required is
    \begin{align*}
        \poly(d^k)+T_\mathrm{cover} + |\cS_{k-1}| T_{\mathrm{term}} & \le \poly(d^k \log(\delta^{-1}))\cdot O(k \left( C /\delta\right)^{2\eta^{-2}})+\left( C /\delta\right)^{\eta^{-2}} \poly(d^k)\\
        &\le \poly(d^k) \cdot \exp\left(O\left(\frac k{\epsilon^2}\log \frac k\epsilon\right)\right),
    \end{align*}
    where we plugged $\delta=\Theta(\epsilon/k)$ and $\eta=\Theta(\epsilon/\sqrt k)$.
    Now, if $k\le d^2$ and $\epsilon$ is fixed, then $\log(k/\epsilon)=O_{\epsilon}(\log d)$, and the running time is $d^{O_\epsilon(k)}$.

    \paragraph{Regime II: $k> d^2$.}

    Let $\gamma=\Theta(\epsilon/k)$. We construct an explicit $\gamma$-net $\cN_{\gamma,j}$ of $\cK_j$ with 
    \[
        |\cN_{\gamma,i}|
        \le
        \left(\frac{ C}{\gamma}\right)^{d},
    \]
    for some constant $C>0$. We claim that optimizing over $\cN_{\gamma,i}$ does not change the objective by a lot. Indeed, 
    For an optimal rank-one tuple $\rho_1,\ldots,\rho_k$ and for any $i$, choose $\rho'_i\in\cN_\gamma$ with $\|\rho_i-\rho'_i\|\le\gamma$. Then, it is easy to see that
    \begin{align*}
        \left\|\bigotimes_{i\in[k]}\rho_i-\bigotimes_{i\in[k]}\rho_i'\right\|\le \sum_{i\in[k]}\|\rho_i-\rho_i'\|\le k\gamma,
    \end{align*}
    by using inductively that $\|\rho_1\otimes\rho_2-\rho_1'\otimes\rho_2'\|\le \|\rho_1-\rho_1'\|+\|\rho_2-\rho_2'\|$ (see also \citet[Exercise~9.1.3]{wilde2013quantum}).
    
    Consequently, since $\|M\|\le 1$:
    \begin{equation}\label{eq:pure-discretization}
        \max_{\rho_i\in\cN_{\gamma,i}}
         \ip{M}{\rho_1\otimes\cdots\otimes\rho_k}
        \ge
        \max_{\rho_i\in\cK_i}
        \ip{M}{\rho_1\otimes\cdots\otimes\rho_k}-k\gamma.
    \end{equation} 

    Then partition $[k]$ into $m=\Theta(\sqrt{dk})$ blocks $I_1,\ldots,I_m$ and invoke \Cref{alg:meta} with 
    \[
        \cH'_j:=\bigotimes_{i\in I_j}\cH_i,
        \quad\text{and}\quad
        \cK'_j:=\left\{\bigotimes_{i\in I_j}\rho_i: \rho_i\in\cN_{\gamma,i}, \forall i\in I_j\right\},
    \]
    where the tensor $M'\in \cH'_1\otimes\cdots\otimes\cH'_m$, is the regrouping of $M\in \cH_1\otimes\cdots\otimes\cH_k$. Moreover, observe that $|\cK'_j| = |N_{\gamma,i}|^{O(k/m)}=(C\sqrt d/ \gamma)^{O(dk/m)}$.

    Under this instantiation, the conditions for \Cref{th:meta} hold and \Cref{alg:meta} computes $((V_j,U_j))_{j\in[m-1]}$ and $\wt M\in V_{m-1}\otimes \cH_m'$ with $r_j:=\dim V_j\le \eta^{-2}$.
    To build the $\delta$-cover $\cS_0,\ldots, \cS_{m-1}$, we apply the following simple brute-force procedure. First, define $\cG_j$, as the $\delta/4$-net of the radius $1$, $\ell_2$ ball of $V_j$, as in the case of $k\le d^2$. Then, assuming that we already built $\cS_{j-1}$, we enumerate every pair $(p,\rho)\in \cS_{j-1}\times \cK'_j$, computing $q=U_j^{\dagger}(p\otimes \rho)$, and then finding the closest $g\in \cG_j$ to $q$. If $\|q-g\|\le \delta/4$, and such $g$ was not already marked, we add $q$ to $\cS_j$ (moreover we store the witnesses $(p,\rho)\in \cS_{j-1}\times \cK_j$), and mark $g$. In this way we are going to add at most $|\cG_j|$ points to $\cS_j$, guaranteeing that $|\cS_j|\le (C/\delta)^{\eta^{-2}}$.

    This produces a $\delta$-cover according to \Cref{def:reachable}. Indeed, consider any $q(\rho,p)=U_j^{\dagger}(p\otimes \rho)\in \cR_j(\cS_{j-1})$ for any $\rho\in \cK_j'$ and $p\in \cS_{j-1}$. Take any $g\in \cG_j$ such that $\|q(\rho,p)-g\|\le \delta/4$, then there is a tuple $(p',\rho')\in \cS_{j-1}\times \cK'_j$, such that $\|g-q'\|\le \delta/4$, where $q'= U_j^{\dagger}(p'\otimes \rho')$ and $q'$ was added to $\cS_j$ (it might also be the case that $(\rho',p') = (\rho,p)$ but not necessarily). Then $\|q'-q\|\le \|q'-g\|+\|g-q\|\le \delta/2\le \delta$.
    Moreover, the time $T_\mathrm{cover}$ required to build the cover, is thus bounded by $\poly(d^k)\cdot (m \sqrt d /\gamma)^{O(dk/m)}(C/\delta)^{\eta^{-2}}$, as we need to iterate for $m$ times over all elements of $\cK_j$ and $\cS_{j-1}$.
    
    For optimizing the terminal tensor $\wt M'$, we simply enumerate over the finite set $\cK'_{m}$, which takes time $T_{\mathrm{term}}\le |K_m|\poly(|\wt M'|) = (C\sqrt d/\gamma)^{O(dk/m)}\poly(d^k)$.
    
    The total required time is thus, according to \Cref{th:meta}: %
    \begin{align*}
        \poly(d^k)+T_\mathrm{cover} + |\cS_{m-1}| T_{\mathrm{term}} & \le \poly(d^k)\cdot m (\sqrt d/\gamma)^{O(dk/m)} (1/\delta)^{O(\eta^{-2})}\\
        &\le \poly(d^k) \cdot\exp\left(O\left(\frac{dk}{m}\log \frac{dk}\epsilon+\frac m{\epsilon^2}\log \frac m\epsilon\right)\right),
    \end{align*}
    which is minimized for $m=\Theta(\sqrt{dk})$ and gives an upper bound of $\poly(d^k) \cdot\exp\left(O_\epsilon(\sqrt {dk}\log dk)\right)$.
    Now, for $d^2\le k$, we have that $\sqrt{dk}\log dk \le k^{3/4}\log dk=O(k\log d)$, and thus the running time becomes $d^{O_\epsilon(k)}$ as desired.
\end{proof}

\begin{theorem}\label{th:maintheoremclassical}
    Under the same assumptions of \Cref{th:mainvero}
for every $\epsilon>0$, given $M\in\cH_1\otimes\cdots\otimes \cH_k$, we can distinguish between 
    \[
    M\in\cC,\quad\text{and}\quad \dist(M,\cC)\ge \epsilon,
    \]
    in time $d^{O_\epsilon(k)}$.
\end{theorem}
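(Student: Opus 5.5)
We reduce weak membership to the product optimization problem of \Cref{th:mainvero} by running Frank--Wolfe on the convex objective $f(X)=\tfrac12\|M-X\|^2$ over $X\in\cC$. The only operation Frank--Wolfe needs is an approximate linear optimization oracle over $\cC$. Since $\cC=\conv(\cA)$, maximizing a linear functional over $\cC$ is the same as maximizing it over the product set $\cA$. That is exactly the problem solved by \Cref{th:mainvero}, after normalizing the direction tensor to have norm at most $1$.

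\textbf{Preprocessing.} Every element of $\cC$ has norm at most $1$, because each $x_1\otimes\cdots\otimes x_k$ has norm $\prod_j\|x_j\|\le 1$. So if $\|M\|>1+\epsilon/2$ we can reject immediately: either $M\notin\cC$, or we are anyway in the no-promise region.

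\textbf{Iteration.} Start from an arbitrary $X_0=x_1\otimes\cdots\otimes x_k\in\cA$. At step $t$, set $G_t=M-X_t$, which satisfies $\|G_t\|\le 2+\epsilon$, and call \Cref{th:mainvero} on $G_t/\|G_t\|$ with accuracy $\epsilon'=\Theta(\epsilon^2)$. This returns $S_t\in\cA$ with $\ip{G_t}{S_t}\ge\max_{Y\in\cC}\ip{G_t}{Y}-O(\epsilon^2)$.

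We then compute the \emph{duality gap}
\[
g_t=\ip{G_t}{S_t-X_t},
\]
which lower-bounds $\max_{Y\in\cC}\ip{M-X_t}{Y-X_t}-O(\epsilon^2)$. Next we take the exact line-search step $X_{t+1}=(1-\lambda)X_t+\lambda S_t$ along the segment. All iterates are explicit convex combinations of at most $t+1$ elements of $\cA$, so they can be stored as dense tensors in time $\poly(d^k)$.

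\textbf{Decision rule.}
\begin{itemize}
\item \emph{Accept} if at some step $\|M-X_t\|\le\epsilon/2$.
\item \emph{Reject} if $g_t<\epsilon^2/8$ while $\|M-X_t\|>\epsilon/2$.
\end{itemize}

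\textbf{Correctness.} If $M\in\cC$, take $Y=M$. Then
\[
\ip{M-X_t}{M-X_t}=\|M-X_t\|^2\le g_t+O(\epsilon^2).
\]
So whenever $\|M-X_t\|>\epsilon/2$, the gap satisfies $g_t\ge\epsilon^2/4-O(\epsilon^2)>\epsilon^2/8$ for a small enough constant in $\epsilon'$. Hence we never reject a member.

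If $\dist(M,\cC)\ge\epsilon$, then every $X_t\in\cC$ has $\|M-X_t\|\ge\epsilon$, so we never accept. It remains to show that we must reject within a bounded number of steps. The standard Frank--Wolfe descent inequality, for a $1$-smooth objective on a set of diameter at most $2$, gives
\[
f(X_{t+1})\le f(X_t)-\min\{g_t/2,\ g_t^2/8\}.
\]
Since $f\le O(1)$ initially and $f\ge 0$, the number of steps with $g_t\ge\epsilon^2/8$ is at most $O(\epsilon^{-4})$. Therefore the algorithm terminates, and in the no case it must reject.

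\textbf{Running time.} Each iteration costs one call to \Cref{th:mainvero} at constant accuracy, which takes $d^{O_\epsilon(k)}$ time, plus $\poly(d^k)$ arithmetic. There are $O(\epsilon^{-4})$ iterations, so the total time is $d^{O_\epsilon(k)}$.

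\textbf{Main obstacle.} The main subtlety is that the oracle is only approximate and returns rank-one points, not exact maximizers. We therefore check that the Frank--Wolfe progress bound and the gap certificate tolerate an additive $O(\epsilon^2)$ oracle error. Both are stated above with that slack built in, which is why $\epsilon'$ is chosen as $\Theta(\epsilon^2)$ rather than $\Theta(\epsilon)$. Since $\epsilon$ is constant, this still gives a constant-accuracy call.
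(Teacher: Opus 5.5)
Your proposal is correct and follows the paper's route. You reject when $\|M\|$ exceeds $1$ by a margin, then run Frank--Wolfe on $\tfrac12\|M-X\|^2$ over $\cC$, using \Cref{th:mainvero} on the normalized gradient as an $O(\epsilon^2)$-accurate linear oracle.

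The difference is the decision rule. You stop adaptively, using the Frank--Wolfe duality gap as a certificate; your count gives $O(\epsilon^{-4})$ iterations, which is immaterial for constant $\epsilon$. The paper instead runs a fixed $\Theta(\epsilon^{-2})$ iterations and thresholds the final objective value. A small advantage of your version is that the oracle error enters only the yes-case lower bound on $g_t$, not the progress inequality.
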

\begin{proof}
    If $\|M\|\ge 1+\epsilon$, then $\dist(M,\cC)\ge \epsilon$, since $\cC\subseteq B_{\cH_1\otimes\cdots\otimes \cH_k}$ and thus
    \[
    \dist(M,\cC)=\inf_{M'\in \cC}\|M-M'\|\ge \|M\|-1\ge \epsilon.
    \]
    So, we can solve the case $\|M\|\ge 1+\epsilon$ efficiently by computing $\|M\|$ directly (which takes polynomial time, since $M$ is given explicitly).

    Thus we can assume that $\|M\|\le 1+\epsilon$ and consider $F:\cC\to\Reals$ defined as
    \[
    F(Y)=\frac12\|Y-M\|^2,
    \]
    whose optimum over $\cC$ is $F^*:=\frac12\dist(M,\cC)^2$. We use the Frank-Wolfe algorithm to turn the linear maximization oracle of \Cref{th:mainvero} into an optimization algorithm for $F$. Frank-Wolfe asks to solve $\arg\min_{Z\in \cC}\ip{Y_t-M}{Z}$ which, thanks to linearity, is equivalent to
    \[
    \arg\min_{\rho_1\in\cK_1,\ldots,\rho_k\in\cK_k}\ip{Y_t-M}{\otimes_{j\in[k]}\rho_j},
    \]
    which can be solved by \Cref{th:mainvero} at approximation $\delta$ in time $d^{O_\delta(k)}$. Since the diameter of $\cC$ is bounded and $F$ is $O(1)$-Lipschitz, then Frank-Wolfe in $T$ iterations yields $Y_T\in\cC$ with
    \[
    F(Y_T)\le F^*+O(T^{-1})+O(\delta).
    \] 
    If $M\in \cC$, then, $F^* =0$ and after $T=\Theta(1/\epsilon^2)$ iterations and $\delta=\Theta(\epsilon^2)$ we will find that $F(Y_T)\le O(\epsilon^2)$. On the other hand if $\dist(M,\cC)\ge \epsilon$, then $F^*=\Omega(\epsilon^2)$, and similarly we will find $F(Y_T)=\Omega(\epsilon^2)$. The two cases are separated, and the statement follows, since the running time of the algorithm is $T d^{O_\delta(k)}=d^{O_\epsilon(k)}$.
\end{proof}

\subsection{Applications to Nuclear Norm and Separability}

In this section, we prove two corollaries of \Cref{th:maintheoremclassical} concerning weak membership for the nuclear norm of high-order tensors and quantum separability in the Frobenius norm.

\subsubsection{Nuclear Norm} 

The (real) nuclear norm for a tensor $M\in (\Reals^d)^{\otimes k}$ is defined as
\[
    \|M\|_{*} = \inf\Bigg\{\sum_{i=1}^r|\lambda_i|:M = \sum_{i\in[r]}\lambda_i (x_{1,i}\otimes\cdots \otimes x_{k,i}),\|x_{j,i}\|=1, x_{j,i}\in \Reals^d,r\in \Naturals, \lambda_i\in \Reals\Bigg\},
\]
and call $K^*$ the set of all tensors with nuclear norm at most $1$.
The problem is thus to decide if $M\in K^*$ or $\dist(M,K^*)\ge \eta$.

\begin{theorem}
    Given a tensor $M\in(\Reals^d)^{\otimes k}$ and any $\eta>0$, there exists an algorithm to decide if
    \[
    M\in K^*,\quad\text{or}\quad \dist(M,K^*)\ge \eta,
    \]
    in time $d^{O_\eta(k)}$.
\end{theorem}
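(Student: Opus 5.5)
The plan is to reduce directly to \Cref{th:maintheoremclassical} by instantiating $\cH_j=\Reals^d$ and $\cK_j=B_{\Reals^d}$ for every $j\in[k]$. The first step is to identify $K^*$ with
\[
\cC=\conv\{x_1\otimes\cdots\otimes x_k : x_j\in B_{\Reals^d}\}.
\]

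For the inclusion $\cC\subseteq K^*$: every product $x_1\otimes\cdots\otimes x_k$ with $\|x_j\|\le1$ equals $\lambda\,(\hat x_1\otimes\cdots\otimes\hat x_k)$ with unit $\hat x_j$ and $|\lambda|=\prod_j\|x_j\|\le1$. Hence it has nuclear norm at most $1$. Since $\|\cdot\|_*$ is a norm (subadditive and homogeneous directly from the infimum definition), $K^*$ is convex, so it contains the whole convex hull.

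For the converse $K^*\subseteq\cC$: take $M$ with $\|M\|_*\le1$. For any $\epsilon'>0$ there is a decomposition $M=\sum_i\lambda_i\,x_{1,i}\otimes\cdots\otimes x_{k,i}$ with $\sum_i|\lambda_i|\le1+\epsilon'$. Absorb the sign of $\lambda_i$ into $x_{1,i}$ (the unit ball is symmetric). Then $M/(1+\epsilon')$ is a convex combination of unit product tensors and the zero tensor, which lies in $\cC$ because $0\in B_{\Reals^d}$. Closedness of $\cC$ lets us take $\epsilon'\to0$; $\cC$ is compact as the convex hull of a compact set in finite dimension. Thus $\cC=K^*$, and in particular $\dist(M,K^*)=\dist(M,\cC)$.

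The second step is to check the hypotheses of \Cref{th:mainvero}. Each $\cK_j=B_{\Reals^d}$ is convex and contained in the unit ball, $d\ge2$ can be assumed (pad if $d=1$), and the linear optimization oracle is explicit: $\arg\max_{\|y\|\le1}\ip{v}{y}=v/\|v\|$, or any point if $v=0$. This achieves value $\|v\|$ exactly in $\poly(d)$ time, so any $\tau$ is met.

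Applying \Cref{th:maintheoremclassical} with $\epsilon=\eta$ then decides $M\in K^*$ versus $\dist(M,K^*)\ge\eta$ in time $d^{O_\eta(k)}$.

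The only step requiring care is the equality $\cC=K^*$, in particular that the infimum in the definition of $\|\cdot\|_*$ need not be attained. Compactness of $\cC$ handles this: the infimum is in fact attained, and if not, closedness of $\cC$ suffices anyway. Everything else is a direct invocation of earlier results.
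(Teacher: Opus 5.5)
Your proposal is correct and matches the paper's proof: both instantiate \Cref{th:maintheoremclassical} with $\cK_j=B_{\Reals^d}$, identify $K^*$ with $\conv\{x_1\otimes\cdots\otimes x_k: x_j\in B_{\Reals^d}\}$, and use $\max_{y\in B_{\Reals^d}}\ip{v}{y}=\|v\|$ as the linear optimization oracle. The only difference is that you prove the convex-hull identity yourself, using a near-optimal decomposition, padding with the zero tensor, and closedness of the hull, whereas the paper cites Friedland--Lim, Proposition~3.1.
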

\begin{proof}
    We can use \Cref{th:maintheoremclassical} by using the fact that $K^*$ is the convex hull of unit-norm rank-one tensors \citep[Proposition~3.1]{friedland2018nuclear}, i.e.,
    \[
    K^*=\conv\{x_1\otimes\cdots\otimes x_k: x_j\in B_{\Reals^d}\}.
    \]

    Thus we can use \Cref{th:maintheoremclassical} with $\cK_j=B_{\Reals^d}$. The linear maximization oracle is simply the Frobenius norm computation since for all $v\in \Reals^d$ 
    \[
    \max_{y\in B_{\Reals^d}}\ip{v}{y} = \|v\|,
    \]
    and this provides all the ingredients needed by \Cref{th:maintheoremclassical}.
\end{proof}
We also note that nuclear norm can also be defined for complex tensors \citep{friedland2018nuclear}. Our algorithm can be extended to complex Hilbert spaces to solve weak membership in the unit ball of complex tensors. We leave the precise description of this extension as future work.

\subsubsection{Quantum Separability} 
Another important application is the quantum separability problem. 

For $d,k\in\Naturals$, let
\[
\Sep(d,k):=\conv\left\{\rho_1\otimes\cdots\otimes\rho_k:\rho_1,\ldots,\rho_k\in\cD_d\right\},
\]
denote the set of fully separable $k$-partite states. 
\begin{theorem}\label{th:classicalquantumsep}
    Given an explicit $M\in \Herm((\C^d)^{\otimes k})$, we can distinguish between 
    \[
    M\in \Sep(d,k),\quad\text{or}\quad \dist(M,\Sep(d,k))\ge \eta,
    \]
    in time $d^{O_\eta(k)}$.
\end{theorem}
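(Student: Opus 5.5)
This is a direct instantiation of \Cref{th:maintheoremclassical}, with $\cH_j=\Herm(\C^d)$ and $\cK_j=\cD_d$. The only real work is checking that the hypotheses of \Cref{th:mainvero} hold in this setting.

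\textbf{Real Hilbert space and dimension.} $\Herm(\C^d)$ with $\ip{A}{B}=\Tr(AB)$ is a real Hilbert space of dimension $d^2$. The algorithm therefore runs with local dimension $d'=d^2$, and the bound $(d^2)^{O_\eta(k)}=d^{O_\eta(k)}$ is unaffected.

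The tensor product $\Herm(\C^d)^{\otimes k}$ of real spaces is identified with $\Herm((\C^d)^{\otimes k})$. This map is a real-linear isometric isomorphism: it sends product elements $A_1\otimes\cdots\otimes A_k$ to the Kronecker product and satisfies $\Tr((\otimes A_j)(\otimes B_j))=\prod_j\Tr(A_jB_j)$. Dimensions match, since both sides have real dimension $d^{2k}$.

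Under this identification, the input $M$ is an element of $\cH_1\otimes\cdots\otimes\cH_k$. The Frobenius distance is the induced Hilbert norm, and $\Sep(d,k)=\conv\{\rho_1\otimes\cdots\otimes\rho_k:\rho_j\in\cK_j\}=\cC$.

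\textbf{Hypotheses on $\cK_j$.} The set $\cD_d$ is convex. It lies in the unit ball, since $\|\rho\|^2=\Tr\rho^2\le(\Tr\rho)^2=1$ for $\rho\succeq0$.

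The approximate linear optimization oracle works as follows. Given $v\in\Herm(\C^d)$, we have $\max_{\rho\in\cD_d}\Tr(v\rho)=\lambda_{\max}(v)$, attained at $\ket{u}\bra{u}$ for a top eigenvector $u$. An eigendecomposition computed to precision $\tau$, for instance by standard numerical methods in time $\poly(d,\log(1/\tau))$, returns a pure state within $\tau$ of optimal.

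\textbf{Applying the theorem.} \Cref{th:maintheoremclassical} now decides $M\in\Sep(d,k)$ versus $\dist(M,\Sep(d,k))\ge\eta$ in time $(d^2)^{O_\eta(k)}=d^{O_\eta(k)}$. Its initial norm check already handles inputs with $\|M\|\ge1+\eta$.

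\textbf{Main obstacle.} The main point of care is the identification of the real tensor product with $\Herm$ of the complex tensor product. Concretely, we must represent $M$ in a product real orthonormal basis of Hermitian matrices, for example the generalized Gell-Mann basis normalized in Frobenius norm. Then $\Herm((\C^d)^{\otimes k})$ is spanned by tensor products of these basis elements, and the change of basis is an isometry computable in $\poly(d^k)$ time. Once this is in place, nothing else is needed.
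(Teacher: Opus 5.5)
Your proposal is correct and matches the paper's proof: instantiate \Cref{th:maintheoremclassical} with $\cH_j=\Herm(\C^d)$ (real dimension $d^2$) and $\cK_j=\cD_d$, use a top-eigenvector computation as the linear optimization oracle, and obtain time $(d^2)^{O_\eta(k)}=d^{O_\eta(k)}$. The paper leaves implicit the identification of the real tensor product $\Herm(\C^d)^{\otimes k}$ with $\Herm((\C^d)^{\otimes k})$ through a product Hermitian orthonormal basis; your argument for it is correct.
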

\begin{proof}
    We use \Cref{th:maintheoremclassical} with $\cH_j=\Herm(\C^d)$, as a real Hilbert space of dimension $d^2$ with Hilbert--Schmidt, and $\cK_j=\cD_d$. Then $\conv(\rho_1\otimes\cdots\otimes\rho_k:\rho_j\in\cD_d)=\Sep(d,k)$. Moreover, linear optimization over $\cK_j$ is immediate since it is a greatest eigenvalue problem. %

    Thus, \Cref{th:maintheoremclassical} guarantees a solution in time $(d^2)^{O_\eta(k)}=d^{O_\eta(k)}$.
\end{proof}

\newpage
\part{Quantum Algorithms}

\section{Product-State Optimization}\label{sec:frombakshi}

In this section, we will prove the following theorem that, given copies of a $k$-partite state $\rho$ of local dimension $q$, and a classical description of a low-rank separable state $\sigma$, we can find a pure state with approximately optimal fidelity to $(\rho-\sigma)$ in time and copies $\poly(kq)$.

\begin{theorem}[Signed Product-State Optimization]\label{th:mainfromBak}
    Given copies of a $k$-partite state $\rho\in \cD((\C^q)^{\otimes k})$ of local dimension $q$ and a separable state $\sigma$ given by
    \[
    \sigma=\sum_{\ell\in[T]} w_\ell(\sigma_{\ell,1}\otimes\cdots\otimes\sigma_{\ell,k}),
    \]
    for some weights $w$ and $\sigma_{\ell,j}\in\cD(\C^q)$, and $\eta>0$ and $\delta>0$, we can find pure states $x^*_1,\ldots, x_k^*$ such that with probability at least $1-\delta$:
    \[
    \Tr\left[(\rho-\sigma)\bigotimes_{i\in[k]}\ket{x^*_i}\bra{x^*_i}\right]\ge\max_{x_1,\ldots,x_k\in \C^q:\|x_i\|=1}\Tr\left[(\rho-\sigma)\bigotimes_{i\in[k]}\ket{x_i}\bra{x_i} \right]-O(\eta^2), 
    \]
    with $\poly_\eta(kq T\log 1/\delta)$ time and copies of $\rho$.
\end{theorem}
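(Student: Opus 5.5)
Let $\OPT:=\max_{x}\Tr[(\rho-\sigma)\ket{x}\bra{x}]$ over unit product vectors $x=x_1\otimes\cdots\otimes x_k$. The plan follows the outline of \Cref{sec:ov-quantum}: a small cover of reference product vectors, plus a local search around each reference. If $\OPT<\eta^2$ we may return any product state, so fix $\theta=\Theta(\eta^2)$ and assume $\OPT\ge\theta$. Since $\sigma\succeq0$, every $x$ with signed value at least $\theta$ satisfies $\bra{x}\rho\ket{x}\ge\theta$. So it suffices to cover the set $\cS_\theta:=\{x:\bra{x}\rho\ket{x}\ge\theta\}$. Concretely, we construct a list $a^{(1)},\dots,a^{(L)}$ of unit product vectors with $L=O(1/\theta)$ such that every $x\in\cS_\theta$ has $|\langle a^{(i)},x\rangle|^2\ge\Omega(\theta^4)$ for some $i$. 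Then we run a local search near each $a^{(i)}$ and return the best candidate. Its signed value is estimated to $O(\eta^2)$ accuracy by measuring the product observable on copies of $\rho$ and computing $\Tr[\sigma\ket{x}\bra{x}]$ exactly from the $T$ product terms. A median-of-means bound then gives failure probability $\delta$ at a $\log(1/\delta)$ cost.

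\textbf{Local search near a reference.} Fix $a=a_1\otimes\cdots\otimes a_k$ and $x$ with $|\langle a,x\rangle|^2\ge c\theta^4$. After fixing phases, write $x_j=(a_j+z_j)/\sqrt{1+\|z_j\|^2}$ with $z_j\perp a_j$. Then $\sum_j\log(1+\|z_j\|^2)\le O(\log 1/\theta)$. Hence at most $s=O_\theta(1)$ sites have $\|z_j\|\ge\beta$ (for a constant $\beta$), and the remaining deviations have $\sum\|z_j\|^2=O(\log1/\theta)$. We guess the heavy set $H$; there are $k^{O_\theta(1)}$ choices. The unnormalized vector $\bigotimes_j(a_j+z_j)$ is then expanded and truncated to the terms with at most $D=O_\theta(1)$ light deviations. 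The truncation error is controlled by $\sum_{|S|>D}\prod_{j\in S}\|z_j\|^2\le(\sum\|z_j\|^2)^{D+1}/(D+1)!$, which is small relative to the normalization.

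All surviving terms lie in the subspace $W$ spanned by vectors that equal $a_j$ outside $H\cup S$, where $|S|\le D$, and are arbitrary on $H\cup S$. This subspace has dimension $(kq)^{O_\theta(1)}$. We project copies of $\rho$ onto $W$ and perform tomography to additive error $\poly(\theta)$ in Frobenius norm, using $\poly_\theta(kq)$ copies. The matrix $P_W\sigma P_W$ is computed exactly from the product description. The objective then becomes $\prod_j(1+\|z_j\|^2)^{-1}\cdot\ip{N}{\Phi(z)}$. Here $N$ is an explicit Hermitian operator on $W$ of bounded Frobenius norm, and $\Phi(z)$ is the rank-one operator built from the degree-$\le D$ multilinear polynomial in the local vectors $z_j$. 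This is a multilinear objective in the $z_j$ (and their conjugates) with bounded Frobenius norm, so we apply \Cref{th:maincompression} with $\eta'=\poly(\theta)$ over the sites. This reduces the dependence to a constant-dimensional message space. We then handle it as in \Cref{alg:meta}: enumerate a net of messages, and within each net cell fix the normalization as an extra guessed scalar (a net on $\prod_j(1+\|z_j\|^2)$), imposed as a convex constraint $\sum_j\|z_j\|^2$-type bound. Each cell is then solved by convex feasibility, with top-eigenvector steps at the last site. The total cost is $\poly_\theta(kq)$.

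\textbf{Building the cover.} We pick references greedily. At step $i$ we run the local search, seeded at arbitrary references on one site at a time (extending one site at a time as in the outline), on the penalized objective $\bra{x}\rho_{\mathrm{marg}}\ket{x}-\lambda\sum_{i'<i}|\langle a^{(i')},x\rangle|^2$. If the optimum is at least $\theta/2$, we add the maximizer; otherwise we stop. The stopping condition means every $x\in\cS_\theta$ has large overlap with some earlier reference. Chaining overlaps through the seeding then gives the $\Omega(\theta^4)$ bound. For termination, the penalty keeps the selected vectors pairwise nearly orthogonal, with Gram matrix close to the identity. A Gram-matrix argument then gives $\sum_i\bra{a^{(i)}}\rho\ket{a^{(i)}}\le 1+o(1)$, so $L=O(1/\theta)$.

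\textbf{Main obstacle.} I expect the hardest step to be the normalization inside the local search. The objective is a bounded-degree polynomial divided by $\prod_j(1+\|z_j\|^2)$, which is neither multilinear nor convex. Showing that discretizing the normalization, together with the compressed messages, loses only $O(\eta^2)$ will need careful Lipschitz bounds in the regime where the light deviations are small but numerous. The self-consistent seeding for the cover is the second delicate point.
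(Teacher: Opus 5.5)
Your architecture (a site-by-site overlap cover with an overlap penalty, then local search around each reference via heavy-set guessing, truncation to a $(kq)^{O_\theta(1)}$-dimensional subspace, and subspace tomography) is the paper's. However, two steps fail as written.

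\textbf{The small-$\OPT$ case.} The reduction ``if $\OPT<\eta^2$ we may return any product state'' is false. The guarantee is additive, so in this regime you must output a product state of signed value at least $-O(\eta^2)$, and an arbitrary product state can have value close to $-1$. For example, take $\rho=I/q^k$ and $\sigma=\ket{0}\bra{0}^{\otimes k}$. Then $\OPT=q^{-k}$, while $x=\ket{0}^{\otimes k}$ has value $q^{-k}-1$. Neither the cover nor the local search helps here, since the local-search guarantee of \Cref{def:LS} requires the restricted optimum to exceed its threshold. The paper adds a separate candidate for this case. For a uniformly random product-basis vector $b$, one has $\mathbb{E}_b\Tr[(\rho-\sigma)\ket{b}\bra{b}]=q^{-k}\Tr(\rho-\sigma)=0$ with values in $[-1,1]$. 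Hence with probability $\Omega(\theta)$ the value is at least $-\theta/2$, and this also shows $\OPT\ge 0$. Drawing $O(\theta^{-1}\log(1/\delta))$ such $b$, estimating their values, and comparing with the local-search outputs covers both regimes.

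\textbf{Optimizing the truncated objective.} Applying \Cref{th:maincompression} over the $k$ sites cannot give $\poly_\theta(kq)$. Its error is $\eta'\sqrt{k}$, so with $\eta'=\poly(\theta)$ the loss grows with $k$. Getting loss $O(\theta)$ instead forces $\eta'=\Theta(\theta/\sqrt{k})$, hence message spaces of dimension $\Theta(k/\theta^2)$ and nets of size $\exp(\tilde O(k/\theta^2))$, which is exponential in $k$. That factor is affordable in \Cref{part:classical} only because the input there already has size $d^k$.

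The missing idea is to compress along the \emph{degree} rather than along the sites. After guessing $S$ and truncating, the objective is a polynomial of constant degree $D=O(h+|S|)$ in the single concatenated parameter $y=(z,u)\in E$, with $\dim E=O(kq)$. Flatten each homogeneous coefficient tensor along one factor, $A_{j,s}:E\to E^{\otimes(j-1)}$, and apply \Cref{lem:onecutcompression} once to $\bigoplus_{j,s}A_{j,s}$. This yields $W\subseteq E$ whose dimension depends only on $D$ and the accuracy, with $|p(y)-p(P_Wy)|\le\tau D^2$ uniformly (\Cref{lem:classicopt}). There is no dependence on $k$ or $q$.

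The same device removes what you call the main obstacle. Since $p$ is nearly constant on each cell $\{\|P_Wy-g\|\le\gamma\}$ of a net of $B_W$, it suffices to minimize $\phi(z)=\sum_i\log(1+\|z_i\|^2)$ over the cell. This is a convex program because light sites satisfy $\|z_i\|\le 1$, where $\log(1+\|z\|^2)$ is convex. Heavy sites carry unnormalized $u_i$ with $\|u_i\|\le1$ and are normalized afterwards by homogeneity. Your guessed-normalization variant is also fine once the light threshold is $\beta\le 1$, since then $\phi\le\log c$ is a convex constraint. But it only works after this $k$-independent compression replaces \Cref{th:maincompression}.

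Finally, make sure your cover involves no multiplicative chaining of overlaps. At level $j$, apply the induction hypothesis to $x_{[j-1]}$, which still has $\Tr(\rho^{[j-1]}\ket{x_{[j-1]}}\bra{x_{[j-1]}})\ge\theta$. This gives a fresh reference $a\in C_{j-1}$ with $|\ip{a}{x_{[j-1]}}|^2\ge\alpha$. Because the local search constrains only the first $j-1$ sites, the $\Omega(\theta^4)$ bound does not decay with $k$.
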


This extends \citet{bakshi2025learning} to local dimension $q$ (which focuses on $k$ qubits) and to the difference between states (one given by copies and one explicitly separable). This latter extension is needed because we will combine it with the Frank-Wolfe algorithm, which assumes a linear maximization oracle of this structure, to solve the weak membership problem in $\poly(qk)$ time and copies.

In \Cref{sec:dimension-reduction}, we improve this result to $\poly(k\log q)$ time and copies via a black-box reduction. 

To obtain \Cref{th:mainfromBak}, we can use the main structure of \citet{bakshi2025learning}. To summarize briefly their approach, for a state $\rho$, their approach is made of two components: 1) a subroutine that, given a set of product states, outputs another product state which is ``far'' from all product states in the set and has good fidelity with $\rho$ (this is \cite[Lemma~5.8]{bakshi2025learning}), and 2) a way of turning this algorithm into an algorithm that produces a small cover $C$ of pure states such that for all product states with good fidelity with $\rho$, there is a ``similar'' one in the cover (\cite[Algorithm~5.5]{bakshi2025learning}). We modify the details surrounding their construction; however, we maintain the overall structure of the algorithm. The main reason why we can use the same construction even for a difference of states $\rho-\sigma$ is that a cover for $\rho$ is automatically a cover for $\rho-\sigma$.

First we define what a $\theta$-overlap product cover is. We define distances directly in terms of overlap, rather than using tangent distance as in \citet{bakshi2025learning}. We require that every product $x$ that has good overlap with $\rho$ has a representative $c$ that is sufficiently aligned with $x$.

\begin{definition}[$\theta$-overlap Product Cover]\label{def:cover}
    Let $\rho\in\cD((\C^d)^{\otimes k})$ and let $\theta>0$. A finite set $C$ is a $\theta$-overlap product cover for $\rho$ if for every $x=\otimes_{j\in[k]} x_j$ with $x_j\in \C^d$ and $\|x_j\|=1$ such that $\Tr(\rho\ket{x}\bra{x})\ge \theta$, there exists $c\in C$ such that $|\ip{x}{c}|\ge \theta^2/64$.  %
\end{definition}

\subsection{$\LS$} Not differently than \citet{bakshi2025learning}, we use a local search algorithm to build the cover. We use this algorithm not only to build the cover, but also to optimize the final objective once the cover is built.

\begin{definition}[$\LS_{\alpha,\epsilon}(\omega,K,a)$]\label{def:LS} %
    Given $m\in\Naturals$, $\alpha,\epsilon>0$, failure probability $\delta$, and $\omega\in \cD((\C^q)^{\otimes m})$ and $K=\sum_{\ell\in[T] }c_\ell( \otimes_{j\in[m]}K_{\ell,j})$ (with $K_{\ell,j}\in \C^{q\times q}$, Hermitian with $\|K_{\ell,j}\|\le 1$ and $c_\ell\in\Reals$) and a reference product vector $a=\otimes_{j\in[m-1]}a_j$ (with $a_j\in \C^q$ and $\|a_j\|=1$) $\LS$ returns $\hat x=\otimes_{j\in[m]}\hat x_j$ and $v\in \Reals$ such that with probability $1-\delta$ the following hold:
    \begin{enumerate}
    \item If $\max_{x\in R(a,\alpha)}\Tr((\omega-K)\ket{x}\bra{x} )\ge \epsilon$ then
    \[
    \Tr((\omega-K)\ket{\hat x}\bra{\hat x})\ge \max_{x\in R(a,\alpha)}\Tr((\omega-K)\ket{x}\bra{x} )-\epsilon,
    \]
    where %
    \(
    R(a,\alpha)=\{x=\otimes_{j\in[m]}x_j:|\ip{a}{x_{[m-1]}}|^2\ge \alpha, \|x_i\|=1\};
    \)
    \item and 
    \(
    |v-\Tr((\omega-K)\ket{\hat x}\bra{\hat x})|\le \epsilon.
    \)
    \end{enumerate}
\end{definition}

Then we show that there exists a quantum algorithm which implements it efficiently.

\begin{restatable}{lemma}{lemmaLS}\label{th:LS}
    For every $\delta>0$, there exists a quantum algorithm (\Cref{alg:local-search}) implementing $\LS_{\alpha,\epsilon}(\omega,K,a)$ according to \Cref{def:LS}, with failure probability at most $\delta$ using $(mq)^{O(\log(G/(\alpha\epsilon)))}\log(1/\delta)$ copies of $\omega$ and total running time $\poly_{\alpha,\epsilon,G}(m,q,T,\log(1/\delta))$, where $G:=1+\sum_{\ell=1}^T|c_\ell|$.
\end{restatable}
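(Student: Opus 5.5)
The plan follows the outline in \Cref{sec:ov-quantum}. We reduce the search over $R(a,\alpha)$ to a search over a small explicit subspace, learn $\omega$ restricted to that subspace by tomography, and then optimize the resulting explicit bounded-degree objective classically with the spectral compression of \Cref{th:maincompression}.

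\textbf{Step 1: parametrize the feasible region.} Fix local phases so that for each $j\le m-1$ we can write $x_j=(a_j+z_j)/\sqrt{1+\|z_j\|^2}$ with $z_j\perp a_j$. Then
\[
-\log|\langle a,x_{[m-1]}\rangle|^2=\sum_{j<m}\log(1+\|z_j\|^2)\le\log(1/\alpha).
\]
Consequently, for a threshold $\beta$, at most $s=O(\log(1/\alpha)/\beta)$ sites have $\|z_j\|^2>\beta$. The algorithm guesses this set $B$ of large-deviation sites, together with the free last site $m$; there are $m^{O(s)}$ choices.

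\textbf{Step 2: truncate the product expansion.} On the sites outside $B$, expand $\bigotimes_{j\notin B}(a_j+z_j)$ and keep only the terms with at most $t$ deviation factors. The discarded tail has squared norm at most
\[
\prod_j(1+\|z_j\|^2)\cdot\Pr[\mathrm{Bin}\text{-like count}>t],
\]
which is a Poisson-type tail with mean at most $\log(1/\alpha)$ once $\beta$ is small. Choosing $t=O(\log(G/(\alpha\epsilon)))$ makes the tail error, after multiplying by $\|\omega-K\|_{\mathrm{op}}\le G$, at most $\epsilon/4$. The kept vector lies in the subspace
\[
W=\mathrm{span}\Bigl\{\bigotimes_{j\in B\cup\{m\}}(\C^q)\otimes(\text{at most }t\text{ sites in }a_j^\perp,\ \text{rest }a_j)\Bigr\},
\]
which has dimension $(mq)^{O(t+s)}$. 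This is the source of the exponent $O(\log(G/(\alpha\epsilon)))$ in the copy count.

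\textbf{Step 3: estimate the objective and account for the failure probability.} Project $\omega$ onto $W$ with the two-outcome measurement $\{P_W,I-P_W\}$ and perform full state tomography on the post-measurement state. This yields $\widehat{\omega}_W$ within operator-norm error $\epsilon/(4G)$ using $\dim(W)^{O(1)}\log(1/\delta)$ copies. The restriction $P_WKP_W$ is computed exactly from the classical description of $K$. Confidence amplification and a union bound over the guesses handle the failure probability $\delta$.

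\textbf{Step 4: optimize the explicit objective (main obstacle).} The objective becomes $N(z)^{-1}\,\langle\Pi(z),(\widehat\omega_W-P_WKP_W)\Pi(z)\rangle$. Here $\Pi(z)$ is a multilinear polynomial of degree at most $t$ in the small deviations, with full dependence on the guessed sites, and $N(z)=\prod_j(1+\|z_j\|^2)$ is a known normalization. We cannot simply grid the $z_j$'s, since there are $m$ of them. To avoid this, we view $\Pi(z)\otimes\overline{\Pi(z)}$ as a rank-one tensor over the sites and apply the prefix compression of \Cref{th:maincompression} to the explicit Hermitian operator, rescaled by its Frobenius norm. That norm is $\mathrm{poly}_{G,\alpha}$, because $\dim W$ is polynomial, so this rescaling loses only polynomial factors. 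After compression the objective depends on only constantly many message coordinates.

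We then enumerate a net over these coordinates and the scalar $\log N$, which lies in $[0,\log(1/\alpha)]$. Each net cell imposes a convex feasibility constraint on every local factor $(a_j+z_j)$ within norm bounds, so for each cell we solve the convex program and keep the best feasible point. This is the same mechanism as the reachable-message cover of \Cref{th:mainvero}.

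The delicate part is controlling the compression error uniformly. The vectors $\Pi(z)$ are not unit-norm; their norms are at most $N(z)\le1/\alpha$. The normalization therefore inflates errors by a factor $\mathrm{poly}(1/\alpha)$, so the compression parameter $\eta$ must be taken polynomially small in $\alpha,\epsilon,1/G$. This only changes the constants in $\mathrm{poly}_{\alpha,\epsilon,G}$.

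\textbf{Step 5: output.} Return the best candidate $\hat x$ over all guesses and net cells. Re-estimate its value with fresh copies by measuring $\ket{\hat x}\bra{\hat x}$ on $\omega$, and compute $\langle\hat x|K|\hat x\rangle$ exactly, to produce the estimate $v$ within $\epsilon$. Summing the truncation, tomography, compression and net errors gives condition 1 of \Cref{def:LS} with total error $\epsilon$.
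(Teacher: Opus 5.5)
Your Steps 1--3 and 5 agree with the paper's proof: reparametrizing around $a$, guessing the few sites with large deviation, truncating to at most $h$ deviating sites with a Poisson-type tail, subspace tomography, and final re-estimation. The gap is in Step 4, which is exactly where the polynomial running time has to come from.

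You compress the truncated objective with the site-by-site prefix compression of \Cref{th:maincompression} and then run the reachable-message cover of \Cref{th:mainvero}. The error of that theorem is $\eta\sqrt{m}\,\|H\|$ (Frobenius norm, after your rescaling). So additive accuracy $\epsilon$ forces $\eta\lesssim\epsilon/(\sqrt{m}\,\|H\|)$. The message spaces are then only guaranteed to have dimension at most $\eta^{-2}\approx m\|H\|^2/\epsilon^2$. With the net radius $\Theta(\epsilon/m)$ that the layered error accumulation requires, the covers have size up to $\exp(O(\frac{m}{\epsilon^2}\log\frac{m}{\epsilon}))$.

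Hence the claim that ``after compression the objective depends on only constantly many message coordinates'' is false. The $\sqrt m$ was dropped when you said $\eta$ need only be polynomially small in $\alpha,\epsilon,1/G$. In \Cref{part:classical} this cost was acceptable only because the input there has size $d^k$. Here the target is $\poly_{\alpha,\epsilon,G}(m,q,T,\log(1/\delta))$, so your Step 4 gives time exponential in $m$. Separately, the cell constraints are not convex in $z_j$ as claimed, since $\ket{a_j+z_j}\bra{a_j+z_j}$ is quadratic in $z_j$; you would have to relax to a convex set of PSD operators and round back.

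The missing idea is to exploit the bounded \emph{degree} rather than the site structure. In the paper (\Cref{lem:classicopt}), $F_S$ is $e^{-\Phi(z)}$ times a polynomial of degree $D=O(h+|S|)$ in the single concatenated variable $y=(z,u)$. Each coefficient tensor $T_j$ is flattened along one slot into a map $E\to E^{\otimes(j-1)}$, and the bipartite \Cref{lem:onecutcompression} is applied once to the direct sum of these flattenings. This yields a subspace of $E$ whose dimension depends only on $D$ and the accuracy $\tau$, with $|p(y)-p(P_Wy)|\le\tau D^2$. The error telescopes over the at most $D$ slots of $y^{\otimes j}$, not over the $m$ sites.

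One then nets the projection $P_Wy$ and, for each net point, solves a convex program minimizing $\Phi$. Convexity of $\log(1+\|z_i\|^2)$ needs $\|z_i\|\le 1$, which is why large sites are guessed and re-parametrized by $u_i$. The comparison $e^{-\phi(y_g)}\ge e^{-\phi(y^*)}$ needs $p(y^*)>0$, which is why \Cref{def:LS} only asks for a guarantee when the optimum is at least $\epsilon$.

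Finally, your justification that the Frobenius norm is bounded ``because $\dim W$ is polynomial'' only gives $\|H\|\le\sqrt{\dim W}\,\|H\|_{\mathrm{op}}$. That grows with $mq$ and would defeat any constant-dimensional compression, the paper's included. The correct bound is dimension-free: $\|P_W(\omega-K)P_W\|\le\|\omega\|+\sum_\ell|c_\ell|\le G$, because $\omega$ is a density matrix and the factors $K_{\ell,j}$ have Frobenius norm at most $1$. Accordingly, the tomography error must be controlled in Frobenius norm, not just operator norm.
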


\begin{algorithm}[!t]
\caption{$\LS_{\alpha,\epsilon}(\omega,K,a)$}
\label{alg:local-search}
\begin{algorithmic}[1]
    \Require \\
    \begin{itemize}
        \item Copies of $\omega\in\cD((\C^q)^{\otimes m})$;
        \item An explicitly given $K=\sum_{\ell=1}^T c_\ell \bigotimes_{i\in[m]}K_{\ell,i}$;
        \item A reference product vector $a=\otimes_{j\in[m-1]}a_j$, with $\|a_j\|=1$;
        \item parameters $\alpha, \epsilon$.
    \end{itemize}
    \State $h\gets 4e\log(32 (1+\sum_{\ell=1}^T|c_\ell|) /(\alpha\epsilon))$
    \State Initialize a list $\cL=\{a\otimes b\}$ where $b\in \C^q$ is any unit vector. %
    \ForAll{$S\subseteq [m-1]$ with $|S|\le 2\log(1/\alpha)$}
        \State Let $P_S$ be the projection onto the subspace $\cV^h_S$
        \State Perform subspace tomography of $\omega$ onto $\cV^h_S$ and obtain $\widehat \omega$\Comment{Subspace Tomography}
        \State $H_S:=\widehat\omega-P_SKP_S$.
        \State For $(z,u)\in\cK_S$ define $F_S$ according to \Cref{eq:classicalOBJ}

        \State Approximately maximize $F_S$
        over $(z,u)\in\cK_S$, obtaining $(\hat z,\hat u)$ \Comment{Polynomial Optimization}

        \If{$F_S(\hat z,\hat u)\ge \epsilon/2$}
            \State Define the normalized product vector $ x$ by
            \[
                x_i:=\frac{a_i+\hat z_i}{\sqrt{1+\|\hat z_i\|_2^2}}\qquad (i\in [m-1]\setminus S),
            \quad
            \text{and}
            \quad
                x_i:=\frac{\hat u_i}{\|\hat u_i\|_2}\qquad (i\in S \cup\{m\}).
            \]
            \State $\cL\gets\cL\cup \{x\}$
        \EndIf
    \EndFor
    \State Estimate $\Tr((\omega-K)\ket{x}\bra{x})$ up to accuracy $\epsilon/4$ for every $x\in \cL$.
    \State \Return the $x$ with the largest estimated value together with its estimate over all $x\in\cL$.
\end{algorithmic}
\end{algorithm}

The algorithm follows exactly the same template proposed by \citet{bakshi2025learning} extended to $\C^q$. In particular, the idea is to parameterize the product space around the reference vector, truncate to a low-dimensional subspace, and then perform tomography only on that subspace to obtain a classical representation that can be handled with a classical polynomial optimization algorithm.

Indeed, we will assume to have access to an algorithm that performs subspace tomography.

\begin{restatable}[Subspace Tomography (essentially {\citep[Lemma~5.9]{bakshi2025learning}})]{lemma}{subspacetomography}
\label{lem:subspacetomography}
    Let $\omega\in\cD((\C^q)^{\otimes m})$ be a quantum state and $P$ a know projector on a subspace $V$ of dimension $D$ given explicitly by its orthonormal product basis, i.e.,
    \[
    P=\sum_{\ell=1}^D\ket{a_\ell}\bra{a_\ell}\quad\text{and}\quad V=\textnormal{im}(P).
    \]
    For any $\delta,\epsilon>0$ there exists an algorithm that with $O(\frac{D^2}{\epsilon^2}\log(1/\delta))$ copies of $\omega$ outputs a classical representation $\widehat \omega$ supported on $V$ such that with probability at least $1-\delta$ it holds that
    \(
    \|\widehat \omega-P\omega P\|_{op}\le \epsilon.
    \)
    Moreover, the algorithm runs in time $\poly\left(m,q,D,\epsilon^{-1},\log 1/\delta \right)$.
\end{restatable}

Moreover, we assume access to a classical polynomial optimization subroutine that lets us optimize tensor polynomials of fixed degree (multiplied by the exponential of the negative of a convex function) over convex sets.

\begin{restatable}[Polynomial Optimization (essentially {\citep[Section~6]{bakshi2025learning}})]{lemma}{lemmapolynomialopt}
\label{lem:classicopt}
    Let $E$ be a finite-dimensional real Hilbert space of dimension $n$ and let $\cK\subset B_E$ be compact and convex. Let 
    \[
    p(y)=\sum_{j=0}^D\ip{T_j}{y^{\otimes j}},
    \]
    and $\sum_{j=0}^D\|T_j\|\le 1$, and $\phi:\cK\to[0,1]$ be convex and continuous. For every $\epsilon>0$ there exists an algorithm which returns $\hat y\in\cK$ such that, if $\OPT=\max_{y\in \cK} e^{-\phi(y)}p(y)\ge \epsilon$ then
    \[
    e^{-\phi(\hat y)}p(\hat y)\ge\OPT- \epsilon.
    \]
    The algorithm runs in time $\poly_{D,\epsilon}(n)$.
\end{restatable}

More formally, in $\LS$ we are given a reference product vector $a$ a state $\omega$ through copies, and a classically described product $K$, and we want to maximize 
\[
x\mapsto \Tr((\omega-K)\ket{x}\bra{x})
\]
over the set
\[
R(a,\alpha) = \{x=\otimes_{j\in[m]}x_j:|\ip{a}{x_{[m-1]}}|^2\ge \alpha, \|x_i\|=1\}.
\]
The crucial fact is that even if $x$ differs from $a$ in all coordinates, the deviation is bounded independently of $m$. Indeed, we can write any $x$ as the product of 
\begin{align}\label{eq:reparam}
x_i=\frac{a_i+z_i}{\sqrt{1+\|z_i\|^2}}, \quad\text{with}\quad z_i\in a_i^\bot,
\end{align}
and thus $|\ip{a}{x_{[m-1]}}|^2=\prod_{i<m}\frac{1}{1+\|z_i\|^2}$. Since $x\in R(a,\alpha)$, defining $S\subseteq[m-1]$ as the sites at which $|z_i|\ge1$ gives
\[
\sum_{i\in [m-1]\setminus S}\|z_i\|^2\le 2\sum_{i<m} \log(1+\|z_i\|^2)\le 2\log(1/\alpha).
\]
which in turn implies that only few $i$'s (i.e., at most $O(\log(1/\alpha))$) can have large (e.g., greater than $1$) orthogonal weight $\|z_i\|$.\footnote{Note that in the reparametrization of \Cref{eq:reparam} we are not guaranteed that $\|z_i\|\le 1$.}

This motivates the entire algorithmic template. Indeed, we can first guess a set $S$ of cardinality $O(\log(1/\alpha))$ and consider the the convex set $\cK_S\subset (\C^q)^{m}$ of parameters
\[
\cK_S = \left\{(z,u):z_i\in a_i^\bot, \|z_i\|\le 1, \forall i\in [m-1]\setminus S, \hspace{-0.2cm}\sum_{i\in[m-1]\setminus S} \hspace{-0.3cm}\|z_i\|^2\le 2\log\left(\frac1\alpha\right), \|u_i\|\le 1, \forall i\in S\cup \{m\}\right\}.
\]

The property we actually need is that each $x\in R(a,\alpha)$ is represented by some $(z,u)\in \cK_S$ for the appropriate $S\subseteq[m-1]$.

Moreover, for a guess $S$, the bound $\sum_{i\in [m-1]\setminus S}\|z_i\|^2\le O(\log(1/\alpha))$ implies that most of its mass is in the subspace $\cV^h_S$ defined as the subspace in which there are at most $h$ sites significantly away from the reference direction $a_i$:
\[
\cV^h_S:=\bigoplus_{R\subseteq [m-1]\setminus S:|R|\le h}\left(\bigotimes_{i\in ([m-1]\setminus S)\setminus R}\textnormal{Span}(a_i)\otimes\bigotimes_{i\in R} a_i^\bot\otimes\bigotimes_{i\in S\cup \{m\}}\C^q\right),
\]
and let $P_S$ be a projection over this subspace.

Indeed for an $x$ parametrized as in \Cref{eq:reparam} by parameters $z_i$, satisfying the bound $\sum_{i\in [m-1]\setminus S}\|z_i\|^2=O(\log(1/\alpha))$, the projection $P_S$ keeps only the terms with $|R|\le h$, and we can show that (for suitable choice of $h$):
\[
\|(I-P_S)x\|^2\le 2^{-h}\le O(\epsilon^2/\|\omega- K\|^2_{op}).
\]
This means that instead of performing tomography of $\omega$ on the entire space, we can perform subspace tomography only on $\cV_S^h$, since $\dim(\cV_S^h)=\poly(mq)$, when all approximations are $O(1)$.
We want to approximately maximize $x\mapsto\Tr(H_S\ket{P_Sx}\bra{P_Sx})$ by exploiting the representation \Cref{eq:reparam}. For a fixed $S$ and $(z,u)\in \cK_S$ we define the function
\begin{equation}\label{eq:classicalOBJ}
    F_S(z,u)=e^{-\Phi(z)} \Tr(H_S\ket{w_S(z,u)}\bra{w_S(z,u)}),
\end{equation}
where $w_S(z,u)=P_S[\bigotimes_{i\in[m-1]\setminus S}(a_i+z_i)\otimes\bigotimes_{i\in S\cup \{m\}}u_i]$ is the unnormalized vector, and $\Phi(z)=\sum_{i\in [m-1]\setminus S}\log(1+\|z_i\|^2)$, accounts for the normalization. To optimize this function we can use \Cref{lem:classicopt} since the degree of $\Tr(H_S\ket{w_S(z,u)}\bra{w_S(z,u)})$ can be upper bounded by $O(h+|S|)=O_{\alpha,\epsilon,G}(1)$.

\begin{algorithm}[!t]
\caption{$\Cover(\rho;\theta)$}
\label{alg:cover}
\begin{algorithmic}[1]
    \Require Copies of $\rho\in\cD((\C^d)^{\otimes k})$, $\theta>0$ and $\LS$ oracle
    \State $C_0\gets\{1\}$
    \For{$j=1,\ldots, k$}
        \State If $C_{j-1}=\emptyset$ \Return $\emptyset$
        \State $C_j\gets\emptyset$
        \State $\textsf{Stop} \gets \textsf{false}$ and $t=1$
        \While{$t\le 8/\theta$ and $\textsf{Stop} = \textsf{false}$}\Comment{Inner Loop}
            \State $t\gets t+1$
            \State $\varphi_j\gets\lambda\sum_{c\in C_j}\ket{c}\bra{c}$ with $\lambda=(8/\theta)^{2}$
            \ForAll{$a\in C_{j-1}$} 
                \State %
                Run $\LS_{\alpha,\frac\theta{16}}(\rho^{[j]},\varphi_j,a)$ which returns value $v_a$ and solution $\hat x_a$ with $\alpha=(\theta/8)^4$
            \EndFor
            \If{$\max_{a\in C_{j-1}}v_a\ge \frac34\theta$}
                \State $C_j\gets C_j\cup\{\hat x_{\hat a}\}$ where $\hat a\in\arg\max_{a\in C_{j-1}}v_a$
            \Else
                \State $\textsf{Stop} \gets \textsf{true}$
            \EndIf
        \EndWhile
    \EndFor
    \Return $C_k$
\end{algorithmic}
\end{algorithm}
\subsection{Overlap Product Cover} For any $\rho\in\cD((\C^d)^{\otimes k})$, \Cref{alg:cover} will use as a subroutine $\LS$ to build a $\theta$-overlap product cover for \Cref{def:cover}.

Using $\LS$ to build a $\theta$-overlap cover is natural. The algorithm will build a cover $C_j$ for every factor $j$ for the $j$-partite marginal $\rho^{[j]}:=\Tr_{\{j+1,\ldots,k\}}(\rho)$. It will naturally use the fact that $\LS$ also uses an explicitly given separable $K$ to regularize the search by turning the constraint of adding new elements to the cover with small overlap with existing ones into a Lagrangian term ($\varphi_j$ in \Cref{alg:cover}).%

\begin{restatable}{lemma}{coverquantum}\label{lem:coverquantum}
    Given $\theta>0$, and $\rho\in \cD((\C^d)^{\otimes k})$ supplied by copies, \Cref{alg:cover} returns a $\theta$-overlap cover for $\rho$ with probability at least $1-O(k\delta/\theta^2)$ using at most $O_\theta(\poly(kd)\log 1/\delta)$ copies and time.
\end{restatable}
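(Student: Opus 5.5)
We prove the statement by induction on $j$. The invariant is that $C_j$ is a $\theta$-overlap product cover for the marginal $\rho^{[j]}$, and that $|C_j|\le 8/\theta$. For every product $x=x_1\otimes\cdots\otimes x_k$ we have $\Tr(\rho^{[j]}\ket{x_{[j]}}\bra{x_{[j]}})\ge\Tr(\rho\ket{x}\bra{x})$, since tracing out later sites only increases overlap. Hence the case $j=k$ gives the claim. The base case $C_0=\{1\}$ is trivial. Throughout we condition on all $\LS$ calls succeeding. There are at most $k\cdot(8/\theta)\cdot|C_{j-1}|=O(k/\theta^2)$ calls, so a union bound gives the stated failure probability $1-O(k\delta/\theta^2)$.

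\textbf{Step 1 (size bound).} Each vector $c$ added to $C_j$ has estimated value at least $\tfrac34\theta$. By the accuracy guarantee of \Cref{def:LS} with $\epsilon=\theta/16$, it therefore satisfies $\Tr((\rho^{[j]}-\varphi_j)\ket c\bra c)\ge\theta/2$. Since $\varphi_j\succeq0$, this yields two facts:
\[
\langle c|\rho^{[j]}|c\rangle\ge\theta/2,
\qquad
\lambda\sum_{c'}|\langle c,c'\rangle|^2\le 1,
\]
where the sum runs over earlier elements $c'$ of $C_j$. With $\lambda=(8/\theta)^2$, any two elements therefore have overlap $|\langle c,c'\rangle|\le\theta/8$. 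The Gram matrix $G$ of $C_j$ then has off-diagonal row sums at most $|C_j|\theta/8$, which is small as long as $|C_j|\le 8/\theta$. Using $\sum_c\langle c|\rho|c\rangle\le\|G\|_{op}\Tr\rho$, we get $|C_j|\theta/2\le 1+|C_j|\theta/8$. This caps $|C_j|$ at $O(1/\theta)$, consistent with the loop bound $8/\theta$, so the loop terminates through \textsf{Stop} rather than by exhausting iterations. If the cap were reached instead, the same inequality would give a contradiction.

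\textbf{Step 2 (covering property).} Suppose $x_{[j]}$ has $\langle x_{[j]}|\rho^{[j]}|x_{[j]}\rangle\ge\theta$ but $|\langle x_{[j]},c\rangle|<\theta^2/64$ for all $c\in C_j$. We derive a contradiction with stopping. First, $x_{[j-1]}$ has overlap at least $\theta$ with $\rho^{[j-1]}$. By induction there is $a\in C_{j-1}$ with $|\langle a,x_{[j-1]}\rangle|\ge\theta^2/64$, hence $|\langle a,x_{[j-1]}\rangle|^2\ge(\theta/8)^4=\alpha$, so $x_{[j]}\in R(a,\alpha)$. The penalty at $x_{[j]}$ is $\lambda\sum_c|\langle x_{[j]},c\rangle|^2\le(8/\theta)^2\cdot(8/\theta)\cdot(\theta^2/64)^2=\theta/8$, using $|C_j|\le 8/\theta$. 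So the $\LS$ objective at $x_{[j]}$ is at least $7\theta/8$. By guarantee 1 the returned $\hat x_a$ has true value at least $7\theta/8-\theta/16$, and by guarantee 2 its estimate satisfies $v_a\ge 7\theta/8-\theta/8=\tfrac34\theta$. Thus the loop would not have stopped, a contradiction. The main obstacle is making these constants mesh: the overlap threshold $\theta^2/64$ must be squared into $\alpha=(\theta/8)^4$, and the penalty must stay below $\theta/8$ while the size bound in Step 1 remains valid. If the constants fail as written, I would loosen the cover threshold or rescale $\lambda$; the structure of the argument is unchanged.

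\textbf{Step 3 (cost).} Each $\LS$ call is on $\rho^{[j]}$ with $m=j\le k$ and local dimension $d$. Here $K=\varphi_j$ is a sum of $T\le 8/\theta$ rank-one product terms with $G=1+\lambda|C_j|=O_\theta(1)$, and $\alpha,\epsilon$ depend only on $\theta$. By \Cref{th:LS}, each call therefore uses $(kd)^{O_\theta(1)}\log(1/\delta)$ copies and time; copies of $\rho^{[j]}$ are obtained from copies of $\rho$ by discarding sites. Multiplying by the $O(k/\theta^2)$ calls gives the claimed $O_\theta(\poly(kd)\log1/\delta)$ bound.
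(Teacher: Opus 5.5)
Your proposal is correct and follows the paper's proof essentially step for step: the same union bound over $O(k/\theta^2)$ calls to $\LS$, the same Gram-matrix/Gershgorin bound $|C_j|\le 8/(3\theta)$ forcing termination via \textsf{Stop}, and the same inductive coverage argument (penalty at most $\theta/8$, so $v_a\ge\frac34\theta$, contradicting the stopping rule). Your explicit cost accounting via \Cref{th:LS} with $G=O_\theta(1)$ spells out what the paper leaves implicit. The paper instead adds a remark on the early return when $C_{j-1}=\emptyset$, and your monotonicity observation already covers that case.
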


\subsection{Proof of \Cref{th:mainfromBak}}

\begin{algorithm}[!t]
\caption{Product-State Optimization}
\label{alg:?}
\begin{algorithmic}[1]
    \Require Copies of $\rho\in\cD((\C^d)^{\otimes k})$ and $\sigma=\sum_{\ell\in[T]} w_\ell(\sigma_{\ell,1}\otimes\cdots\otimes\sigma_{\ell,k})$, $\delta,\eta>0$
    \State Compute a $\theta/2$-cover $C$ for $\rho$, with $\theta=\eta^2/8$
    \ForAll{$c=\otimes_{j\in[k]}c_j\in C$}
        \State $(\hat x_c,v_c)\gets\LS_{\alpha,\epsilon}(\rho,\sigma,c_{[k-1]})$ where $\alpha=(\theta/16)^4$ and $\epsilon = \theta/16$
    \EndFor
    \State Record $\hat x=\hat x_c$ associated with the largest $\hat v=v_c$
    \For{$t=1,\ldots,O(1/\theta \log 1/\delta )$}
        \State Draw $b$ uniformly from a product basis of $(\C^d)^{\otimes k}$.
        \State Estimate $v_b=\Tr((\rho-\sigma)\ket{b}\bra{b})$ using $N=O(\log(1/\delta)/\eta^4)$ copies of $\rho$. %
    \EndFor
    \Return $\arg\max_{b} v_b$ or $\hat x$ if $\hat v\ge \max_{b} v_b$
\end{algorithmic}
\end{algorithm}

In this section we provide a proof of \Cref{th:mainfromBak} that uses \Cref{alg:cover} to build a cover and $\LS$ oracle to optimize once the cover is obtained. The only delicate part is that $\LS$ works only when the optimum exceeds a threshold. To handle the case where the optimum is close to zero, we use a probabilistic argument to show that a random pure product state is not too negative.

\begin{proof}[Proof of \Cref{th:mainfromBak}]
    Let $\OPT=\max_{x_1,\ldots,x_k\in \C^q:\|x_i\|=1}\Tr\left[(\rho-\sigma)\bigotimes_{i\in[k]}\ket{x_i}\bra{x_i} \right]$. We are going to divide the proof into two cases. 

    \paragraph{Case I: $\OPT\ge \theta/2$.}
        Let $x^*=\otimes_{j\in[k]}x^*_j$ be a global optimizer. Since $\OPT\ge \theta/2$ we have that 
        \[
        \Tr(\rho \ket{x^*}\bra{x^*}) = \OPT+\Tr(\sigma \ket{x^*}\bra{x^*})\ge \frac\theta2.
        \]
        Therefore, the guarantees of the cover apply, and there exists a $c\in C$ such that $|\ip{c}{x^*}|^2\ge\alpha$, which implies that $|\ip{c_{[k-1]}}{x^*_{[k-1]}}|^2\ge\alpha$ and $x^*\in R(c_{[k-1]},\alpha)$. When $\LS$ is called on such $c_{[k-1]}$, we are guaranteed that it will output a $\hat x_c$ with estimated value $v_c$ at least $\OPT-\epsilon$.

    \paragraph{Case II: $\OPT< \theta/2$.}
        In this regime a trivial application of Markov's inequality proves that a random basis vector gives a value which is approximately zero and thus $O(\theta)$ approximation of the optimum.
        
       Observe that $\Tr(\rho)=\Tr(\sigma)=1$. Let $b=i_1\otimes\cdots\otimes i_k$ be a uniform product basis, i.e., each $i_j$ is uniform and independent in $[d]$. Thus $\mathbb{E}_{b}[\ket{b}\bra{b}]=\frac{1}{d^k}\sum_{i_1,\ldots, i_k}\ket{i_1,\ldots, i_k}\bra{i_1,\ldots, i_k}=\frac{I}{d^k}$ and
        \[
        \mathbb{E}_b[\Tr((\rho-\sigma)\ket{b}\bra{b})]=\frac{1}{d^k}\Tr(\rho-\sigma)=0,
        \]
        and $\Tr((\rho-\sigma)\ket{b}\bra{b})\in[-1,1]$ a.s..
        Thus, by Markov's inequality, we obtain
        \[
        \mathbb{P}\left[\Tr((\rho-\sigma)\ket{b}\bra{b})\ge -\frac{\theta}{2}\right]\ge \frac{\theta}{2+\theta}\ge \frac{\theta}{4}.
        \]
        Therefore, after $O(1/\theta\log 1/\delta)$ draws of such $b$, we have that at least one has value at least $-\frac{\theta}{4}$ with probability at least $1-\delta$, which is an $O(\theta)=O(\eta^2)$ approximation of the optimum in this regime.
\end{proof}

\subsection{From Optimization to Membership}

A simple consequence of \Cref{th:mainfromBak} (similarly to what we did in \Cref{part:classical}) is that we can use the specific structure of that linear optimization oracle to solve a weak membership problem. This is why we assume the state $\sigma$ is given classically as a small-rank tensor, as it is the state updated by the Frank-Wolfe machinery.

\begin{theorem}\label{th:quantum_lowdimmembership}
    For every $\eta>0$ there is a quantum algorithm which, given copies of a state $\rho\in \cD((\C^q)^{\otimes k})$, decides if 
    \[
    \rho\in \Sep(q,k),\quad\text{or}\quad\dist(\rho,\Sep(q,k))\ge \eta,
    \]
    with probability of error at most $1/3$, using $(kq)^{O_\eta(1)}$ time and copies of $\rho$.
\end{theorem}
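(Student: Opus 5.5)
We run Frank--Wolfe on $F(\sigma)=\frac12\|\rho-\sigma\|^2$ over $\Sep(q,k)$, as in the proof of \Cref{th:maintheoremclassical}, and use \Cref{th:mainfromBak} as the linear maximization oracle. The minimum of $F$ is $F^*=\frac12\dist(\rho,\Sep(q,k))^2$.

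We maintain an explicit iterate $\sigma_t=\sum_{\ell\le t}w_\ell\,\ket{x^{(\ell)}}\bra{x^{(\ell)}}$, where each $x^{(\ell)}$ is a pure product state and $\sigma_0$ is any fixed pure product state. The gradient of $F$ is $\sigma_t-\rho$, so each Frank--Wolfe step asks for
\[
\max_{x\ \text{product}}\Tr\!\left[(\rho-\sigma_t)\ket{x}\bra{x}\right],
\]
which is exactly the problem solved by \Cref{th:mainfromBak}, with $T=t+1$ terms. The oracle returns $x^{(t+1)}$ with additive error $O(\eta'^2)$ and failure probability $\delta'$. We then set $\sigma_{t+1}=(1-\gamma_t)\sigma_t+\gamma_t\ket{x^{(t+1)}}\bra{x^{(t+1)}}$ with $\gamma_t=2/(t+2)$. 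The curvature constant is $O(1)$ because $\Sep$ has diameter at most $2$ in Frobenius norm. The standard inexact Frank--Wolfe bound then gives $F(\sigma_{t})-F^*\le O(1/t)+O(\eta'^2)$.

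Choose $T_{\mathrm{FW}}=\Theta(1/\eta^2)$ iterations and $\eta'=c\eta$ for a small constant $c$, so the final iterate satisfies $F(\sigma_T)\le F^*+\eta^2/16$. This separates the two cases:
\begin{itemize}
\item If $\rho\in\Sep$, then $F(\sigma_T)\le\eta^2/16$.
\item If $\dist(\rho,\Sep)\ge\eta$, then $F(\sigma_T)\ge F^*\ge\eta^2/2$.
\end{itemize}

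It remains to estimate $F(\sigma_T)=\frac12\big(\Tr\rho^2-2\Tr(\rho\sigma_T)+\Tr\sigma_T^2\big)$ to accuracy $\eta^2/16$:
\begin{itemize}
\item $\Tr\sigma_T^2$ is computed classically from the pairwise overlaps $\prod_j|\ip{x_j^{(\ell)}}{x_j^{(\ell')}}|^2$. This takes $O(T^2kq)$ time.
\item $\Tr\rho^2$ is estimated by swap tests on pairs of copies, using $O(\eta^{-4})$ repetitions.
\item $\Tr(\rho\sigma_T)=\sum_\ell w_\ell\bra{x^{(\ell)}}\rho\ket{x^{(\ell)}}$ is estimated by sampling $\ell\sim w$ and measuring $\rho$ in a product basis containing $x^{(\ell)}$. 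This again needs $O(\eta^{-4})$ repetitions.
\end{itemize}
We accept iff the estimate is below $\eta^2/4$. To make the total error at most $1/3$, set $\delta'=\Theta(\eta^2)$ for each oracle call and take a union bound over the $O(1/\eta^2)$ calls and the final estimation.

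For complexity, each oracle call costs $\poly_\eta(kqT\log 1/\delta')$ time and copies, with $T=O(1/\eta^2)$. The total is therefore $(kq)^{O_\eta(1)}$ time and copies.

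The main obstacle is justifying the inexact-oracle Frank--Wolfe guarantee with additive oracle error in the \emph{linear} gap. The oracle error $O(\eta'^2)$ adds $O(\eta'^2)\gamma_t$ per step, which sums to an $O(\eta'^2)$ term in the final bound. So the constant $c$ in $\eta'=c\eta$ must be chosen small enough that this term stays below the $\eta^2/16$ slack.

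A second point to check is that \Cref{th:mainfromBak} accepts $\sigma_t$ in the required form. It does: $\sigma_t$ is a convex combination of pure product states, so its weights are nonnegative and the $G$ parameter in \Cref{th:LS} stays bounded by $2$.
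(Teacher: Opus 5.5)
Your proposal is correct and follows the paper's own argument. You run Frank--Wolfe on $\frac12\|\rho-\sigma\|^2$ over $\Sep(q,k)$, using \Cref{th:mainfromBak} as the approximate linear oracle, for $O(\eta^{-2})$ iterations, then estimate $F$ to accuracy $O(\eta^2)$. Beyond the paper, you also make explicit the inexact Frank--Wolfe recursion with constant additive oracle error, the union bound over oracle failures, and how to estimate $\Tr\rho^2$, $\Tr(\rho\sigma_T)$ and $\Tr\sigma_T^2$.
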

\begin{proof}
    We proceed exactly as in the proof of \Cref{th:classicalquantumsep}, applying Frank-Wolfe to $F(\sigma)=\frac12\|\rho-\sigma\|^2$ with $\sigma\in \Sep(q,k)$. At every iteration, the required linear optimization oracle is
    \[
    \max_{\pi\in\Sep(q,k)}\Tr((\rho-\sigma)\pi),
    \]
    which by linearity is attained at a pure product state and can therefore be approximated to additive accuracy $O(\eta^2)$ by \Cref{th:mainfromBak}. After $O(\eta^{-2})$ iterations, and estimating the final value of $F$ to additive accuracy $O(\eta^2)$, the two promised cases are separated. The claimed time and copy bounds follow from \Cref{th:mainfromBak}.
\end{proof}

\section{Dimensionality Reduction}\label{sec:dimension-reduction}

In this section, we prove the following reduction from $k$-partite Frobenius weak separability in arbitrary local dimension $d$ to local dimension $O(k^2)$, with the dimension $d$ only appearing in the time complexity, which is increased by $O(\log d)$, while all other measures are kept polynomial.

\begin{theorem}[Dimensionality Reduction]\label{th:dimension-reduction}
For all $\eta>0$, let $q=O(k^2/\eta^4)$ with $q \leq d$.
Suppose we have a quantum algorithm, with access to $N$ copies of $\sigma\in\cD((\C^q)^{\otimes k})$, that uses time $T$
to distinguish
\[
\sigma\in\Sep(q,k) \qquad\text{from}\qquad \dist(\sigma,\Sep(q,k))\ge\frac\eta2
\]
with probability of mistake at most $1/10$.
Then, given copies of $\rho\in\cD((\C^d)^{\otimes k})$, one can distinguish
\[
\rho\in\Sep(d,k) \qquad\text{from}\qquad \dist(\rho,\Sep(d,k))\ge\eta
\]
with error at most $1/3$, using $\poly(k,N,\eta^{-1})$ copies and time
$T+\poly(k,N,\eta^{-1},\log d)$.    
\end{theorem}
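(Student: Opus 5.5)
I would implement the three-stage plan from the Technical Outline: filter by Frobenius mass, certify that the discarded mass is small, then compress the unknown local supports via the Schur transform. Then I would invoke the assumed low-dimensional tester on the compressed copies.

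\textbf{Filtering.} For each site $j$ set $A_j=\Tr_{-j}(\rho^2)$ and $F_j=f(A_j)$, with $f$ the piecewise-linear cutoff between $1/q$ and $2/q$. Let $\Phi_j$ be the flagged channel of \eqref{eq:ov-filter} and $\nu_j=\Tr[(I-F_j^2)A_j]$. The first structural claim is the dichotomy that separable $\rho$ has $\nu_j\le\sqrt{2/q}$ at every site. The proof idea is that for $\rho=\sum_\ell w_\ell\bigotimes_i\rho_{\ell,i}$, the operator $A_j$ dominates in trace the part of the spectrum of the $\rho_{\ell,j}$-mixture weighted by purities. The mass of $A_j$ below $2/q$ is then controlled by Cauchy--Schwarz against $\Tr A_j^2\le \Tr(\rho^2)\le1$. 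This is the step I would check most carefully, and I might need to adjust constants.

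\textbf{Closeness and preservation of the gap.} Write $\rho$ in the decomposition $F_j\oplus(I-F_j^2)^{1/2}$ at each site and telescope over sites, exactly in the spirit of \Cref{lem:energy}. This gives $\|\rho-P\omega P\|\le 2\sqrt{\sum_j\nu_j}$. Each $\Phi_j$ is a local channel, so separable inputs map to separable states supported on $(\mathrm{im}F_j\oplus\bot)$. For the far case, take any separable $\tau$ close to $\omega$. Then $P\tau P$ is a subnormalized separable operator; completing it with a product of maximally mixed states gives a separable state near $P\omega P$ and hence near $\rho$. With $\sum_j\nu_j\le\eta^2/16$ the gap shrinks from $\eta$ to at most $\eta/2$, up to the small normalization loss. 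The tester would estimate each $\nu_j$ with swap-test-type circuits on $O(1)$ copies. It would reject if $\max_j\hat\nu_j$ exceeds $2\sqrt{2/q}$, which forces $\sum_j\nu_j\le k\cdot O(q^{-1/2})\le\eta^2/16$ for $q=\Theta(k^2/\eta^4)$. The functional calculus $F_j$ would be applied through sample-based Hamiltonian simulation of $A_j$ together with QSVT for $f$, using $\poly(q,\eta^{-1})$ copies per application.

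\textbf{Support compression.} $\omega$ lives in $\bigotimes_j(W_j\oplus\bot)$ with $\dim W_j<q$, but the $W_j$ are unknown. For $N$ copies, apply the Schur transform (time $\poly(N,\log d)$) at each site. Keep the Young diagrams with at most $q$ rows and the permutation registers, and replace the $U(d)$ irreps with the $U(q+1)$ irreps of the same diagram. Then undo the transform. Equivariance shows that the output is a mixture over local unitaries $V_j$ of $(\bigotimes_j V_j\,\omega'\,V_j^\dagger)^{\otimes N}$, where $\omega'$ is $\omega$ expressed in $q+1$ dimensions. This is made exact by twirling with independent Haar unitaries at each site. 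Local unitaries preserve both $\Sep$ and Frobenius distance, so each component satisfies the promise. The assumed tester, with error at most $1/10$ on each component, therefore has error at most $1/10$ on the mixture.

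The main obstacle is that the filtered state is only produced approximately, and the projection $P$ is non-unitary. I would fold the $\bot$-flag into the measurement: run the tester on postselected or flagged copies and bound the total variation error by $N\cdot O(\sqrt{\sum_j\nu_j})$ plus the simulation errors. Choosing these errors at most $1/(20N)$ keeps the overall error at most $1/3$ at $\poly(k,N,\eta^{-1})$ cost.
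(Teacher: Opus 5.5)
Your architecture matches the paper's: filter by $F_j=f(A_j)$, estimate and threshold the discarded mass, Schur-compress each site, Haar-twirl, then run the tester. Your closeness bound and the completion-by-maximally-mixed argument for preserving the gap also match the paper. Two steps, however, do not go through as written.

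\textbf{Separable dichotomy.} This is the crux of the reduction, and your sketch does not use separability where it matters. The only quantitative input you name, $\Tr A_j^2\le\Tr(\rho^2)\le1$, holds for every state. Yet a maximally entangled pair ($k=2$, $d\ge q$) has $A_1=I/d$, hence $F_1=0$ and $\nu_1=1$. The bound has to be proved term by term. Take $B=(I-F_j^2)\otimes I_{-j}$ and $\rho=\sum_ap_a\ket{\psi_a}\bra{\psi_a}$ with $\psi_a=x_a\otimes y_a$. Then $\nu_j=\Tr(\rho\,\rho B)\le\sum_ap_a|\bra{\psi_a}\rho B\ket{\psi_a}|\le\sum_ap_a\|\rho B\ket{\psi_a}\|$. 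Since $B\ket{\psi_a}=(I-F_j^2)\ket{x_a}\otimes\ket{y_a}$, you can replace $\ket{y_a}\bra{y_a}$ by $I_{-j}$; this is the one place the product structure across the cut is used. It gives $\|\rho B\ket{\psi_a}\|^2\le\bra{x_a}(I-F_j^2)A_j(I-F_j^2)\ket{x_a}\le\max_{t\ge0}t(1-f(t)^2)^2\le 2/q$, hence $\nu_j\le\sqrt{2/q}$.

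\textbf{Error accounting.} The term $N\cdot O(\sqrt{\sum_j\nu_j})$ is not tunable. After the rejection step you only know $\sum_j\nu_j\le\eta^2/16$, so this term can be of order $N\eta\gg1$. Pushing it below $1/(20N)$ would need $q\gtrsim k^2N^4$, which is circular because $N$ depends on $q$.

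Postselecting on ``no $\bot$'' also fails. The $\bot$-probability at site $j$ is $\Tr[(I-F_j^2)\Tr_{-j}\rho]$, not $\nu_j$, and it equals $1$ for inputs that pass the test. For example, $\rho=I/d^k$ gives $A_j=I/d^{k+1}$, so $F_j=0$, while $\nu_j=d^{-k}$.

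The fix, which is what the paper does and which your own gap-preservation paragraph already supports, is never to apply $P$ physically. Keep $\bot$ as an ordinary basis vector and test $\omega=(\Phi_1\otimes\cdots\otimes\Phi_k)(\rho)$ itself. You have already shown two things: $\omega$ is separable when $\rho$ is, and $\dist(\omega,\Sep(d+1,k))\ge\dist(\rho,\Sep(d,k))-2\sqrt{\sum_j\nu_j}\ge\eta/2$ in the far case. Moreover, its local supports $\operatorname{im}F_j\oplus\mathrm{Span}\{\ket{\bot}\}$ have dimension at most $q$, so you should compress to $U(q)$ to match the given tester, not $U(q+1)$.

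With this change, $P$ is only an analysis device. The only total-variation losses are then the tunable implementation errors and the estimation failure probability. The implementation errors are, e.g., $O(1/(kN))$ per filter call over $kN$ calls, plus the Schur and Haar circuits.
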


As a corollary of this black-box reduction, together with \Cref{th:quantum_lowdimmembership}, we obtain the following theorem on the sample complexity of weak membership.

\begin{theorem}\label{th:mainquantum}
    For every $\eta>0$, there is a quantum algorithm that, given copies of an unknown state $\rho\in \cD((\C^d)^{\otimes k})$, distinguishes between
    \[
    \rho\in\Sep(d,k)\quad\text{and}\quad\dist(\rho,\Sep(d,k))\ge \eta,
    \]
    with probability of error at most $1/3$, using $k^{O_\eta(1)}$ copies of $\rho$ and total time $(k\log d)^{O_\eta(1)}$.
\end{theorem}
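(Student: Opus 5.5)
The plan is to compose \Cref{th:dimension-reduction} with \Cref{th:quantum_lowdimmembership} as black boxes. Fix $\eta>0$ and set $q=\Theta(k^2/\eta^4)$ as in \Cref{th:dimension-reduction}. I would split into two cases according to whether $q\le d$.

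\textbf{Case $q>d$.} Here $d=O(k^2/\eta^4)$, so $d$ is already polynomial in $k$. I would apply \Cref{th:quantum_lowdimmembership} directly with local dimension $d$ and gap $\eta$. This uses $(kd)^{O_\eta(1)}=k^{O_\eta(1)}$ copies and time, which is within both claimed bounds.

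\textbf{Case $q\le d$.} First I instantiate \Cref{th:quantum_lowdimmembership} on $\sigma\in\cD((\C^q)^{\otimes k})$ with gap $\eta/2$. Its error probability is $1/3$, but the hypothesis of \Cref{th:dimension-reduction} requires $1/10$. I would boost by running a constant number of independent repetitions and taking the majority vote, which multiplies copies and time by a constant. This gives a tester using $N=(kq)^{O_\eta(1)}=k^{O_\eta(1)}$ copies and time $T=k^{O_\eta(1)}$, since $q=\poly(k,\eta^{-1})$ and the $\eta$-dependence of the exponent is absorbed into $O_\eta(1)$. Plugging this into \Cref{th:dimension-reduction} yields a tester for $\rho\in\Sep(d,k)$ versus $\dist(\rho,\Sep(d,k))\ge\eta$ with error at most $1/3$. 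Its cost is $\poly(k,N,\eta^{-1})=k^{O_\eta(1)}$ copies and time $T+\poly(k,N,\eta^{-1},\log d)=(k\log d)^{O_\eta(1)}$.

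The only step needing care is the bookkeeping: checking that the copy count never depends on $d$, and that the exponents, which depend on $\eta$ both through $q$ and through $\poly_\eta$, remain $O_\eta(1)$. All of the substantive difficulty lies in the two invoked theorems, which the paper proves separately.
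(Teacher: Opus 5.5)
Your proposal is correct and follows the paper, which obtains this theorem directly as a corollary of composing \Cref{th:dimension-reduction} with \Cref{th:quantum_lowdimmembership}. Your separate treatment of the case $q>d$ (where the reduction's hypothesis $q\le d$ fails but $d=O(k^2/\eta^4)$ already) and your majority-vote boosting from error $1/3$ to the required $1/10$ are bookkeeping steps that the paper leaves implicit, and both are handled correctly.
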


\Cref{th:dimension-reduction} asks us to turn a $k$-partite state $\rho$ with local dimension $d$ into a state with local dimension $q=O_\eta(k^2)$ by paying only $\poly\log d$ time. Intuitively, \Cref{part:classical} shows that this is possible by retaining only the high eigenvalues of the tensor, but doing this requires exponential time in the classical description of the state. The idea of this section is to learn such a decomposition without ever recording a classical description. Much of the technical difficulty is in implementing this procedure with a quantum algorithm. However, it is instructive to momentarily disregard the implementation details and to focus on the high-level idea of the reduction. The complete algorithm can be found in \Cref{alg:dimreduction}.

For each factor $j\in[k]$, define the ``local Frobenius-mass'' $A_j=\Tr_{-j}(\rho^2)$ (here $\Tr_{-j}=\Tr_{[k]\setminus \{j\}}$). 
Note that $A_j\succeq 0$ and $\Tr(A_j)=\Tr(\rho^2)\le 1$.
Then, for any $a>0$, we define a ramp function which is zero below $1/a$, $1$ above $2/a$ and interpolates linearly in between:
\[
f_a(t)=\begin{cases} 0 &\text{if}\quad t\le 1/a\\
at-1 &\text{if}\quad t\in(1/a,2/a)\\
1 &\text{if}\quad t\ge 2/a.
\end{cases}
\]
In particular we will use $f=f_{\Theta(k^2/\eta^4)}$ (details in the proof of \Cref{th:dimension-reduction}). Let $F_j=f(A_j)$, which, by abuse of notation, means applying $f$ to the eigenvalues of $A_j$ (which are real and nonnegative since $A_j\succeq 0$). So if $A_j=\sum_{r}\lambda_r\ket{v_r}\bra{v_r}$ then $F_j=\sum_{r} f(\lambda_r)\ket{v_r}\bra{v_r}$.

The crucial identity is that measuring the Frobenius mass after applying a local Hermitian operator $B$ on the $j$-th party is directly linked to the local Frobenius mass $A_j$ (which also justifies our naming). Indeed,
\[
\|(B\otimes I_{-j})\rho\|^2=\Tr((B^2\otimes I_{-j})\rho^2) = \Tr(B^2\Tr_{-j}(\rho^2)) = \Tr(B^2A_j).
\]
This means that 
\begin{align}\label{eq:errordimred}
e_j=\Tr((I-F_j^2)A_j)
\end{align}
measures the Frobenius mass discarded by applying the cutoff to the low eigenvalues of $A_j$.

Crucially, a small dichotomy argument applies to these errors. We will prove that if $\rho\in \Sep(d,k)$ then all $e_j$ are bounded by $O(\eta^2/k)$, while if $\rho$ is $\eta$-far from every separable state in Frobenius norm then either there are some large $e_j$ (and thus we can reject since this cannot happen if $\rho\in \Sep(d,k)$) or $e_j$ are small and the dimensionality reduction machinery produces a safe state on which to use our algorithm of \Cref{sec:frombakshi}. This is why in \Cref{alg:dimreduction} (\Cref{line:firsttest}) we can safely reject every time we encounter a large error. 

This motivates implementing the filter $\Phi_j$ (\Cref{alg:dimreduction} (\Cref{line:filtering})), the quantum channel that implements a soft projection onto the important part of the eigenspace $F_j$. Formally, $\Phi_j:\cL(\C^d)\to \cL(\C^{d+1})$ is defined as 
\[
\Phi_j(X) = F_jX F_j+\Tr((I-F_j^2)X)\ket{\bot}\bra{\bot}.
\]

Then a simple counting argument tells us that the rank of local dimensions, after applying the $\Phi_j$'s locally, is bounded. Indeed, every eigenvector retained by $F_j$ corresponds to an eigenvalue of $A_j$ with magnitude at least $\Omega(\eta^4/k^2)$. However, we already observed that the sum of the eigenvalues of $A_j$ is at most $1$ and thus there are at most $q=O(k^2/\eta^4)$ such eigenvectors and $\textnormal{rank}(F_j)\le q$. Equivalently, it also holds that
\[
\textnormal{rank}(\Tr_{-j}(\Phi_1\otimes\cdots\otimes\Phi_k)\rho)\le q,
\]
if we think about the quantum channel $\Phi_j$ instead. Thus we know that after applying the filters $\Phi_j$, there should be a low-dimensional representation $\widetilde \rho\in\cD((\C^q)^{\otimes k})$ of $(\Phi_1\otimes\cdots\otimes\Phi_k)(\rho)\in\cD((\C^{d+1})^{\otimes k})$. More formally, there are isometries $V_j:\C^q\to\C^{d+1}$ such that 
\[
(\Phi_1\otimes\cdots\otimes\Phi_k)(\rho) = \left(V_1\otimes \cdots\otimes V_k\right)\widetilde \rho (V_1^\dagger\otimes \cdots\otimes V_k^\dagger).
\]

However, the biggest conceptual obstacle now is that the argument leading to the isometries $V_j$ was completely existential, and learning them explicitly would require at least $\poly(d)$ time. Up to this point, the ideas more or less mimic the spirit of the procedure we devised in \Cref{part:classical}. The genuinely quantum ingredient is that we can now exploit the existence of the low-dimensional support without ever constructing a classical description of it.

Indeed, to do the last step, we resort to a general procedure which takes copies of an unknown state $\rho\in \cD((\C^d)^{\otimes k})$, which secretly lives in a space with small local dimension $q\ll d$, and extracts from it a state with this small local dimension. We rely on representation theory and more specifically on the Schur–Weyl duality (see \cite[Sections~2.1-2.3]{burchardt2025high} and \cite[Corollary 5.19.2-Theorem 5.22.1]{etingof2011introduction}). Indeed, writing down an explicit basis for the small space would require $\Omega(qd)$ space, but we want to do it in $O(\log d)$ time. The insight is that we do not care about the rotation of the local states. Indeed, applying a local rotation to every factor would not change the distance to the set of separable states.
Let's focus on a local factor $j$, and consider the $N$ copies supplied. This gives us a state embedded in $\cD((\C^{d})^{\otimes N})$ as $V^{\otimes N}X (V^\dagger)^{\otimes N}$ where $X\in\cD((\C^{q})^{\otimes N})$ for some unknown isometry $V:\C^q\to \C^d$. For the sake of clarity, imagine the simplified case in which the Schur-Weyl duality tells us that
\[
(\C^m)^{\otimes N}\cong Q^{(m)}\otimes P,
\]
and such that when we apply the same unitary $U\in U(m)$ to every local factor it acts only on $Q^{(m)}$, while permutations act only on $P$ (in reality Schur-Weyl duality tells us $(\C^m)^{\otimes N}$ decomposes into the direct sum of terms like $Q_\lambda^{(m)}\otimes P_\lambda$, one for each $\lambda$ that represents a Young diagram with at most $m$ rows; see \Cref{sec:rep-theory} for additional details).
More importantly, the isometry $V^{\otimes N}$ takes the form $V_Q\otimes I_P$ for some induced isometry $V_Q:Q^{(q)}\to Q^{(d)}$.
So if $X\in \cD(Q^{(q)}\otimes P)$, then its embedding into $\cD(Q^{(d)}\otimes P)$ is
\[
(V_Q\otimes I_P) X (V_Q^\dagger\otimes I_P).
\]
So the information about the unknown isometry $V$ is confined to the register $Q^{(m)}$. So we can trace out the registers of $Q^{(m)}$ to obtain 
\[
\tau_P=\Tr_{Q^{(d)}}((V_Q\otimes I_P)X(V^\dagger_Q\otimes I_P))=\Tr_{Q^{(q)}}(X)
\]
and then, preparing an arbitrary state $\ket{t}\bra{t}_{Q^{(q)}}$, we can obtain $Z=\ket{t}\bra{t}_{Q^{(q)}}\otimes \tau_P$. We destroyed all the information contained in $Q^{(q)}$. This is not problematic as arbitrary rotations would act in the same way, while keeping the same distance from separable states. Formally, consider the Haar averages
\[
\cT(X)=\int_{U(q)} U^{\otimes N} X(U^\dagger)^{\otimes N} dU,
\]
and, using the fact that local rotations act only on the $Q$ registers, we obtain $\cT(X)=\frac{I_{Q^{(q)}}}{\dim Q^{(q)}}\otimes \tau_P$. On the other hand, the same happens with $\cT(Z)$ and thus even if $Z\neq X$ we obtain that $\cT(Z)=\cT(X)$. Indeed, we do not expect to recover $X$ exactly, since it would require $\poly(d)$ space to write it down. The procedure discussed here is implemented in \Cref{alg:dimreduction} (\Cref{line:compression}).

The discussion above concerned a single local factor. Importantly, the same identity remains valid when its $N$ registers are entangled with an arbitrary reference system. We may therefore apply this procedure independently to each of the $k$ local factors of the full state $\widetilde\rho^{\otimes N}$ and find a state $Z$ such that:\footnote{Note that we are choosing a single rotation $U=U_1\otimes \cdots\otimes  U_k$ for all copies, since the low-dimensional tester expects the same state over all copies.}
\[
(\cT_1\otimes \cdots \otimes \cT_k)(Z) = (\cT_1\otimes \cdots \otimes \cT_k)(\widetilde \rho^{\otimes N})=\mathbb{E}_{U_1,\ldots, U_k}[(U\widetilde\rho U^\dagger)^{\otimes N}].
\] 
This is sufficient because for every fixed $U$, $(U\widetilde\rho U^\dagger)^{\otimes N}$ and $\widetilde\rho^{\otimes N}$ satisfy the same separability promise; thus, the low-dimensional tester, by linearity, preserves the same acceptance probability gap under averaging.

\subsection{Implementing the Channels} \label{sec:channels}%

After the high-level overview, we now present the algorithm's implementation. As it is standard, we say that a channel $\widehat{\mathcal E}$ implements the ideal channel $\mathcal E$ with error at most $\delta$ if, for every reference register $B$ and every joint state $X$ on the input register and $B$,
\[
\left\|((\widehat{\mathcal E}-\mathcal E)\otimes I_B)X\right\|_1\le\delta.
\]

We assume the following subroutines (the proofs of these lemmas are deferred to \Cref{app:subroutines}).

The main difficulty is that $F_j=f(A_j)$ depends on the unknown operator $A_j=\Tr_{-j}(\rho^2)$. The following lemma shows that the corresponding spectral filter can nevertheless be implemented directly from copies of $\rho$, without explicitly learning $A_j$, using existing Hamiltonian-simulation techniques \citep{kimmel2017hamiltonian}.

\begin{restatable}[Filtering]{lemma}{lemmadrfilter}\label{lem:dr-filter}
For $j\in[k]$, define the channel $\Phi_j:\cL(\C^d)\to\cL(\C^d\oplus\textnormal{Span}\{\ket\bot\})$ by
\begin{equation}\label{eq:dr-filter-channel}
\Phi_j(X)=F_jXF_j+\Tr\left[\bigl(I-F_j^2\bigr)X\right]
\ket\bot\bra\bot.
\end{equation}
For $\delta>0$, a routine implements $\Phi_j$ with error at most $\delta$ on a specified target register, using $\poly(k,\eta^{-1},\delta^{-1})$ auxiliary copies of $\rho$ and  $\poly(k,\eta^{-1},\delta^{-1},\log d)$ total time.%
\end{restatable}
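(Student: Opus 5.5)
The plan is to implement $\Phi_j$ by a Stinespring-type construction: a controlled operation that, conditioned on the spectral decomposition of $A_j$, rotates a flag qubit by an angle encoding $f(\lambda)$. Concretely, one would like the isometry
\[
W_j:\ket{\psi}\mapsto F_j\ket{\psi}\ket{0}_{\mathrm{flag}}+\sqrt{I-F_j^2}\,\ket{\psi}\ket{1}_{\mathrm{flag}},
\]
after which measuring the flag and, on outcome $1$, discarding the system and preparing $\ket\bot$ yields exactly $\Phi_j$. The outcome-$1$ branch has trace $\Tr((I-F_j^2)X)$, and the outcome-$0$ branch is $F_jXF_j$, so the ideal channel is reproduced. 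The registers are $\log d$ qubits, which accounts for the $\log d$ time factor.

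\textbf{Step 1 (access to $A_j$ as a Hamiltonian).} Write $A_j=\Tr_{-j}(\rho^2)$. Use the identity $\Tr_{-j}(\rho^2)=\Tr_{2}\bigl[(\mathrm{SWAP}_{-j}\otimes I)(\rho\otimes\rho)\bigr]$ restricted appropriately. More usefully, view $A_j$ as a partial trace over a copy of $\rho\otimes\rho$ with the swap operator on the non-$j$ systems. One then implements $e^{-iA_jt}$ by density-matrix-exponentiation style sample-based Hamiltonian simulation in the sense of \citet{kimmel2017hamiltonian}. Each short step $e^{-iA_j\Delta t}$ on the target is approximated by taking two fresh copies of $\rho$ and applying a short-time evolution under a fixed, efficiently implementable swap-type Hamiltonian that couples the target's $j$-th register to the fresh copies, then tracing them out. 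To first order in $\Delta t$, this produces the commutator $-i\Delta t[A_j,\cdot]$. The error per step is $O(\Delta t^2)$, so simulating time $t$ to accuracy $\delta'$ costs $O(t^2/\delta')$ copies. The gates are swaps and controlled swaps on $\log d$-qubit registers, giving $\poly(\log d)$ time per step.

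\textbf{Step 2 (spectral filter from Hamiltonian simulation).} Given controlled $e^{-iA_jt}$, implement the map $\lambda\mapsto f(\lambda)$ on eigenvectors. The eigenvalues that matter lie in $[0,1]$, and the relevant scale is $1/a=\Theta(\eta^4/k^2)$. I would use quantum signal processing or QSVT: approximate $g(\lambda)=f(\lambda)$ and $\sqrt{1-f(\lambda)^2}$ by trigonometric polynomials in $\lambda$ of degree $\poly(a,1/\delta)$, since $f$ is $a$-Lipschitz. The square root is nonsmooth near $f=1$, so I would replace the exact pair by a smoothed pair $(\tilde f,\sqrt{1-\tilde f^2})$ that still forms a valid isometry. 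This is the main obstacle, because the filter must be exactly isometric for the channel guarantee to hold. I would try first to construct the flag rotation directly as $\exp(-i\theta(A_j)Y_{\mathrm{flag}})$ with $\theta=\arccos f$. This is automatically unitary, so only the approximation of the angle function $\theta$ by a low-degree polynomial matters. Since $\arccos$ has a square-root singularity at $f=1$, I would smooth $f$ at the kink points, at cost $\delta$ in diamond norm, and then apply a Jackson-type approximation. The resulting total evolution time is $t=\poly(k,\eta^{-1},\delta^{-1})$.

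\textbf{Step 3 (error accounting).} The Hamiltonian simulation error, the polynomial approximation error, and the smoothing error all add in diamond norm, and the diamond-norm bound is exactly the stated error notion including the reference register $B$. Allot $\delta/3$ to each. The sample-based simulation needs $O(t^2/\delta)$ copies, which is $\poly(k,\eta^{-1},\delta^{-1})$, and the time is that count times $\poly(\log d)$. The quantity to control is the eigenvalue resolution $1/a\sim\eta^4/k^2$, which forces $t\gtrsim a$. This is the source of the polynomial dependence on $k$ and $\eta^{-1}$.
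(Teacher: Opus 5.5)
Your architecture is viable and differs from the paper's: sample-based simulation of $e^{\pm iA_jt}$, then a spectral transformation giving a flagged Kraus operator close to $F_j$, with $\ket\bot$ on the complementary branch. Two steps need repair, however.

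\textbf{Step~1 rests on a false identity.} Swapping only the non-$j$ systems of two copies and tracing out everything except the $j$-th system of the first copy yields $\Tr_{-j}[(I_j\otimes\Tr_j\rho)\rho]$, not $A_j$. For a two-qubit Bell state this gives $I/4$, while $A_1=I/2$. The copy--copy swap must act on all $k$ systems. With target $T$ and fresh copies $C_1,C_2$, take $K=\tfrac12\bigl(S_{T,C_1^{(j)}}S_{C_1C_2}+S_{C_1C_2}S_{T,C_1^{(j)}}\bigr)$. From $\Tr_{C_2}[S(\rho\otimes\rho)]=\Tr_{C_2}[(\rho\otimes\rho)S]=\rho^2$ and cyclicity of $\Tr_{C_1C_2}$, one gets $\Tr_{C_1C_2}[K(X\otimes\rho\otimes\rho)]=A_jX$ and $\Tr_{C_1C_2}[(X\otimes\rho\otimes\rho)K]=XA_j$. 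Each step then gives $X-i\Delta t[A_j,X]+O(\Delta t^2)$, as you want. Since $K$ is the real part of a register permutation of order $6$, $e^{-iK\Delta t}$ costs $\poly(k\log d)$ gates; note the copy registers carry $k\log d$ qubits, not $\log d$.

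\textbf{The ``main obstacle'' of Step~2 is illusory.} The filter need not be exactly isometric. Take any circuit whose flag-$0$ block on the target is some $G$ with $\|G-F_j\|_{op}\le\epsilon$, and output $\ket\bot$ on every other ancilla outcome. This implements $X\mapsto GXG^\dagger+\Tr((I-G^\dagger G)X)\ket\bot\bra\bot$, which is $O(\epsilon)$-close to $\Phi_j$ even with a reference register. So you never need to approximate $\sqrt{1-f^2}$ or $\arccos f$, and no smoothing is required. A QSP block encoding of a trigonometric-polynomial approximation of $f$ with values in $[0,1]$ suffices; degree $O(a/\epsilon)$ is achievable, e.g.\ by a Jackson-kernel average of the $a$-Lipschitz $f$. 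The detour through $e^{-i\theta(A_j)Y}$ would not help anyway. Obtaining it from $e^{\pm iA_jt}$ needs a further phase-estimation or block-encoding layer, which is itself only approximate.

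\textbf{Comparison with the paper.} The paper avoids both issues with more elementary tools.
For Hamiltonian access, it runs a swap test on two copies, keeps the control and the $j$-th factor, and appends $\ket1$. This yields $\tfrac12\sum_b\ket b\bra b\otimes\ket1\bra1\otimes\Tr_{-j}(\rho+(-1)^b\rho^2)$, whose block difference is $\ket1\bra1\otimes A_j$. Kimmel et al.'s simulation lemma for such differences then gives controlled $e^{\pm iA_j/2}$, with no custom interaction Hamiltonian.

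For the filter, it uses $n$ Hadamard-test qubits that read $1$ with probability $p_\lambda=\tfrac12(1+\sin(\lambda/2))$ on eigenvector $\ket\lambda$. It records their Hamming weight, rotates a flag by $f(\widehat\lambda(\ell))$, uncomputes, and post-selects all ancillas on $\ket{0^m}$. The success Kraus operator is $g(A_j)$ with $g(\lambda)=\mathbb E_{T\sim\mathrm{Bin}(n,p_\lambda)}f(\widehat\lambda_T)$. This is a Bernstein-type smoothing with $0\le g\le 1$ and $|g-f|=O(L/\sqrt n)$, i.e.\ exactly the ``post-select, send the rest to $\bot$'' structure above.

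The paper's route needs no phase factors and no approximation theory for a kinked function. The price is worse polynomial degrees: $n=\Theta(L^2/\delta^2)$ against $O(L/\delta)$ for QSP, roughly $L^4/\delta^5$ versus $L^2/\delta^3$ copies. Both routes meet the stated $\poly(k,\eta^{-1},\delta^{-1})$ copy bound and $\poly(k,\eta^{-1},\delta^{-1},\log d)$ time bound.
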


Once the filtering channel is available, estimating the discarded Frobenius mass $e_j$ is essentially standard: a swap-test-type measurement turns $e_j$ into the expectation of a bounded random variable, to which concentration applies.

\begin{restatable}[Estimation]{lemma}{lemmadresimate}\label{lem:dr-estimate}
Let $e_j$ be defined as in \Cref{eq:errordimred}. Given copies of $\rho$, $\epsilon>0$, and a probability $p>0$, there is a routine that returns a classical estimates satisfying
\[
\max_{j\in[k]}|\widehat e_j-e_j|\le\epsilon,
\]
with probability at least $1-p$. It uses $\poly(k,\eta^{-1},\epsilon^{-1},\log 1/p)$ copies and this bound times $\poly(\log d)$ total time.
\end{restatable}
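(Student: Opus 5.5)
We will estimate each $e_j=\Tr((I-F_j^2)A_j)$ by rewriting it as an expectation over copies of $\rho$. Since $A_j=\Tr_{-j}(\rho^2)$, the identity $\|(B\otimes I_{-j})\rho\|^2=\Tr(B^2A_j)$ gives $e_j=\Tr(\rho^2)-\Tr((F_j^2\otimes I_{-j})\rho^2)$. Both terms are overlaps between two copies of $\rho$, which a swap test detects. For $\Tr(\rho^2)$ we apply the swap test to $\rho\otimes\rho$. This is a two-outcome measurement whose acceptance bias is exactly $\Tr(\rho^2)$, and it runs in time $O(k\log d)$.

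For the second term we use the filtering channel of \Cref{lem:dr-filter}. We prepare $\rho\otimes\rho$ and apply the ideal $\Phi_j$ to the $j$-th register of the first copy. This produces a state on $(\C^d\oplus\mathrm{Span}\{\ket\bot\})\otimes(\C^d)^{\otimes(k-1)}\otimes(\C^d)^{\otimes k}$. We then measure whether that register lies in the $\bot$ flag. If it does not, we perform a swap test between the two copies on all $k$ sites, embedding the second copy's $j$-th register into $\C^d\oplus\mathrm{Span}\{\ket\bot\}$. The unnormalized post-selected state is $(F_j\otimes I)\rho(F_j\otimes I)\otimes\rho$. The swap test's $\pm1$ statistic therefore has expectation $\Tr((F_j\otimes I)\rho(F_j\otimes I)\rho)$, which cyclicity turns into $\Tr((F_j^2\otimes I)\rho^2)=\Tr(F_j^2A_j)$. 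The runs that land in $\bot$ contribute $0$. Hence one run yields a bounded random variable $Y\in[-1,1]$ with $\mathbb{E}[Y]=\Tr(F_j^2A_j)$. Combining with the first estimator gives an unbiased sample of $e_j$ taking values in $[-2,2]$.

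Next we account for the implementation error. By \Cref{lem:dr-filter}, the realized channel is within $\delta$ in diamond-type distance of $\Phi_j$, including on a reference register, here the second copy. Any subsequent two-outcome measurement therefore shifts the expectation by at most $O(\delta)$. Choosing $\delta=\epsilon/4$ costs $\poly(k,\eta^{-1},\epsilon^{-1})$ auxiliary copies and $\poly(\cdot,\log d)$ time per run. By Hoeffding's inequality, $O(\epsilon^{-2}\log(k/p))$ independent runs per site estimate $e_j$ to accuracy $\epsilon/2$ with failure probability at most $p/k$. A union bound over $j\in[k]$ then gives $\max_j|\widehat e_j-e_j|\le\epsilon$ with probability $1-p$.

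The main obstacle is the second step: making sure the swap test after post-selection really reproduces $\Tr(F_j^2A_j)$, with no spurious contribution from the flag state $\ket\bot$ or from cross terms. This needs the ideal channel's Kraus structure, $F_j$ on the retained branch and the flag as an orthogonal branch, together with the fact that the flag measurement commutes with how the implemented channel is used. If the implementation of \Cref{lem:dr-filter} only approximates the channel rather than its Stinespring dilation, we rely solely on the channel-level error bound applied to the full two-outcome procedure. That suffices, because the procedure is a channel followed by a POVM.
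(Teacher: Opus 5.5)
There is a genuine gap in your second step. The claim that cyclicity turns $\Tr\bigl((F_j\otimes I)\rho(F_j\otimes I)\rho\bigr)$ into $\Tr\bigl((F_j^2\otimes I)\rho^2\bigr)$ is false. Write $F:=F_j\otimes I_{-j}$. Cyclicity only lets you rotate $\Tr(F\rho F\rho)$ to $\Tr(\rho F\rho F)$, and the two factors of $F$ stay separated by $\rho$. In fact $\Tr(F^2\rho^2)-\Tr(F\rho F\rho)=\tfrac12\|[F,\rho]\|^2\ge 0$, and this can be of constant size. For example, take $k=2$ and a pure $\rho=\ket\psi\bra\psi$ with one Schmidt coefficient $1/2$ and the remaining weight $1/2$ spread over many Schmidt coefficients below the ramp threshold $1/a$ (possible once $d>a/2$). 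Then $A_1=\Tr_2(\rho^2)=\Tr_2(\rho)$, and $F_1=\ket{u_1}\bra{u_1}$ is the projector onto the top Schmidt vector. Hence $e_1=1/2$, but your procedure estimates $\Tr(\rho^2)-(\bra\psi F\ket\psi)^2=1-1/4=3/4$. Your estimator is therefore biased by a constant, so the lemma, which asks for additive error $\epsilon$ with $\epsilon$ arbitrary, does not follow. The other parts of your argument are fine: the plain swap test for $\Tr(\rho^2)$, the channel-level error accounting, Hoeffding, and the union bound.

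What is missing is the order of operations. To obtain $\Tr(Y\rho^2)$ for a POVM element $Y$ acting on one copy, the two copies must interact through the controlled swap \emph{before} $Y$ is measured. The relevant identity is $\Re\Tr[(Y\otimes I)S(\rho\otimes\rho)]=\Tr(Y\rho^2)$. This is what the paper does. It prepares $\ket+\bra+\otimes\rho\otimes\rho$ and applies the controlled swap. It then applies $\Phi_j$ to the $j$-th register of the first copy and measures the control in the Hadamard basis together with the flag. The $\bot$ outcome of $\Phi_j$ has Heisenberg-picture POVM element $I-F_j^2$. So the variable equal to $\pm1$ on $(\pm,\bot)$ and $0$ otherwise has mean exactly $\Tr\bigl(((I-F_j^2)\otimes I_{-j})\rho^2\bigr)=e_j$, and no separate estimate of $\Tr(\rho^2)$ is needed. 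As a side remark, your biased quantity satisfies $e_j\le\Tr(\rho^2)-\Tr(F\rho F\rho)\le 3e_j$, because $\|[F,\rho]\|\le 2\|(I-F)\rho\|$ and $(I-F)^2\preceq I-F^2$. It would therefore still support the rejection test in the dimensionality-reduction algorithm after retuning constants, but it does not prove the statement as written.
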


This is the key representation-theoretic ingredient: it removes the unknown embedding of a $q$-dimensional subspace inside $\mathbb C^d$ without ever learning that subspace. The price to pay is only that the output is preserved modulo a common local Haar average, which is sufficient for our application.

\begin{restatable}[Schur compression]{lemma}{lemmadrschurcompression}\label{lem:dr-schur}
For integers $D\ge q$ and $N$, there is a channel
\[
\mathcal C_{D\to q}^{(N)}:
\cD((\C^D)^{\otimes N})\longrightarrow\cD((\C^q)^{\otimes N})
\]
such that, for every reference system $B$, every isometry $V:\C^q\to\C^D$ and state $X\in\cD((\C^q)^{\otimes N}\otimes B)$,
\begin{equation}\label{eq:dr-schur-average}
\left(\big(\cT_q^{(N)}\circ\cC_{D\to q}^{(N)}\big) \otimes I_B\right)((V^{\otimes N}\otimes I_B)X((V^\dagger)^{\otimes N}\otimes I_B))=(\mathcal T_q^{(N)}\otimes I_B)(X),
\end{equation}
where, with $dU$ the Haar measure of $U(q)$, we let
\[
\cT_q^{(N)}(X):=\int_{U(q)}U^{\otimes N}X(U^\dagger )^{\otimes N}dU.
\]
The channel depends only on $D,q,N$ and uses no additional input copies. For $\delta>0$, it can be implemented with error at most $\delta$ in time $\poly(N,q,\log D,\log(1/\delta))$.
\end{restatable}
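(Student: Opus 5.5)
The plan is to build $\cC_{D\to q}^{(N)}$ from the Schur transform. By Schur--Weyl duality,
\[
(\C^m)^{\otimes N}\cong\bigoplus_{\lambda\vdash N,\ \ell(\lambda)\le m} Q_\lambda^{(m)}\otimes P_\lambda .
\]
Here $U^{\otimes N}$ acts as $\bigoplus_\lambda q_\lambda(U)\otimes I_{P_\lambda}$, and permutations act only on the $P_\lambda$ factors. The channel would proceed in four steps. First, apply the Schur transform $W_D$ on $(\C^D)^{\otimes N}$. Second, measure (coherently read) the label $\lambda$. If $\ell(\lambda)>q$, output a fixed state, say $\ket{0}^{\otimes N}$. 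Otherwise, trace out the $Q_\lambda^{(D)}$ register, prepare the maximally mixed state on $Q_\lambda^{(q)}$, keep the $P_\lambda$ register untouched, and apply $W_q^{\dagger}$ to return to $(\C^q)^{\otimes N}$. This is a composition of a unitary, a pinching in $\lambda$, a partial trace, an appending of a state, and an isometry. It is therefore a valid channel, and it depends only on $D,q,N$.

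For the identity \eqref{eq:dr-schur-average}, I would use two facts. The first fact concerns the image of the isometry. The image of $V^{\otimes N}$ lies in the symmetric-group-isotypic decomposition with only $\ell(\lambda)\le q$. Within sector $\lambda$ it acts as $V_{Q,\lambda}\otimes I_{P_\lambda}$, where $V_{Q,\lambda}:Q_\lambda^{(q)}\to Q_\lambda^{(D)}$ is an isometry. This holds because $V^{\otimes N}$ commutes with permutations, and it restricts the $GL$-module $Q_\lambda^{(q)}$ into $Q_\lambda^{(D)}$, using the Gelfand--Tsetlin branching $U(q)\subset U(D)$. The one point to check is that the identification of $P_\lambda$ in dimension $q$ and in dimension $D$ can be chosen compatibly. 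That is, $W_D V^{\otimes N} W_q^\dagger=\bigoplus_\lambda V_{Q,\lambda}\otimes I_{P_\lambda}$ exactly, with the same basis on $P_\lambda$. I would fix this by using the standard Young--Yamanouchi basis for $P_\lambda$ in both Schur transforms, which is dimension-independent.

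Granting the first fact, compute the output for an input $X_{AB}$ in block form $\sum_{\lambda,\mu}$, with $A=(\C^q)^{\otimes N}$. The pinching kills the off-diagonal $\lambda\ne\mu$ blocks. Tracing out $Q_\lambda^{(D)}$ removes $V_{Q,\lambda}$, since the trace is isometry-invariant. The result is $\bigoplus_\lambda \frac{I}{\dim Q^{(q)}_\lambda}\otimes \Tr_{Q_\lambda}(X_{\lambda\lambda})$ on $P_\lambda\otimes B$. The second fact is Schur's lemma, applied through the Haar twirl $\cT_q^{(N)}\otimes I_B$. Acting on $X$, it also kills the cross-sector blocks, since $\lambda\ne\mu$ are inequivalent irreps. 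It also replaces each diagonal block by $\frac{I_{Q_\lambda^{(q)}}}{\dim Q_\lambda^{(q)}}\otimes \Tr_{Q_\lambda}(X_{\lambda\lambda})$. This is the same expression, and it is already twirl-invariant, so applying $\cT_q^{(N)}$ after $\cC$ leaves it unchanged. Both sides of \eqref{eq:dr-schur-average} therefore agree. Operators on $B$ act as a spectator throughout, so the reference system poses no difficulty.

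For efficiency, I would invoke the efficient quantum Schur transform of Bacon--Chuang--Harrow (and later improvements). It runs in time $\poly(N,\log D,\log(1/\delta))$ for $D$-dimensional inputs, in qudit encoding. The maximally mixed state on $Q_\lambda^{(q)}$ can be prepared in Gelfand--Tsetlin encoding by preparing a maximally entangled state with a discarded register. This costs $\poly(N,q)$, since $\dim Q_\lambda^{(q)}\le (N+1)^{q^2}$ and the register has only $O(q^2\log N)$ qubits. Errors add over the constant number of stages. The main obstacle I expect is the basis-compatibility issue in the first fact. Concretely, it requires checking that the implemented Schur transforms for dimensions $q$ and $D$ use literally the same $P_\lambda$ basis, and that $V^{\otimes N}$ then acts trivially on that register. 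The BCH construction builds the $P_\lambda$ register from the Young tableau path via Clebsch--Gordan recursion, which is independent of dimension, so I would argue this path by path.
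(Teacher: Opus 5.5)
Your construction and correctness argument match the paper's. You apply $\Sch_D$, read $\lambda$, and output a fixed state when $\ell(\lambda)>q$. Otherwise you discard $\mathcal Q_\lambda^{(D)}$, refill $\mathcal Q_\lambda^{(q)}$, and apply $\Sch_q^\dagger$. Schur's lemma on the permutation factors then gives $\Sch_D V^{\otimes N}\Sch_q^\dagger=\bigoplus_\lambda V_\lambda\otimes I_{\mathcal P_\lambda}$ with each $V_\lambda$ an isometry. The partial trace over $\mathcal Q_\lambda^{(D)}$ removes $V_\lambda$, and the block formula for the Haar twirl finishes the proof. The basis-compatibility issue you flag is real. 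The paper settles it by stipulating, in its Schur--Weyl and circuit lemmas, that $p_\lambda$ and the $\mathcal P_\lambda$ basis do not depend on the local dimension. Your Young--Yamanouchi remark justifies that stipulation.

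The one divergence is that you refill $\mathcal Q_\lambda^{(q)}$ with the maximally mixed state instead of a fixed vector $|t_\lambda\rangle$. This is mathematically fine. It even gives the stronger identity $\cC_{D\to q}^{(N)}\bigl(V^{\otimes N}X(V^\dagger)^{\otimes N}\bigr)=\cT_q^{(N)}(X)$ without the outer twirl.

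Your cost argument for this step, however, does not hold as written. The Gelfand--Tsetlin register having only $O(q^2\log N)$ qubits does not make a uniform mixture over the valid patterns of shape $\lambda$ cheap to prepare. Naive rejection sampling has acceptance probability that is not inverse-polynomial in $(N,q)$ in general, so you would need an exact counting-based sampler.

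Since the statement already composes with $\cT_q^{(N)}$, the simplest fix is the paper's: insert a fixed basis tableau, which is trivially efficient. On the $\ell(\lambda)\le q$ branch, your maximally mixed refill is exactly that fixed-vector refill followed by the twirl.
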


Finally, implementing the Haar average is standard: it suffices to efficiently sample an approximate Haar-random unitary and apply the same sampled unitary to all $N$ copies.

\begin{restatable}[Sampling Local Unitaries \citep{mezzadri2006generate}]{lemma}{lemmadrhaar}\label{lem:dr-haar}
For integers $q,N$, and $\delta>0$, a classical
randomized routine returns a unitary circuit $\widetilde U$ such that
the averaged channel
\[
X\mapsto\mathbb E_{\widetilde U}\left[\widetilde U^{\otimes N}X (\widetilde U^\dagger)^{\otimes N}\right]
\]
implements $\mathcal T_q^{(N)}$ with error at most $\delta$. Its random bits, classical time, and circuit size are $\poly(q,\log(N/\delta))$. On qubit registers, the comparison is on the full register, extending each ideal $U\in U(q)$ by the identity on unused basis states.
\end{restatable}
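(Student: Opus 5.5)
The plan is to use Mezzadri's construction of Haar-random unitaries: QR decomposition of a complex Ginibre matrix, followed by a phase correction. Concretely, the classical routine samples a $q\times q$ matrix $Z$ with i.i.d.\ standard complex Gaussian entries, discretized to $b=O(\log(qN/\delta))$ bits of precision. It computes $Z=QR$ by Gram--Schmidt or Householder in floating point with $\poly(q,b)$ bit operations. Finally it sets $U=Q\Lambda$ with $\Lambda=\mathrm{diag}(R_{ii}/|R_{ii}|)$. By Mezzadri, the ideal output (exact Gaussians, exact arithmetic) is exactly Haar distributed on $U(q)$, so averaging $U^{\otimes N}(\cdot)(U^\dagger)^{\otimes N}$ over it gives $\cT_q^{(N)}$ exactly. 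The remaining tasks are to compile the sampled $\widetilde U$ into a circuit and to control the two sources of error: sampling/rounding error in the classical description, and compilation error in the circuit.

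The first key step is the error propagation. If $\|\widetilde U-U\|_{\mathrm{op}}\le\gamma$, then $\|\widetilde U^{\otimes N}-U^{\otimes N}\|_{\mathrm{op}}\le N\gamma$ by telescoping. Hence for any joint state $X$ with a reference $B$,
\[
\left\|\left(\widetilde U^{\otimes N}\otimes I_B\right)X\left(\widetilde U^{\dagger\otimes N}\otimes I_B\right)-\left(U^{\otimes N}\otimes I_B\right)X\left(U^{\dagger\otimes N}\otimes I_B\right)\right\|_1\le 2N\gamma.
\]
This bound is uniform in $B$, so it also holds for the diamond-norm-type definition of implementation used in \Cref{sec:channels}. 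Averaging over the randomness and using convexity of the trace norm, it suffices to couple $\widetilde U$ with an exactly Haar $U$ so that $\|\widetilde U-U\|_{\mathrm{op}}\le\delta/(4N)$, except on an event of probability at most $\delta/4$.

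The second step is this coupling, and I expect it to be the main obstacle. The difficulty is that QR with phase correction is not globally Lipschitz: it degrades when $Z$ is near-singular, since small $|R_{ii}|$ makes the phases unstable. I would handle this with a tail bound on the smallest singular value of a Ginibre matrix, $\Pr[s_{\min}(Z)\le t/\sqrt q]\le O(q\,t^2)$ or similar (Edelman's bound). Choosing $t$ polynomially small keeps this failure event below $\delta/4$. On the good event, the perturbation bounds for the QR factors (Stewart/Sun type) give $\|\widetilde U-U\|_{\mathrm{op}}\le\poly(q)\,\kappa(Z)\,2^{-b}$. The same bound covers both Gaussian truncation/discretization and floating-point rounding. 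Taking $b=O(\log(qN/\delta))$ then yields $\gamma\le\delta/(4N)$. Sampling Gaussians to $b$ bits costs $\poly(q,b)$ random bits via Box--Muller with truncated tails, and the Gaussian tail truncation adds at most another $\delta/4$ failure probability.

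The last step is circuit synthesis. The classical matrix $\widetilde U$ is compiled into a circuit on $\lceil\log_2 q\rceil$ qubits acting as $\widetilde U$ on the first $q$ basis states and as the identity on the unused ones. I would use Givens/two-level decomposition, which needs $O(q^2)$ two-level unitaries. Each two-level unitary is implemented as a multi-controlled single-qubit gate with $O(\log q)$ overhead, and each single-qubit rotation is approximated by Solovay--Kitaev to precision $\delta/(4Nq^2)$ using $\mathrm{polylog}(qN/\delta)$ gates. The errors of the individual gates add up to at most $\delta/(4N)$ in operator norm, which feeds into the step-one bound. Summing all contributions gives total error at most $\delta$. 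The random bits, classical time, and circuit size are all $\poly(q,\log(N/\delta))$, as the statement requires.
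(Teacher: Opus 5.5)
Your proposal is correct and follows the route the paper relies on: the paper gives no proof of this lemma beyond citing Mezzadri's QR-of-Ginibre sampler and noting that one applies the same sampled unitary to all $N$ copies. Your coupling argument, QR perturbation bound via the smallest singular value, and circuit compilation supply the details the paper leaves implicit; only the constants need adjusting, since as written your error contributions sum to about $2\delta$, which running the construction with $\delta/2$ fixes.
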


With these implementations of the ideal channels, we are ready to present the main algorithm.

\begin{algorithm}[!t]
\caption{Dimensionality Reduction}
\label{alg:dimreduction}
\begin{algorithmic}[1]
    \Require Copies of $\rho\in\cD((\C^d)^{\otimes k})$ and a low-dimensional tester $\cA$ for $\widetilde\rho\in\cD((\C^q)^{\otimes k})$
    \State For all $j\in [k]$ estimate $e_j$ with precision $\epsilon=O(\eta^2/k)$ and constant probability to obtain $\widehat e_j$ with \Cref{lem:dr-estimate}
    \If{$\max_{j\in[k]}\widehat e_j\ge\Omega(\eta^2/k)$}\label{line:firsttest}
        \State\textbf{reject}
    \EndIf
    \State Assign the $N$ copies of $\rho$ to registers $Y_1^{(n)},\ldots, Y_{k}^{(n)}$ for $n\in[N]$.
    \State For each local factor $j$ and any of the $N$ copies apply $\Phi_j$ as in \Cref{lem:dr-filter} (using additional fresh copies for its execution) with error $\delta=O(1/(kN))$\label{line:filtering} 
    \State Apply the compression channel $\cC^{(N)}_{d+1\to q}$ of \Cref{lem:dr-schur} to each local factor $j$ on the registers $Y_j^{(1)},\ldots,Y_j^{(N)}\in \C^{d+1}$ with error $O(1/k)$ and label the output registers as $Z_j^{(1)},\ldots,Z_j^{(N)}\in \C^q$ \label{line:compression}
    \State For each local factor $j$ apply \Cref{lem:dr-haar} independently with error $O(1/k)$, a Haar-random unitary $U_j\in U(q)$. Apply the same $U_j$ to all $N$ registers $Z_j^{(1)},\ldots,Z_j^{(N)}$ \label{line:haar}
    \State Regroup the registers $Z_1^{(n)},\ldots,Z_k^{(n)}$ by copy $n\in [N]$ and run the low-dimensional tester $\cA$
    \Return $\{\textbf{reject},\textbf{accept}\}$ according to the answer of $\cA$ \label{line:lowdimensionaltest}
\end{algorithmic}
\end{algorithm}
\subsection{Proof of \Cref{th:dimension-reduction}}

First we argue that separable inputs pass step 1 of the algorithm.
Suppose $\rho\in\Sep(d,k)$.
Fix any $j\in[k]$ and write with $x_a\in\C^d$ the $j$-th register and $y_a\in(\C^d)^{\otimes(k-1)}$ the remaining ones, so that
\[
\rho=\sum_a p_a\ket{x_a\otimes y_a}\bra{ x_a\otimes y_a},
\quad p_a\ge0,\quad \sum_a p_a=1,\quad
\|x_a\|=\|y_a\|=1.
\]
For the actual implementation of $f$ we pick $f=f_a$ with $a=2^{11}k^2/\eta^4$. %
Since $f(t)=1$ for $t\ge\eta^4/(2^{10}k^2)$, it is simple to check that for all $t\in(0,1)$
\begin{align}\label{eq:tmp4872}
t(1-f(t)^2)^2\le\frac{\eta^4}{2^{10}k^2}.
\end{align}
Consequently, plugging in the expansion of $\rho$ into the definition of $e_j:=\Tr((I-F_j^2)A_j) $ we obtain
\begin{align*}
    e_j&= \sum_a p_a\Re[\langle x_a\otimes y_a|((I_j-F_j^2)\otimes I_{-j})\rho\ket{x_a\otimes y_a}]\\
    &\le\sum_a p_a
\left\|\bra{x_a}((I_j-F_j^2)\otimes I_{-j})
\rho\right\|\\
&=\sum_a p_a\sqrt{\bra{x_a}(I_j-F_j^2)^2{A_j\ket{x_a}}}
\le\frac{\eta^2}{32k},
\end{align*}
where in the last inequality we used \Cref{eq:tmp4872} applied to the eigenvalues. Thus, whenever all estimates have the requested accuracy, we know that
\(
\widehat e_j\le O(\eta^2/k),
\)
and we know that observing a larger $\widehat e_j$ means that we can safely reject. If otherwise the algorithm does not reject at that step we know that $e_j\le O(\eta^2/k)$. Next, we argue that the filtering step preserves the distance to the set of separable states.

Define the projection onto the original registers by
\(
P:=(I_{\C^d}\oplus0_{\ket{\bot}})^{\otimes k}.
\)
Identifying the range of $P$ with $(\C^d)^{\otimes k}$ and defining
\[
\sigma = (\Phi_1\otimes \cdots\otimes \Phi_k)(\rho)
\]
the channel formula
of \Cref{eq:dr-filter-channel} in \Cref{lem:dr-filter} gives
\[
P\sigma P=\big(\bigotimes_j F_j\big)\rho\big(\bigotimes_j F_j\big).
\]
The inequalities $0\preceq F_j\preceq I$ give
\begin{align*}
\Big\|\big(I-\bigotimes_j F_j\big)\rho\Big\|^2
&=\Tr\left[\big(I-\bigotimes_j F_j\big)^2\rho^2\right]\\
&\le\Tr\left[\big(I-\bigotimes_j F_j^2\big)\rho^2\right]\tag{$0\preceq \otimes_j F_j\preceq I$}\\
&\le\sum_{j}\Tr\left[((I-F_j^2)\otimes I_{-j})\rho^2\right]\tag{$I-\otimes_{j}F^2_j\preceq\sum_{j}(I-F_j^2)\otimes I_{-j}$} \\
&=\sum_j\Tr\left[\bigl(I-F_j^2\bigr)A_j\right]=\sum_j e_j.
\end{align*}
Now, by triangular inequality, we can bound the error between $\rho$ and the application of the filtering $\Phi_j$:
\begin{equation}\label{eq:dr-filter-close}
\begin{aligned}
\|\rho-P\sigma P\|
&\le
\Big\|\big(I-\bigotimes_j F_j\big)\rho\Big\|+\Big\|\big(\bigotimes_j F_j\big)\rho
\big(I-\bigotimes_j F_j\big)\Big\|\\
&\le2\Big\|\big(I-\bigotimes_j F_j\big)\rho\Big\|
\le2\sqrt{\sum_j e_j}\le O(\eta).
\end{aligned}
\end{equation}
Now we finally prove that $\sigma=(\Phi_1\otimes\cdots\otimes\Phi_k)(\rho)$ satisfies the same separability promises as $\rho$.
For all $\zeta\in\Sep(d+1,k)$, the operator $P\zeta P$, restricted to the original registers, is subnormalized and separable. Thus $0\le\Tr(P\zeta P)\le1$. Define the restricted separable state
\[
\tilde\zeta =P\zeta P+(1-\Tr(P\zeta P))\frac I{d^k}\in\Sep(d,k),
\]
and observe that
\[
\|\rho-\tilde\zeta\|^2=\|\rho-P\zeta P\|^2-(1-\Tr(P\zeta P))^2\frac{1}{d^k},
\]
which implies that
\begin{align*}
\dist(\rho,\Sep(d,k))&\le \|\rho-\tilde\zeta\|\\
&\le\|\rho-P\zeta P\|\\
&\le \|\rho-P \sigma P\|+\|P(\sigma- \zeta) P\|\\
&\le O(\eta)+\|\sigma-\zeta\|.
\end{align*}
Minimizing over $\zeta\in\Sep(d+1,k)$ proves
\begin{equation}
\dist(\sigma,\Sep(d+1,k))
\ge\dist(\rho,\Sep(d,k))-O(\eta).
\end{equation}
On the other hand, since each $\Phi_j$ is a local channel, $(\Phi_1\otimes\cdots\otimes\Phi_k)(\Sep(d,k))
\subseteq\Sep(d+1,k)$ and therefore
 \[
\rho\in\Sep(d,k)\implies
\sigma\in\Sep(d+1,k).
\]
Finally, we connect the supplied separability tester with the last two subroutines. More precisely we show that there are unknown local isometries such that there is a small state $\tilde\rho$ which satisfies the same separability promises of $\rho$. We do this by first proving that the rank of the filtering step is small.

Since $f(t)>0$ precisely when $t>\eta^4/(2^{11}k^2)$,
\[
\frac{\eta^4}{2^{11}k^2}\operatorname{rank}F_j
<\Tr(A_j)=\Tr(\rho^2)\le1.
\]
The definition of $\Phi_j$ in \Cref{eq:dr-filter-channel} also gives
\[
\begin{aligned}
\operatorname{supp}\Tr_{-j}
(\sigma)\subseteq\operatorname{img}F_j
\oplus\text{Span}\{\ket\bot\},
\end{aligned}
\]
and therefore
\[
\begin{aligned}
\operatorname{rank}\Tr_{-j}(\sigma)\le1+\operatorname{rank}F_j\le O(k^2/\eta^4)=q.
\end{aligned}
\]
Thus we know that there are isometries $V_j:\C^q\to\C^{d+1}$ with $V_j^{\dagger}V_j=I$ such that for all $j\in[k]$:
\[
\operatorname{supp}\Tr_{-j}(\sigma)\subseteq\operatorname{img}V_j
\]

and define,
\[
\widetilde\rho:=
\Big(\bigotimes_jV_j^{\dagger}\Big)\sigma\Big(\bigotimes_jV_j\Big).
\]

This choice of $V_j$ implies
\(
\sigma=\Big(\bigotimes_jV_j\Big)\widetilde\rho\Big(\bigotimes_jV_j^{\dagger}\Big),
\)
and
\(
\Tr\widetilde\rho=%
1.
\)
Therefore
\[
\sigma\in\Sep(d+1,k)\implies\widetilde\rho\in\Sep(q,k).
\]

Now we show that the distance of $\tilde\rho$ from separable states is preserved. Indeed, for every $\sigma'\in\Sep(q,k)$,
\[
\Big(\bigotimes_jV_j\Big)\sigma'\Big(\bigotimes_jV_j^{\dagger}\Big)
\in\Sep(d+1,k),
\]
and
\[
\|\widetilde\rho-\sigma'\|
=\Big\|\sigma-
\Big(\bigotimes_jV_j\Big)\sigma'\Big(\bigotimes_jV_j^{\dagger}\Big)
\Big\|.
\]

Which implies that,
\[
\dist(\widetilde\rho,\Sep(q,k))
\ge\dist(\sigma,\Sep(d+1,k)).
\]

Combined with the relationship between the separability of $\rho$ and $\sigma$ we obtain that
\begin{align}
&\rho\in\Sep(d,k)\implies\sigma\in\Sep(d+1,k)\implies\widetilde\rho\in\Sep(q,k)\notag\\
&\hspace{5cm}\text{and}\label{eq:dr-distance}\\
&\dist(\widetilde \rho,\Sep(q,k))\ge\dist(\sigma,\Sep(d+1,k))-O(\eta)\ge\dist(\rho,\Sep(d,k))-O(\eta)\notag.
\end{align}

The last step concerns the application of the Haar averaging together with the compression step of \Cref{lem:dr-schur} after regrouping the register for each party. Applying the compression channel $\cC_{d+1\to q}^{(N)}$ and the ideal Haar averaging channel to each party independently, we obtain
\begin{align*}
    \left(\bigotimes_{j\in[k]}\Big(\cT_q^{(N)}\circ \cC_{d+1\to q}^{(N)}\Big)\right)(\sigma^{\otimes N}) &= \left(\bigotimes_{j\in[k]}\cT_q^{(N)}\right)(\widetilde\rho^{\otimes N})\\
    &=\int_{U(q)^k}\left[(\otimes_{j\in[k]} U_j)\widetilde \rho  (\otimes_{j\in[k]} U_j^\dagger)\right]^{\otimes N} dU_1\cdots dU_k.
\end{align*}
Moreover, since local unitaries preserve separability:
\[
\dist\left((\otimes_{j\in[k]} U_j)\widetilde \rho  (\otimes_{j\in[k]} U_j^\dagger),\Sep(q,k)\right) = \dist\left(\widetilde \rho  ,\Sep(q,k)\right),
\]
and thus every term in the integrand preserves distance from separable states and by \Cref{eq:dr-distance} the low-dimensional tester $\cA$ on $\left[(\otimes_{j\in[k]} U_j)\widetilde \rho  (\otimes_{j\in[k]} U_j^\dagger)\right]^{\otimes N}$, returns the correct answer with constant probability. %
The implementations of the ideal channels $\cC_{d+1\to q}^{(N)}$ and $\cT_q^{(N)}$ are discussed in \Cref{lem:dr-schur} and \Cref{lem:dr-haar}, respectively. 

\paragraph{Resource Management.}

Finally, we bound the error and the total resources used by the algorithm. By \Cref{lem:dr-estimate}, 
\[
\max_j|\widehat e_j-e_j|<O(\eta^2/k),
\]
holds with constant probability.
All later target and auxiliary copies are fresh and independent of
these estimates. Let $\widehat\Omega$ be the joint output of the $kN$
filtering calls. Applying \Cref{lem:dr-filter} successively, with all
other registers as reference registers, gives
\[
\|\widehat\Omega-\sigma^{\otimes N}\|_1\le O(1).
\]
The two remaining contributions follow from \Cref{lem:dr-schur,lem:dr-haar}. The total change in acceptance probability, including estimation failure, is at most constant.
Consequently, choosing the constants accordingly, if $\rho\in\Sep(d,k)$ the probability of accepting can be made at least $2/3$, while making the probability of accepting in the case $\dist(\rho,\Sep(d,k))\ge\eta$ at most $1/3$.

Substituting $q=O(k^2/\eta^4)$ and the requested accuracies in the four
subroutine bounds gives $\poly(k,N,\eta^{-1})$ copies and total time
$T+\poly(k,N,\eta^{-1},\log d)$; $\mathcal A$ is only run once.

\printbibliography
\newpage
\appendix
\section{Proofs Omitted from \Cref{sec:frombakshi}}

\subspacetomography*
\begin{proof}
    Let $F:\C^D\to (\C^q)^{\otimes m}$ be defined as $F\ket\ell=\ket{a_\ell}$. Then $F^\dagger F = I_D$ and $FF^\dagger =P$, since the $a_\ell$ are orthonormal. Moreover, because vectors $a_\ell$ are explicitly given, the isometry $F$, together with the projection onto its image, can be implemented in time $\poly(m,q,D)$.
    
    Apply standard Frobenius tomography (such as the one of \citet{flammia2024quantum}) to $\omega'=(F^\dagger \omega F)\oplus(1-\Tr(P\omega))\ket{\bot}\bra{\bot}$, which is a density matrix of $\C^D\oplus\textnormal{Span}\{\ket\bot\}$ since $\Tr(F^\dagger \omega F)=\Tr(FF^\dagger\omega)=\Tr(P\omega)$. So we get back a classical description $\widehat \omega'$ such that 
    \[
    \|\omega'-\widehat\omega'\|\le\epsilon,
    \]
    in time and copies $O(D^2\epsilon^{-2}\log 1/\delta)$. Let $Q$ be the projector from $\C^D\oplus\textnormal{Span}(\ket\bot)$ to $\C^D$, and define $\widehat\omega=FQ\widehat\omega'QF^\dagger$. Since $FQ\omega'QF^\dagger = F(F^\dagger\omega F)F^\dagger=P\omega P$,
    and thus %
    \begin{align*}
        \|\widehat \omega-P\omega P\|_{op}& = \|FQ(\widehat \omega'-\omega' )QF^\dagger\|_{op}\\
        &\le \|Q(\widehat \omega'-\omega') Q\|_{op}\\
        &\le\|\omega'-\widehat\omega'\|\le \epsilon,
    \end{align*} %
    concluding the proof.
\end{proof}

\lemmapolynomialopt*

\begin{proof}
    Fix $\tau=O(\epsilon/D^2)$ and $\gamma=O(\epsilon/D)$. For every $j\in[D]$, we flatten the tensor $T_j\in E^{\otimes j}$ along the $s\in[j]$ coordinate giving the linear map 
    \[
    A_{j,s}:E\to E^{\otimes(j-1)},
    \]
    and let $A=\bigoplus_{j\in[D]}\bigoplus_{s\in[j]} A_{j,s}$, which is a linear map $A:E\to \bigoplus_{j\in[D]}\bigoplus_{s\in[j]}E^{\otimes(j-1)}$.

    The Frobenius norm of $A$ can be upper bounded as
    \[
    \|A\|^2=\sum_{j=1}^D\sum_{s\in[j]}\|A_{j,s}\|^2\le \sum_{j=1}^Dj\|T_j\|^2\le D.
    \]

    Now we can apply \Cref{lem:onecutcompression} with accuracy $\tau$ that returns a subspace $W\subseteq E$ such that $\dim(W)\le (D/\tau)^2$ and $\|A(I-P_W)\|_{op}\le \tau$. Now let $Q_{j,s}$ be the orthogonal projection of $A$ onto the $(j,s)$ direct sum component of $A$, i.e., $A_{j,s}= Q_{j,s}A$. Then
    \begin{align*}
        \|A_{j,s}(I-P_W)\|_{op}&=\|Q_{j,s}A(I-P_W)\|_{op}\\
        &\le \|A(I-P_W)\|_{op}\\
        &\le \tau,
    \end{align*}
    since $Q_{j,s}$ is a projection and thus $\|Q_{j,s}\|_{op}\le1$.
    Consequently $y\mapsto p(y)$ is uniformly approximated by the uniform projection of $y$ to $W$ as %
    \begin{align}
        |p(y)-p(P_Wy)|&\le\sum_{j=0}^D|\ip{T_j}{(I-P^{\otimes j}_W)y^{\otimes j}}|\notag\\
        &\le \sum_{j=0}^D\sum_{s=1}^j \|A_{j,s}(I-P_W)\|_{op}\le \tau D^2\label{eq:tmp846}.
    \end{align}

    Let $y^*$ be the maximizer of $e^{-\phi(y)}p(y)$ and let $\cG$ be a $\gamma$-net of $B_W$ (in $\ell_2$-norm). Since $\dim(W)\le (D/\tau)^2$, we know that $|\cG|\le \gamma^{-O((D/\tau)^2)}$. Thus there exists a $g^*\in \cG$ such that $\|g^*-P_Wy^*\|\le \gamma$.

    For every $g\in \cG$ solve the following optimization problem
    \begin{align*}
        &\min\phi(y)\\
        &\text{s.t.}\quad y\in \cK,\|P_Wy-g\|\le \gamma.
    \end{align*}
    and obtain $y_g$ if the problem is feasible. Observe that the problem is feasible for $g=g^*$ (with feasible point $y^*$) and $\phi(y_{g^*})\le \phi(y^*)$. By the triangle inequality, we also have that
    \begin{align}\label{eq:tmptriang}
    \|P_W y_{g^*}-P_Wy^*\|\le \|P_W y_{g^*}-g^*\|+\|g^*-P_Wy^*\|\le 2\gamma.
    \end{align}

    Next we show that $p$ is $D$-Lipschitz. Indeed, we can write 
    \[
    y^{\otimes j}-(y')^{\otimes j}=\sum_{s=1}^j y^{\otimes (s-1)}\otimes(y-y')\otimes (y')^{\otimes (j-s)},
    \]
    and thus 
    \begin{align*}
       |p(y)-p(y')|&\le\sum_{j=0}^D|\ip{T_j}{y^{\otimes j}-(y')^{\otimes j}}|\\
       &\le \sum_{j=0}^D \sum_{s=1}^j \|T_j\|\|y-y'\|\\
       &\le \sum_{j=0}^D j \|T_j\|\|y-y'\|\\
       &\le D \|y-y'\|.
    \end{align*}
    This lets us conclude easily that $p(y_{g^*})$ is close to $p(y^*)$. 
    \begin{align}
        |p(y_{g^*})-p(y^*)|&\le |p(y_{g^*})-p(P_Wy_{g^*})|+|p(P_Wy_{g^*})-p(P_Wy^*)|+|p(P_Wy^*)-p(y^*)|\notag\\
        &\le \tau D^2+D\|P_W(y_{g^*}-y^*)\|+\tau D^2\tag{\Cref{eq:tmp846}}\\
        &\le 2\tau D^2+2D\gamma\tag{\Cref{eq:tmptriang}}\\
        &\le O(\epsilon)\label{eq:tmp478}
    \end{align}

    Moreover, since $\OPT\ge \epsilon$ and $\phi(y)\ge0$ we also obtain that $p(y^*)\ge \epsilon$, and \Cref{eq:tmp478} implies that $p(y_{g^*})\ge p(y^*)- O(\epsilon)>0$. We can combine this with the already observed fact that: $e^{-\phi(y_{g^*})}\ge e^{-\phi(y^*)}$ to conclude that
    \begin{align*}
        e^{-\phi(y_{g^*})}p(y_{g^*}) \ge e^{-\phi(y_{g^*})}p(y^*)-O(\epsilon) \ge \OPT-O(\epsilon),
    \end{align*}
    showing that there is a $g\in \cG$ which returns a $y_g$ with objective greater than $\OPT-O(\epsilon)$. The proof is concluded by showing that $|\cG|\le (\epsilon/D)^{-O((D^3/\epsilon)^2)}=O_{\epsilon,D}(1)$ and that every execution of the convex program is polynomial in the dimension of $E$.
\end{proof}

\lemmaLS*

\begin{proof}

   We are going to divide the proof into several steps.

   Then define $G=1+\sum_{\ell=1}^T|c_\ell|$ and note that $\|\omega-K\|_{op}\le 1+\sum_{\ell=1}^T|c_\ell|:=G$.
    
   \paragraph{Relevant $x$ are represented in $\cK_S$.}
    
   Let $x=\otimes_{i\in[m]}x_i\in R(a,\alpha)$ and write it as
   \[
   x_i=\frac{a_i+z_i}{\sqrt{1+\|z_i\|^2}},
   \]
   for all $i<m$, where $z_i\in a_i^\bot$. The fact that $x\in R(a,\alpha)$ means that it cannot be that some $x_i$ is orthogonal to $a_i$, otherwise we would have that $\ip{a}{x_{[m-1]}}=0$ contradicting that $x\in R(a,\alpha)$.

   Moreover, since $x\in R(a,\alpha)$, we get the following implications
   \begin{align*}
       |\ip{a}{x_{[m-1]}}|^2&=\prod_{i<m} |\ip{a_i}{x_i}|^2\\
       &= \prod_{i<m}\frac{1}{1+\|z_i\|^2}\ge \alpha,
   \end{align*}
   and thus $\sum_{i<m}\log(1+\|z_i\|^2)\le \log(1/\alpha)$. Define $S=\{i<m:\|z_i\|>1\}$. Every index $i\in S$ contributes at least $\log(2)$ and thus there can be at most $\log(1/\alpha)/\log(2)\le2\log(1/\alpha)$.

   On the other hand for all $i\in[m-1]\setminus S$ we have $\|z_i\|\le 1$ and thus
   \[
   \sum_{i\in [m-1]\setminus S} \|z_i\|^2\le 2\sum_{i\in [m-1]\setminus S} \log(1+\|z_i\|^2)\le 2\log(1/\alpha),
   \]
   and thus, setting $u_i=x_i$ for all $i\in S\cup\{m\}$ we have that $(z,u)\in \cK_S$ for some $S\subseteq[m-1]$ and $|S|\le 2\log(1/\alpha)$.

   \paragraph{Distance from the subspace $\cV_S$}

   Fix any $S$ and let $(z,u)\in \cK_S$ with $\|u_i\|=1$, and consider the normalized product vector
   \[
   x(z,u) = \bigotimes_{i\in[m-1]\setminus S}\frac{a_i+z_i}{\sqrt{1+\|z_i\|^2}}\otimes \bigotimes_{i\in S\cup\{m\}} u_i.
   \]
   If we expand the first tensor product, we can write it as:
   \[
    x(z,u) = \frac{1}{\prod_{i\in[m-1]\setminus S}\sqrt{1+\|z_i\|^2}}\sum_{R\subseteq[m-1]\setminus S}\left(\bigotimes_{i\in ([m-1]\setminus S)\setminus R}a_i\otimes \bigotimes_{i\in R }z_i\otimes \bigotimes_{i\in S\cup\{m\}} u_i\right).
   \]
   The projection $P_S$ on the subspace $\cV_S^h$ precisely retains all the components with $|R|<h$, and thus
   \[
   \|(I-P_S)x(z,u)\|^2=\frac{1}{\prod_{i\in[m-1]\setminus S}({1+\|z_i\|^2})}\sum_{t>h}\sum_{R\subseteq[m-1]\setminus S:|R|=t}\prod_{i\in R}\|z_i\|^2\le \sum_{t>h}\sum_{R\subseteq[m-1]\setminus S:|R|=t}\prod_{i\in R}\|z_i\|^2.
   \]
   Now we can use the usual estimate $\sum_{R\subseteq[n]:|R|=t}\prod_{i\in R}s_i\le\frac{(\sum_{i\in[n]}s_i)^t}{t!}$, applied to $s_i=\|z_i\|^2$ on the set $[m-1]\setminus S$, and, recalling that $\sum_{i\in [m-1]\setminus S}\|z_i\|^2\le 2\log(1/\alpha)$
   \begin{align*}
   \|(I-P_S)x(z,u)\|^2&\le \sum_{t>h} \frac{(2\log(1/\alpha))^t}{t!}\\
   &\le \sum_{t>h} \left(\frac{2e\log(1/\alpha)}{t}\right)^t\\
   &\le \sum_{t>h} \left(\frac{1}{2}\right)^t\tag{$h\ge 4e\log(1/\alpha)$}\\
   &\le 2^{-h}\\
   &\le \left(\frac{\epsilon}{32 G}\right)^2\tag{$h>4\log(32 G/\epsilon)$}
   \end{align*}

   \paragraph{Classical Optimization Approximates Quantum Objective.}

   Suppose subspace tomography returns a $\widehat\omega$ such that
   \begin{align}\label{eq:tomogtmp}
   \|\widehat \omega-P_S\omega P_S\|_{op}\le \epsilon/16.
   \end{align}
We want to show that for all $(z,u)\in \cK_S$ with $\|u_i\|=1$ the value $\Tr((\omega-K)\ket{x}\bra{x})$, with $x=x(z,u)$, approximates $F_S(z,u)$:
    \begin{align*}
        |\Tr((\omega-K)(\ket{P_Sx}\bra{P_Sx}-\ket{x}\bra{x}))|&\le |\Tr((\omega-K)\ket{(P_S-I)x}\bra{P_Sx})|+|\Tr((\omega-K)\ket{x}\bra{(P_S-I)x})|\\
        &\le 2\|\omega-K\|_{op}\|(P_S-I)x\|\|x\|\\
        &\le 2G\|(P_S-I)x\|\\
        &\le \frac{\epsilon}{16},
    \end{align*}
    where in the first inequality we used that $\ket{a}\bra{a}-\ket{b}\bra{b}=\ket{a-b}\bra{a}+\ket{b}\bra{a-b}$

    Consider now $F_S(z,u)$. Recall that $x(z,u)=e^{-\Phi(z)/2}[\bigotimes_{i\in[m-1]\setminus S}(a_i+z_i)\otimes\bigotimes_{i\in S\cup \{m\}}u_i]$ since $e^{-\Phi(z)/2}=\frac{1}{\prod_{i\in[m-1]\setminus S}\sqrt{1+\|z_i\|^2}}$ and thus $P_Sx(z,u) = e^{-\Phi(z)/2}P_S w_S(z,u)$, whereas (recalling that $H_S=\widehat \omega-P_SKP_S$):
    \begin{align*}
        F_S(z,u) &= e^{-\Phi(z)} \Tr(H_S\ket{w_S(z,u)}\bra{w_S(z,u)})\\
        &= \Tr(H_S\ket{e^{-\Phi(z)/2}w_S(z,u)}\bra{e^{-\Phi(z)/2}w_S(z,u)})\\
        &= \Tr(H_S\ket{P_Sx(z,u)}\bra{P_Sx(z,u)}).
    \end{align*}
    Finally,
    \begin{align*}
        |F_S(z,u) -\Tr((\omega-K)\ket{x}\bra{x})|&\le |F_S(z,u) -\Tr((\omega-K)\ket{P_Sx}\bra{P_Sx})|+\frac{\epsilon}{16}\\
        &=|\Tr((\widehat\omega-P_SKP_S-(\omega-K))\ket{P_Sx}\bra{P_Sx})|+\frac{\epsilon}{16}\\
        &=|\Tr((\widehat\omega-P_SKP_S-P_S(\omega-K)P_S)\ket{P_Sx}\bra{P_Sx})|+\frac{\epsilon}{16}\\
        &\le \|\widehat \omega-P_S\omega P_S\|_{op}\|P_Sx\|^2+\frac{\epsilon}{16}\\
        &\le \frac{\epsilon}{8}
    \end{align*}
    where in the last inequality we used \Cref{eq:tomogtmp}.

    \paragraph{Correctness.} Let $x^*\in R(a,\alpha)$ be an optimizer and let $(z^*,u^*)\in \cK_S$ for the correct set $S=S^*$ for $x^*$. Assume that the subspace tomography algorithm returns a $\widehat \omega$ (which holds with high probability).
    
    Let $(\hat z,\hat u)\in \cK_{S^*}$ be the values returned by the polynomial optimization subroutine (\Cref{lem:classicopt}) and thus, using the previous steps, we obtain
    \begin{align*}
        F_{S^*}(\hat z,\hat u)&\ge F_{S^*}( z^*,u^*)-\frac{\epsilon}{8}\\
        &\ge \Tr((\omega-K)\ket{x^*}\bra{x^*})-\frac{\epsilon}{4}.
    \end{align*}

    Observe that by assumption $\Tr((\omega-K)\ket{x^*}\bra{x^*})\ge \epsilon$ and thus $F_{S^*}(\hat z,\hat u)\ge\epsilon/2>0$. This means that the unnormalized value is itself strictly greater than zero, since $e^{-\Phi(z)}>0$, i.e., $\Tr(H_{S^*}\ket{w_{S^*}(\hat z,\hat u)}\bra{w_{S^*}(\hat z,\hat u)})>0$, but this implies that also all $\hat u_i$ are nonzero. We then build the physical vector $\hat x=x(\hat z, \tilde u)$, where $\tilde u=\otimes_{i\in S\cup\{m\}}\hat u_i/\|\hat u_i\|$ and, using again that the $F_S$ approximates the true value, we get
    \begin{align*}
        \Tr((\omega-K)\ket{\hat x}\bra{\hat x})&\ge F_{S^*}(\hat z, \tilde u)-\frac{\epsilon}{8} \\
        &\ge F_{S^*}(\hat z, \hat u)-\frac{\epsilon}{8}\tag{Homogeneity and $\|\hat u_i\|\le 1$}\\
        &\ge \Tr((\omega-K)\ket{x^*}\bra{x^*})-\frac{3\epsilon}{8}.
    \end{align*} 

    For each $S$, we compute $F_S(\hat z,\hat u)$, and if it exceeds $\epsilon/2$, we estimate it using copies of $\omega$ with accuracy $\epsilon/4$. Then we return the one with the highest value (including the fallback option $a\otimes b$ when all $F_S(\hat z,\hat u)$ do not pass the $\epsilon/2$ threshold).

    Assuming that we are in the good event, the optimum is greater than $\epsilon$, and all estimates hold, i.e., $|\Tr((\omega-K)\ket x\bra x)-v_x|\le \epsilon/4$ for all $x\in \cL$, then applying twice the fact that all estimates hold, we find that
    \[
    \Tr((\omega-K)\ket{\hat x}\bra{\hat x})\ge \Tr((\omega-K)\ket{ x^*}\bra{ x^*})-\epsilon.
    \]

    \paragraph{Probability and Resource Bounds.} The number of possible sets $S\subseteq[m-1]$ with $|S|\le 2\log(1/\alpha)$ is at most $M=m^{O(\log(1/\alpha))}$, and thus a union bound over all tomography events and all estimates gives a total success probability of at least $1-O(M\delta)$.

    On the other hand, the total number of copies required for tomography is $O(\max_{S}\dim(\cV_S^h)^2M\log(M/\delta)/\epsilon^2)$, and $O(M\log(M/\delta)/\epsilon^2)$ for the estimation step. It is not difficult to upper bound $\max_{S}\dim(\cV_S^h)$ with $O((mq)^{O(\log(2G/(\alpha\epsilon)))})$. Thus the total number of copies is upper bounded by
    \[
    O_{\alpha,\epsilon,G}(\poly(m q)\log(1/\delta)).
    \]
   
\end{proof}

\coverquantum*
\begin{proof}
    In the following, we assume that every call to $\LS$ gives the guarantees it can supply only with high probability. Each success has probability $1-\delta$, and in the $j$-th loop of the algorithm we make at most $O(\theta^{-2})$ calls. So the following holds on a good event, which holds with probability at least $1-O(k\delta\theta^{-2})$.
    \paragraph{Every added $c\in C_j$ has high overlap.} Suppose $c$ was added to $C_j$. This was done since there was an $a\in C_{j-1}$ such that $v_a\ge 3\theta/4$ and since, by definition of $\LS$ this was accurate up to $\theta/16$ we get that
    \begin{align}\label{eq:tmp098}
    \Tr((\rho^{[j]}-\varphi_j)\ket{c}\bra{c})\ge v_a-\frac{\theta}{16}\ge \frac\theta2,
    \end{align}
    which also implies that 
    \begin{align}\label{eq:tmp097}
        \Tr(\rho^{[j]}\ket{c}\bra{c})\ge \frac\theta2,
    \end{align}
    since $\Tr(\varphi_j\ket c\bra c)\ge 0$.

    \paragraph{Effect of the Regularizer $\varphi_j$.} Now we show that $\varphi_j$ pushes the element added to $C_j$ to be nearly orthogonal. Using \Cref{eq:tmp098} and the fact that $\rho^{[j]}$ is a density matrix, we can conclude that
    \begin{align*}
        1&\ge \Tr(\rho^{[j]}\ket{c}\bra{c})\tag{$\rho^{[j]}\in\cD((\C^d)^{\otimes j})$}\\
        &\ge \Tr(\varphi_j\ket{c}\bra{c}) \tag{\Cref{eq:tmp098}}\\
        &=\lambda\sum_{c'\in C_j}\Tr(\ket{c'}\bra{c'}\ket{c}\bra{c})\\
        &=\lambda\sum_{c'\in C_j}|\ip{c'}{c}|^2,
    \end{align*}
    from which it follows that each $c$, when added to the set $C_j$ we are building, satisfies that
    \begin{align}\label{eq:095}
    |\ip{c'}{c}|^2\le (\theta/8)^2.
    \end{align}

    \paragraph{Stopping Condition.}
    We now show that the cap of $t\le 8/\theta$ added in the inner loop of \Cref{alg:cover} is not reached on a good event and the $j$-th loop terminates because $\textsf{Stop}=\textsf{true}$. We prove this by contradiction. Suppose that for the first $N$ iterations of the inner loop we always had $\max_{a\in C_{j-1}}v_a\ge \frac34\theta$ and let $C_{j} = \{c_1,\ldots, c_N\}$. From \Cref{eq:tmp097} we get that
    \begin{align*}
        \frac{N\theta}{2}&<\sum_{t=1}^N\Tr(\rho^{[j]}\ket{c_t}\bra{c_t})\le  \left\|\sum_{t=1}^N\ket{c_t}\bra{c_t}\right\|_{op},
    \end{align*}
    We upper bound this by the standard fact that unit vectors with small overlap (\Cref{eq:095}) have small Gram operator norm.
    
    Define the Gram matrix $G$ such that $G_{st}=\ip{c_s}{c_t}$. Since $c_t$ are unit $G_{tt}=1$ and \Cref{eq:095} shows that $|\ip{c_t}{c_s}|\le \theta/8$ for $s\neq t$. Therefore, by the Gershgorin disk theorem (and since $G$ is Hermitian and its eigenvalues are real), all eigenvalues lie within a radius of $\sum_{s:s\neq t}|G_{ts}|\le \theta(N-1)/8$ of $G_{tt}=1$ and thus the maximum eigenvalue of $G$ is bounded by 
    \[
    1+\theta(N-1)/8.
    \]
    This is useful since $\sum_{t=1}^N\ket{c_t}\bra{c_t}$ and $G$ have the same spectrum (in the support) and $\|\cdot\|_{op}$ is just the norm of the largest eigenvalue since $\sum_{t=1}^N\ket{c_t}\bra{c_t}$ is Hermitian. This shows that 
    \[
    \frac{N\theta}{2}\le 1+\theta(N-1)/8,
    \]
    and rearranging we obtain $N\le 8/(3\theta)<8/\theta$.

    \paragraph{Coverage.} 
    Next we show that if $x=\otimes_{k\in[j]}x_k$ is a $j$-partite pure product state (with $\Tr(\rho^{[j]}\ket{x}\bra{x})\ge\theta$) that is not covered by $C_j$ then the conditions for $\LS$ would have been satisfied and there would have been an $\hat x_a$ such that $v_a\ge \frac34\theta$ contradicting the fact that (on a good event) the cap condition $t\le 8/\theta$ is not triggered. We do this by induction on $j$. The base case $j=0$ is trivial.

    First, we show that if $x$ has non-negligible overlap with $\rho^{[j]}$ then $x_{[j-1]}:=\otimes_{k\in[j-1]} x_k$ has non-negligible overlap with $\rho^{[j-1]}$. Indeed,
    \begin{align*}
        \Tr(\rho^{[j-1]}\ket{x_{[j-1]}}\bra{x_{[j-1]}})&= \Tr(\rho^{[j]}(\ket{x_{[j-1]}}\bra{x_{[j-1]}}\otimes I))\tag{By def.~of partial trace }\\
        &\ge \Tr(\rho^{[j]}(\ket{x_{[j-1]}}\bra{x_{[j-1]}}\otimes \ket{x_j}\bra{x_j})) \tag{$I\succeq \ket{x_j}\bra{x_j}$}\\
        &=\Tr(\rho^{[j]}\ket{x_{[j]}}\bra{x_{[j]}})\ge \theta.
    \end{align*}
    Then, by induction hypothesis, there exists $a\in C_{j-1}$ such that $|\ip{a}{x_{[j-1]}}|^2\ge \alpha$, which means that $x\in R(a,\alpha)$. 

    Next we show that if $x$ is not covered by $C_j$, then the value is high. Indeed, if $x$ is not covered by $C_j$, it means that  $|\ip{x}{c}|^2\le\alpha$ for all $c\in C_j$ and thus
    \[
    \Tr(\varphi_j\ket{x}\bra{x})=\lambda\sum_{c\in C_j}|\ip{c}{x}|^2\le\lambda|C_j|\alpha\le \left(\frac{8}{\theta}\right)^2 \frac{8}{\theta}\left(\frac{\theta}{8}\right)^4 = \frac\theta8,
    \]
    and thus
    \[
    \Tr((\rho^{[j]}-\varphi_j)\ket{x}\bra{x})\ge \frac{7}{8}\theta.
    \]
    Together with $x\in R(a,\alpha)$ this implies that
    \[
    v_a\ge \max_{x\in R(a,\alpha)}\Tr((\rho^{[j]}-\varphi_j)\ket{x}\bra{x})-\frac{\theta}{16}\ge \frac{13}{16}\theta,
    \]
    contradicting the stopping condition that we proved in the previous step, which implies that $v_a\le \frac{3}{4}\theta$ for all $a\in C_{j-1}$.

    Therefore, there must be a $c\in C_j$ such that $|\ip{x}{c}|^2\ge \alpha$. In particular, this holds for $j=k$, so $C_k$ is a $\theta$-overlap cover for $\rho$. On the other hand, if $C_{j-1}=\emptyset$, then by the induction hypothesis there are no high overlap $(j-1)$-partite product vectors and thus $\emptyset$ is itself a valid $\theta$-cover.
\end{proof}

\section{Representation Theory}\label{sec:rep-theory}

A unitary representation of a group $G$ on a finite-dimensional complex
space $E$ is a homomorphism $r:G\to U(E)$. It is irreducible if $E\ne\{0\}$
and its only invariant subspaces are $\{0\}$ and $E$. Representations $r,s$ are equivalent
if $Ar(g)=s(g)A$ for all $g$ and some invertible linear map $A$. We recall Schur's lemma (see \citet[Proposition~2.3.9]{etingof2011introduction}).

\begin{lemma}[Schur's lemma]\label{lem:rt-schur}
Let $r:G\to U(E)$ and $s:G\to U(F)$ be irreducible representations.
If $A:E\to F$ satisfies $Ar(g)=s(g)A$ for every $g\in G$, then $A=0$
when $r,s$ are inequivalent, and $A=cI_E$ for some $c\in\C$ when $E=F$ and $r=s$.
\end{lemma}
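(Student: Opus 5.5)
The plan is to prove Schur's lemma by the standard kernel/image argument combined with the eigenvalue trick; everything happens in finite-dimensional complex spaces, which is all we need.

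For the first claim, let $A:E\to F$ satisfy $Ar(g)=s(g)A$ for all $g\in G$. I will check that $\ker A\subseteq E$ is $r$-invariant: if $Av=0$ then $A r(g)v=s(g)Av=0$. Likewise, $\operatorname{im}A\subseteq F$ is $s$-invariant, since $s(g)Av=A r(g)v\in\operatorname{im}A$. Irreducibility of $r$ forces $\ker A\in\{0,E\}$, and irreducibility of $s$ forces $\operatorname{im}A\in\{0,F\}$. If $A\neq 0$, then $\ker A\neq E$, so $\ker A=0$. Also $\operatorname{im}A\neq 0$, so $\operatorname{im}A=F$. Hence $A$ is an invertible intertwiner and $r,s$ are equivalent. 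Contrapositively, inequivalent irreducibles admit only $A=0$.

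For the second claim, take $E=F$ and $r=s$. Because $E$ is a nonzero finite-dimensional complex space, $A$ has at least one eigenvalue $c\in\C$; this is the one place where algebraic closedness of $\C$ is used. The operator $A-cI_E$ still intertwines $r$ with itself, since $cI_E$ commutes with every $r(g)$. It is not invertible, because it has nontrivial kernel. By the first part, applied with $s=r$, its kernel is either $0$ or $E$. The kernel is nonzero, so it equals $E$, which gives $A=cI_E$.

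There is no substantive obstacle. The only delicate point is remembering to use finite-dimensionality over $\C$ to guarantee that an eigenvalue exists. I would also note that unitarity of the representations is not needed for this argument, although it holds in our setting.
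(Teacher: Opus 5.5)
Your proof is correct: you use the standard argument, namely that the kernel and image of an intertwiner are invariant subspaces, and for the scalar case you apply this to $A-cI_E$ for an eigenvalue $c$, which exists because $E$ is a finite-dimensional complex space. The paper gives no proof of its own and only cites Etingof et al.\ for this lemma, where essentially this same argument is given.
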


A partition $\lambda\vdash N$ is a sequence
$\lambda_1\ge\cdots\ge\lambda_s>0$ with $\sum_i\lambda_i=N$.
Its Young diagram has $\lambda_i$ boxes in row $i$, and
$\ell(\lambda)=s$ is its number of nonempty rows.
A standard tableau fills these boxes with $1,\ldots,N$, each once,
strictly increasingly along rows and columns.
A semistandard tableau with alphabet $[m]$ fills them with entries in
$[m]$, weakly increasing along rows and strictly increasing down columns.

Next we recall the Schur-Weyl decomposition as stated in \cite[Sections~2.1-2.3]{burchardt2025high} and \cite[Corollary 5.19.2, Theorem 5.22.1]{etingof2011introduction}.

\begin{lemma}[Schur--Weyl decomposition and bases]\label{lem:rt-sw}
For $m,N\ge1$, let
$W_m(\pi)(v_1\otimes\cdots\otimes v_N)
=v_{\pi^{-1}(1)}\otimes\cdots\otimes v_{\pi^{-1}(N)}$
for $\pi$ in the symmetric group $\mathfrak S_N$.
There is a unitary change of basis
\[
\Sch_m:(\C^m)^{\otimes N}\longrightarrow
\bigoplus_{\substack{\lambda\vdash N\\\ell(\lambda)\le m}}
\mathcal Q_\lambda^{(m)}\otimes\mathcal P_\lambda
\]
such that, for $U\in U(m)$ and $\pi\in\mathfrak S_N$,
\[
\Sch_mU^{\otimes N}\Sch_m^{\dagger}
=\bigoplus_\lambda u_\lambda^{(m)}(U)\otimes I_{\mathcal P_\lambda},\qquad
\Sch_mW_m(\pi)\Sch_m^{\dagger}
=\bigoplus_\lambda I_{\mathcal Q_\lambda^{(m)}}\otimes p_\lambda(\pi).
\]
The $u_\lambda^{(m)}$ are pairwise inequivalent irreducible representations
of $U(m)$; the $p_\lambda$ are pairwise inequivalent irreducible
representations of $\mathfrak S_N$, independent of $m$.
\end{lemma}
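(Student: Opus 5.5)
The statement is the classical Schur--Weyl duality. I would assemble it from the double commutant theorem and then make the change of basis explicit through the Gelfand--Tsetlin/Young bases. No quantitative estimate is needed.

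First, let $A\subseteq\mathrm{End}((\C^m)^{\otimes N})$ be the algebra spanned by $\{W_m(\pi):\pi\in\mathfrak S_N\}$, and let $B$ be the algebra spanned by $\{U^{\otimes N}:U\in U(m)\}$. Both are closed under adjoints, since $W_m(\pi)^\dagger=W_m(\pi^{-1})$ and $(U^{\otimes N})^\dagger=(U^\dagger)^{\otimes N}$. By Maschke's theorem, or simply unitarity, they are semisimple.

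The key step is to show $B=A'$, the commutant of $A$. The commutant of $A$ consists of the permutation-invariant operators, i.e.\ the symmetric subspace of $\mathrm{End}(\C^m)^{\otimes N}$. That subspace is spanned by $\{X^{\otimes N}:X\in\mathrm{End}(\C^m)\}$, which follows by polarization of symmetric tensors. Next, $X\mapsto X^{\otimes N}$ is a polynomial map. Since $U(m)$ is Zariski-dense in $GL(m)$ and $GL(m)$ is dense in $\mathrm{End}(\C^m)$, any linear functional vanishing on $U^{\otimes N}$ for all unitary $U$ vanishes on all $X^{\otimes N}$. Hence $B=A'$.

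I expect this density/polarization step to be the main obstacle. My first attempt would be to restrict to the $U=e^{itH}$ curves and differentiate at $t=0$ to reach the Lie algebra. This shows that $B$ contains the operators $\sum_i I\otimes\cdots\otimes H\otimes\cdots\otimes I$. Powers of these operators generate the symmetric tensors, by the standard Newton-identity argument.

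Next, I would apply the double commutant theorem to the semisimple algebra $A$ acting on $(\C^m)^{\otimes N}$. It gives a decomposition $\bigoplus_\lambda \mathcal Q_\lambda\otimes\mathcal P_\lambda$ in which:
\begin{itemize}
\item the $\mathcal P_\lambda$ are the distinct irreducible $\mathfrak S_N$-modules that occur;
\item $A$ acts as $\bigoplus_\lambda I\otimes p_\lambda$;
\item $A'=B$ acts as $\bigoplus_\lambda\mathrm{End}(\mathcal Q_\lambda)\otimes I$.
\end{itemize}
Because $B$ is the full matrix algebra on each $\mathcal Q_\lambda$, each $u_\lambda^{(m)}$ is irreducible. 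The $u_\lambda^{(m)}$ are pairwise inequivalent: an intertwiner between two distinct blocks would lie in the commutant of $B$, namely $A$, which is block diagonal, so by \Cref{lem:rt-schur} it vanishes. The same argument gives inequivalence of the $p_\lambda$.

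It remains to identify the labels. The irreducibles of $\mathfrak S_N$ are the Specht modules indexed by $\lambda\vdash N$, and their Young bases are independent of $m$. The Specht module $\mathcal P_\lambda$ occurs in $(\C^m)^{\otimes N}$ exactly when $\ell(\lambda)\le m$. To see this, note that the Young symmetrizer $c_\lambda$ antisymmetrizes over columns, so it annihilates the tensor space when a column has more than $m$ boxes. When $\ell(\lambda)\le m$, applying $c_\lambda$ to a semistandard-filled basis tensor gives a nonzero vector, which yields a basis of $\mathcal Q_\lambda^{(m)}$ indexed by semistandard tableaux. Taking orthonormal bases inside each isotypic component gives the unitary $\Sch_m$.
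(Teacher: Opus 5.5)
Your argument is correct. It is the standard double-centralizer proof of Schur--Weyl duality: identify the commutant of the permutation action with the span of the $X^{\otimes N}$, reach it from $U(m)$ by Zariski density or by differentiating to the Lie algebra and applying Newton's identities, and read off the condition $\ell(\lambda)\le m$ from the column antisymmetrization in $c_\lambda$.

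The paper gives no proof of its own for this lemma. It simply cites the textbook treatments (Etingof et al., Cor.~5.19.2 and Thm.~5.22.1; Burchardt et al.), whose proofs follow essentially your route. Your closing claim that the $c_\lambda e_T$ for semistandard $T$ form a basis of $\mathcal Q_\lambda^{(m)}$ is not needed for the statement. A single nonzero vector suffices, for instance $c_\lambda e_T$ with row $i$ of $T$ filled entirely with $i$.
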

Let $\{\ket{\psi_{\lambda,a,b}^{(m)}}\}$
be the orthonormal basis of $(\mathbb C^m)^{\otimes N}$ associated with the Schur–Weyl decomposition above, where $a$ indexes the chosen basis of $\mathcal Q_\lambda^{(m)}$ and $b$ indexes the chosen basis of $\mathcal P_\lambda$. Define the \emph{Schur transform} $\Sch_m$ by
\[
\Sch_m:=\sum_{\lambda,a,b}\ket{\lambda, a, b}\bra{\psi_{\lambda,a,b}^{(m)}}.
\]
For an operator $X$ on $(\C^m)^{\otimes N}\otimes B$, define its diagonal block in these coordinates by
\[
X_{\lambda\lambda}:=(\bra{\lambda}\otimes I)(\Sch_m\otimes I_B)X(\Sch_m^\dagger\otimes I_B)(\ket{\lambda}\otimes I).
\]
For a state $X$, the probability of outcome $\lambda$ is $\Tr X_{\lambda\lambda}$, and, when positive, the conditional state on $\mathcal Q_\lambda^{(m)}\otimes\mathcal P_\lambda \otimes B$ is $X_{\lambda\lambda}/\Tr X_{\lambda\lambda}$. We recall the following fact about the Haar measure.

\begin{lemma}[Haar averaging {\citep[Section~II.C]{bartlett2007reference}}]\label{lem:rt-haar}
For every operator $X$ on $(\C^m)^{\otimes N}\otimes B$, %
\[
\begin{aligned}
&(\Sch_m\otimes I_B)
\left[\left(\cT_m^{(N)}\otimes I_B\right)(X)\right]
(\Sch_m^{\dagger}\otimes I_B)=
\bigoplus_{\substack{\lambda\vdash N\\\ell(\lambda)\le m}}
\frac{I_{\mathcal Q_\lambda^{(m)}}}{\dim\mathcal Q_\lambda^{(m)}}
\otimes\Tr_{\mathcal Q_\lambda^{(m)}}X_{\lambda\lambda}
\end{aligned}
\]
where $\cT_m^{(N)}(X):=\int_{U(m)}U^{\otimes N}X(U^\dagger )^{\otimes N}dU$ and $dU$ is the Haar measure.
\end{lemma}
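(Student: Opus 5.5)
The plan is to use the Schur--Weyl decomposition of \Cref{lem:rt-sw} together with Schur's lemma (\Cref{lem:rt-schur}) and \Cref{lem:rt-haar}. The averaging channel $\cT_m^{(N)}\otimes I_B$ is the twirl over the representation $U\mapsto U^{\otimes N}$. In Schur coordinates this representation is $\bigoplus_\lambda u_\lambda^{(m)}(U)\otimes I_{\mathcal P_\lambda}$ (tensored with $I_B$).

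I would first compute the twirl block by block. Conjugating $(\Sch_m\otimes I_B)X(\Sch_m^\dagger\otimes I_B)$ into blocks $X_{\lambda\mu}$, the averaged operator $Y$ satisfies $Y\,(u(U)\otimes I)=(u(U)\otimes I)\,Y$ for all $U$, by invariance of the Haar measure. To see the structure, expand each block in an operator basis $\{E_{\beta\beta'}\otimes F\}$ of $\mathcal P_\lambda\otimes B$ to $\mathcal P_\mu\otimes B$. The $\mathcal Q$-component of $Y_{\lambda\mu}$ then intertwines $u_\mu^{(m)}$ with $u_\lambda^{(m)}$.

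Schur's lemma finishes the structural step. For $\lambda\neq\mu$ the representations are inequivalent, so the off-diagonal blocks vanish. For $\lambda=\mu$ the $\mathcal Q$-part is a scalar, so $Y_{\lambda\lambda}=I_{\mathcal Q_\lambda^{(m)}}\otimes Z_\lambda$ for some operator $Z_\lambda$ on $\mathcal P_\lambda\otimes B$.

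It remains to identify $Z_\lambda$. Since conjugation by $u_\lambda(U)\otimes I$ preserves $\Tr_{\mathcal Q_\lambda}$ of a diagonal block, the twirl commutes with the partial trace on each diagonal block. Hence $\dim\mathcal Q_\lambda^{(m)}\,Z_\lambda=\Tr_{\mathcal Q_\lambda}X_{\lambda\lambda}$, which is the claimed formula.

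The main obstacle is the rigorous reduction to Schur's lemma in the presence of the multiplicity space and the reference system $B$. I would handle it by coordinatizing as above, so that each coefficient operator $\mathcal Q_\mu\to\mathcal Q_\lambda$ is individually an intertwiner. One further detail needs care: $Y_{\lambda\mu}$ is a genuine intertwiner only after averaging. To justify this, I would write $Y_{\lambda\mu}=\int u_\lambda(U)X_{\lambda\mu}u_\mu(U)^\dagger\,dU$ and use left invariance of $dU$ to verify the intertwining relation directly.
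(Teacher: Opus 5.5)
Your argument is correct. It is the standard Schur's-lemma proof of the twirling formula: expand each block in an operator basis on the multiplicity-and-reference factor, kill off-diagonal blocks by inequivalence of the $u_\lambda^{(m)}$, get a scalar $\mathcal Q$-part on diagonal blocks, and fix the normalization by invariance of $\Tr_{\mathcal Q_\lambda}$ under the twirl. The paper gives no proof of its own and simply cites Bartlett et al.\ for exactly this argument. The only slip is that your first sentence lists \Cref{lem:rt-haar} itself among the tools, which would be circular; your argument never actually uses it, so drop that reference.
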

Finally, 
\citet{burchardt2025high,bacon2005quantum}
show that the Schur transform can be implemented efficiently. The exact
basis used for the transform is irrelevant for us, although we are going to use
the fact that the basis used for the permutation matrix is independent of the local dimension $m$ and the basis for $\mathcal Q_\lambda^{(m)}$ 
consists of semistandard tableaux.

\begin{lemma}[Schur circuits]\label{lem:rt-circuits}
For $0<\epsilon<1$ and the common tableau bases above, $\Sch_m$ and $\Sch_m^{\dagger}$ admit circuits
of size $\poly(N,\log m,\log(1/\epsilon))$ with operator-norm error
at most $\epsilon$, on the valid input subspaces with auxiliary registers
initialized to zero.
\end{lemma}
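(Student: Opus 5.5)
This statement is cited from \citet{burchardt2025high,bacon2005quantum}, so the plan is mostly to check that their construction fits the conventions fixed above: the bases of Lemma~\ref{lem:rt-sw} and the operator-norm error model. I would follow the Bacon--Chuang--Harrow recursive construction. Write $\Sch_m$ as a product of $N$ Clebsch--Gordan stages. Stage $i$ takes a register holding $\ket{\lambda^{(i-1)},q^{(i-1)}}$, with $\lambda^{(i-1)}\vdash i-1$ and $q^{(i-1)}$ a semistandard tableau basis vector of $\mathcal Q^{(m)}_{\lambda^{(i-1)}}$, together with the next input qudit $\ket{v_i}\in\C^m$. It outputs $\ket{\lambda^{(i)},q^{(i)}}$ plus the record of which box was added. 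The sequence of added boxes is exactly a standard tableau, i.e.\ the Young--Yamanouchi basis of $\mathcal P_\lambda$. This basis does not depend on $m$, which is the property we need later.

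Each CG stage is the unitary that decomposes $\mathcal Q_\mu^{(m)}\otimes\C^m\cong\bigoplus_{\mu+e_r}\mathcal Q^{(m)}_{\mu+e_r}$. I would implement it recursively in $m$ through the Gelfand--Tsetlin branching $U(m)\supset U(m-1)$. A semistandard tableau with entries in $[m]$ is the same as a Gelfand--Tsetlin pattern, i.e.\ a chain of $m$ interlacing partitions. The CG transform for $U(m)$ then factors into $U(m-1)$ CG transforms and a ``reduced Wigner operator''. That operator acts only on the row-length registers, using $O(m)$ controlled rotations whose angles are explicit square roots of rational functions of the pattern entries.

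A naive count gives $\poly(N,m)$, so the main obstacle is reaching $\log m$. To get it, I would use the Harrow/Burchardt compression: at most $N$ of the $m$ rows of any pattern are non-empty. Each Gelfand--Tsetlin pattern is therefore encoded sparsely, by the at most $N$ levels at which it changes, each written with $O(\log m)$ bits. The branching steps at levels where nothing changes are the identity and are skipped. Only the $O(N)$ ``active'' levels, found by classical comparison circuits on $O(\log m)$-bit integers, get nontrivial rotations. This yields $\poly(N,\log m)$ gates per stage and $\poly(N,\log m)$ gates overall. I expect the bookkeeping showing that skipped levels really act trivially, in the sparse encoding, to be the delicate part. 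I would take it from \citet[Sections~2.1--2.3]{burchardt2025high}.

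For the error, each rotation angle is computed reversibly to $O(\log(N\log m/\epsilon))$ bits and applied by controlled phase/rotation synthesis. There are $\poly(N,\log m)$ rotations, so precision $\epsilon/\poly(N,\log m)$ per gate gives total operator-norm error at most $\epsilon$ by subadditivity. Ancillas start at zero and are uncomputed, and the statement only concerns valid inputs, i.e.\ correctly encoded basis states. $\Sch_m^\dagger$ is obtained by reversing the circuit, with the same size and the same error.
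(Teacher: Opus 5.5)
Your proposal is correct and takes the same route as the paper, which gives no argument of its own and simply cites \citet{bacon2005quantum,burchardt2025high} for this recursive Clebsch--Gordan construction and its high-dimensional ($\log m$) refinement via sparse Gelfand--Tsetlin patterns. One point to make explicit: the Schur-compression lemma needs the matrices $p_\lambda(\pi)$ on $\mathcal P_\lambda$ to be identical for different $m$, not just the standard-tableau labels. This follows from your observation that levels where the pattern does not change act as the identity, since then, on inputs supported in the first $q$ coordinates, the $U(D)$ circuit reproduces the $U(q)$ circuit on padded patterns.
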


\section{Implementation of the Quantum Channels from \Cref{sec:channels}}\label{app:subroutines}
\lemmadrfilter*
\begin{proof}
Start with two copies of $\rho$, perform the standard swap test, and retain the control register and the $j$-th factor of the first copy. Then append a qubit prepared in $\ket 1$. This procedure prepares
\[
\frac12\sum_{b\in\{0,1\}} \ket b\bra b \otimes \ket 1\bra 1\otimes \Tr_{-j}(\rho+(-1)^b\rho^2).
\]
Observe that the difference between the two subnormalized blocks for $b=0$ and $b=1$ is $\ket 1 \bra 1\otimes \Tr_{-j}(\rho^2)=\ket 1 \bra 1\otimes A_{j}$.

This is convenient because we can use a result of \citet[Lemma~12]{kimmel2017hamiltonian} which states that for a state $\rho'\in \cD(\C^2\otimes (\C^d)^{\otimes k})$ of the form $\rho'=\ket 0 \bra 0 \rho_++\ket 1 \bra 1 \rho_-$, then we can simulate the Hamiltonian $H=\rho_+-\rho_-$, with trace-norm error $\epsilon$, using $O(\epsilon^{-1})$ copies of $\rho'$. We apply this to the subnormalized states $\rho_+=\frac12\ket 1\bra 1\otimes \Tr_{-j}(\rho+\rho^2)$ and $\rho_-=\frac12\ket 1\bra 1\otimes \Tr_{-j}(\rho-\rho^2)$ which implements the two unitaries
\[
e^{\pm iH/2} = e^{\pm i (\ket 1 \bra 1\otimes A_{j})/2} = \ket 0 \bra 0\otimes I+\ket 1 \bra 1\otimes e^{\pm i A_j/2},
\]
up to error $\epsilon$ using $O(\epsilon^{-1})$ copies in time $\poly(\epsilon^{-1}k\log d)$. So now we can assume that we have access to such transformations.

We describe the algorithm implementing $\Phi_j$ next. Let $L=O(k^2/\eta^4)$ be the Lipschitzness of $f$ and let $n=\Theta(L^2/\delta^2)$. We then append $m=O(n\log n)$ extra qubits: $n$ work qubits, an $M=O(\log n)$-qubit count register, and an extra rotation qubit for a total of $m=n+M+1$ extra qubits.

\begin{enumerate}
\item To each of the $n$ working qubits, apply the following transformations
\[
\textnormal{Hadamard}\to \ket 0 \bra 0\otimes I+\ket 1 \bra 1\otimes e^{+ i A_j/2}\to \textnormal{diag}(1,i)\to \textnormal{Hadamard},
\]
where, when we apply $\ket 0 \bra 0\otimes I+\ket 1 \bra 1\otimes e^{\pm i A_j/2}$, we mean applying it to the work qubit we are considering, and the target register $\cong \C^d$ on which we want $\Phi_j$ to act;

\item Compute the Hamming weight of the work qubits and write it into the counter register. In terms of basis
\[
\ket{b_1\cdots b_n}\ket 0 \to \ket{b_1\cdots b_n}\ket {b_1+\cdots +b_n};
\]

\item For $\ell\in\{0,\ldots, n\}$ define $\widehat\lambda(\ell)=2\arcsin\left(\left[\frac{2\ell}n-1\right]_0^{\sin(\frac12)}\right)$, where $[\cdot]_{a}^b$ is the capping between $a$ and $b$.
Then apply the following transformation to the counter register and rotation qubit as
\[
\ket\ell\ket 0\to \ket\ell \left( f(\widehat\lambda(\ell))\ket 0+\sqrt{1-f(\widehat\lambda(\ell))^2}\ket 1\right);
\]

\item Now we clean all appended registers. First, for the work qubits and the counter register:
\[
\ket{b_1\cdots b_n}\ket {b_1+\cdots +b_n}\to \ket{b_1\cdots b_n}\ket 0,
\]
and then reverse all the operations of Step~1 for each work qubit:
\[
\textnormal{Hadamard}\to \textnormal{diag}(1,-i)\to \ket 0 \bra 0\otimes I+\ket 1 \bra 1\otimes e^{- i A_j/2}\to \textnormal{Hadamard}.
\]
The rotation qubit is left untouched.

\item Measure the appended registers with the projector $\{\ket{0}\bra{0}^{\otimes m}, I-\ket{0}\bra{0}^{\otimes m}\}$. If we measure $\ket{0^m}$, then we retain the target register; otherwise, we output $\ket{\bot}$.
\end{enumerate}

Now we show that when the algorithm above is implemented with total error $O(\delta)$, the resulting channel $\widehat \Phi_j$ approximates the ideal channel $\Phi_j$ up to error $O(\delta)$. Assume that all gates are implemented exactly and fix a unit eigenvector $\ket \lambda\in \C^d$ of $A_j$, i.e., $A_j\ket\lambda=\lambda\ket\lambda$. Next we compute, on input $\ket\lambda$ in the target register, the state of each of the $n$ work qubits after Step~1. The joint state of the target register and a single work qubit is $\ket 0 \ket\lambda$, and after the Hadamard gate is $\frac1{\sqrt 2}(\ket 0\ket\lambda+\ket 1\ket\lambda)$. Then we apply the evolution given by the Hamiltonian simulation $\ket 0 \bra 0 \otimes I +\ket 1 \bra 1\otimes e^{iA_j/2}$, and the resulting state is %
\[
\frac1{\sqrt 2}\left(\ket 0+\ket 1e^{i A_j/2}\right)\ket \lambda = \frac1{\sqrt 2}\left(\ket 0+\ket 1e^{i \lambda/2}\right)\ket \lambda,
\]
and, after applying $\textnormal{diag}(1,i)$ we obtain
\[
\frac1{\sqrt 2}\left(\ket 0+\ket 1ie^{i \lambda/2}\right)\ket \lambda.
\]
Finally, after the final Hadamard gate, we obtain a working qubit in state $\alpha_\lambda\ket 0+\beta_\lambda\ket 1$, where $\alpha_\lambda=\frac{1+ie^{i\lambda/2}}{2}$ and $\beta_\lambda=\frac{1-ie^{i\lambda/2}}{2}$. %

After Step~2 and Step~3, the state over the target register and all ancillary qubits will be in state
\[
\sum_{\ell=0}^n \sum_{\substack{b\in\{0,1\}^n:\\|b|=\ell}} \alpha_\lambda^{n-\ell}\beta_\lambda^\ell \ket{b_1\cdots b_n}\ket\ell \Big( f(\widehat\lambda(\ell))\ket 0+\sqrt{1-f(\widehat\lambda(\ell))^2}\ket 1\Big) \ket\lambda.
\]
Now, we undo all the extra qubits but the rotation. Undoing the counter register is trivial. For each of the $n$ work qubits, we define as $U_\lambda$ the entire operator defined in Step~1. We proved that $U_\lambda\ket 0=\alpha_\lambda\ket 0+\beta_\lambda \ket 1$, so $U_\lambda^\dagger$ maps $\ket 0$ to $\ket 0$ with amplitude $\bar{\alpha_\lambda}$ and maps $\ket 1$ to $\ket 0$ with amplitude $\bar{\beta_\lambda}$. For a basis vector $\ket{b_1\cdots b_n}$ of Hamming weight $T$, the amplitude that all the work qubits are $\ket 0$ is
\[
\bar{\alpha_\lambda}^{n-T}\bar{\beta_\lambda}^{T}.
\]
And similarly, the contribution of the same basis vector from Step~1 is ${\alpha_\lambda}^{n-T}{\beta_\lambda}^{T}$ and $f(\widehat \lambda_T)$ from Step~3. Thus the contribution of each basis vector of Hamming weight $T$ is  %
\[
{\alpha_\lambda}^{n-T}{\beta_\lambda}^{T} f(\widehat \lambda_T) \bar{\alpha_\lambda}^{n-T}\bar{\beta_\lambda}^{T} = f(\widehat \lambda_T) \left|\alpha_\lambda\right|^{2(n-T)}\left|\beta_\lambda\right|^{2T}.
\]
Therefore, the total successful amplitude is 
\begin{align*}
g(\lambda)&=\sum_{T=0}^n\binom nT f(\widehat \lambda_T) \left|\alpha_\lambda\right|^{2(n-T)}\left|\beta_\lambda\right|^{2T}\\
&=\sum_{T=0}^n\binom nT f(\widehat \lambda_T) (1-p_\lambda)^{(n-T)}p_\lambda^{T}\tag{$p_\lambda:=\frac{1}2(1+\sin(\lambda/2))$}\\
&=\mathbb{E}_T[f(\widehat\lambda_T)] \tag{$T\sim \text{Bin}(n,p_\lambda)$}.
\end{align*}

Now we show that $g(\lambda)$ approximates $f(\lambda)$. First, note that by Lipschitzness of $f$ we have:
\[
|f(\widehat\lambda_T)-f(\lambda)| \le L|\widehat\lambda_T-\lambda|\le O(L)\left|\frac Tn-p_\lambda\right|,
\]
where the second inequality comes from the $O(1)$-Lipschitzness of the clipping function and the $\arcsin$ function. Thus
\begin{align*}
    |g(\lambda)- f(\lambda)|&=|\mathbb{E}_T[f(\widehat \lambda_T)]- f(\lambda)|\\
    &\le \mathbb{E}_T[|f(\widehat \lambda_T)- f(\lambda)|]\\
    &\le O(L)\mathbb{E}_T\left[\left|\frac Tn-p_\lambda\right|\right]\\
    &\le O(L)\sqrt{\mathbb{E}_T\left[\left(\frac Tn-p_\lambda\right)^2\right]}\\
    &\le O(L)\sqrt{(1-\sin(\lambda/2))/n}=O(L/\sqrt n) = O(\delta).
\end{align*}

Now instead of focusing on an input state which is an eigenvector of $A_j$, write it according to its spectral decomposition $A_j=\sum_{\lambda}\lambda\ket\lambda\bra\lambda$. We proved that on every eigenvector $|g(\lambda)-f(\lambda)|=O(\delta)$. Now let $G:=g(A_j)=\sum_{\lambda} g(\lambda)\ket\lambda\bra\lambda$. Since $g(\lambda)\in[0,1]$ then $0\preceq G \preceq I$. Moreover recall that $F_j=f(A_j)=\sum_{\lambda}f(\lambda)\ket\lambda\bra\lambda$ and thus
\[
\|G-F_j\|_{op} \le \max_{\lambda}|g(\lambda)-f(\lambda)| = O(\delta).
\]
This proves that with exact gates, the channel implements $\Phi_j$ with error at most $O(\delta)$. For the final error bound, observe that there are $O(n)$ Hamiltonian-simulation calls, each implemented to error $O(\delta/n)$, giving total error $O(\delta)$, and the remaining gates contribute at most another $O(\delta)$. Therefore, the full implementation error is at most $O(\delta)$.
The algorithm uses at most $2n\cdot O(n/\delta)=O(n^2/\delta)
=\poly(k,\eta^{-1},\delta^{-1})$ auxiliary copies and its time is bounded by $\poly(k,\eta^{-1},\delta^{-1},\log d)$.
\end{proof}

\lemmadresimate*
\begin{proof}
    Start with two copies of $\rho$ and prepare a control qubit in state $\ket+$. Let $S$ be the swap operator and apply $\ket 0 \bra0\otimes I+\ket 1 \bra 1 \otimes S$ to the joint states $\ket{+}\bra{+}\otimes \rho\otimes \rho$. Then apply the filtering algorithm of \Cref{lem:dr-filter} to the $j$-th factor of the first copy with accuracy $\delta=O(\epsilon)$. Then measure the control qubit in the Hadamard basis and check whether the filtered output gets sent to $\ket{\bot}$. So there are effectively $4$ outcomes
    \[
    (+,\bot),\quad(-,\bot),\quad(+,\lnot \bot),\quad\text{and}\quad(-,\lnot \bot).
    \]
    Define
    \[
    Z=\begin{cases}
        +1&\text{if}\quad (+,\bot)\\
        -1&\text{if}\quad (-,\bot)\\
        0&\text{otherwise}
    \end{cases},
    \]
    so that $\mathbb{E}[Z] = \mathbb{P}(+,\bot)-\mathbb{P}(-,\bot)$. Run this procedure $m$ times and return $\widehat e_j=\frac1m\sum_{r\in[m]}Z_r$. We will prove that $\mathbb{E}[Z]$ approximates $e_j$, and thus by standard concentration that so does $\widehat e_j$. 

    Assume that the filter $\Phi_j$ is ideal. Before filtering, the joint state is
    \[
    \omega:=\frac{1}2\left(\ket{0}\bra{0}\otimes (\rho\otimes \rho)+\ket{0}\bra{1}\otimes (\rho\otimes \rho)S+\ket{1}\bra{0}\otimes S(\rho\otimes \rho)+\ket{1}\bra{1}\otimes (\rho\otimes \rho)\right).
    \]
    By definition of the ideal channel $\Phi_j$, the probability of measuring $\bot$ on input $X$ is $\Tr((I-F_j^2) X)$. Since $\ket+\bra+-\ket -\bra -=\ket0\bra1+\ket 1\bra 0$, we get
    \begin{align*}
    \mathbb{E}[Z]&=\Tr[((\ket+\bra+-\ket -\bra -)\otimes ((I-F_j^2)\otimes I_{-j}\otimes I))\omega]\\
    &=\frac12\Tr[((I-F_j^2)\otimes I_{-j}\otimes I)((\rho\otimes \rho) S+S(\rho\otimes \rho))]\\
    &=\Re\Tr[((I-F_j^2)\otimes I_{-j}\otimes I)S(\rho\otimes \rho) ]\tag{by cyclicity of the trace}.
    \end{align*}
    Now it is easy to check that $\Tr[(Y\otimes I)S(\rho\otimes \rho)]=\Tr(Y\rho^2)$ and thus, for $Y=(I-F_j^2)\otimes I_{-j}$, we get
    \[
    \mathbb{E}[Z] = \Tr[((I-F_j^2)\otimes I_{-j})\rho^2]=\Tr[(I-F_j^2)\Tr_{-j}(\rho^2)] = e_j.
    \]
    Since $Z_r\in[-1,1]$, by taking $m=\Theta(\epsilon^{-2}\log (k/p))$ we get that $|\widehat e_j-e_j|=|\widehat e_j-\mathbb{E}[Z]|$ is larger than $\epsilon$ with probability at most $p/k$. A union bound shows uniformity over all $j\in[k]$.

    Each repetition uses two target copies and $\poly(k,\eta^{-1},\epsilon^{-1})$ auxiliary copies. The controlled swap costs $O(k\log d)$ gates, and the filtering time is polynomial in $k,\eta^{-1},\epsilon^{-1}$ and $\log d$. Multiplying by the number of repetitions and parties gives the claimed copy and time bounds. Since the implemented channel is $\delta$-close in trace norm even with references, the final measurement distribution changes by total variation at most $O(\delta)$.
\end{proof}

\lemmadrschurcompression*
\begin{proof}
    Recall that the Schur-Weyl decomposition prescribes the existence of a unitary change of basis
    \[
    \Sch_q:(\C^q)^{\otimes N} \to \bigoplus_{\substack{\lambda\vdash N\\\ell(\lambda)\le q}} Q_\lambda^{(q)}\otimes P_\lambda.
    \]
    Fix some unit vector $\ket{t_\lambda}\in Q_{\lambda}^{(q)}$. We define $\cC_{D\to q}^{(N)}$ by the following procedure:
    \begin{enumerate}
    \item Apply $\Sch_D$ and measure the $\lambda$ register;
    \item If $\ell(\lambda)>q$, discard the Schur registers, output any state in $(\C^q)^{\otimes N}$;
    \item Otherwise retain $\lambda$ and $P_\lambda$, discard $Q_\lambda^{(D)}$, and prepare the unit basis vector $|t_\lambda\rangle\in Q_\lambda^{(q)}$;
    \item Apply $\Sch_q^\dagger$.
    \end{enumerate}

    Note that $V^{\otimes N}$ commutes with permutations $W_D(\pi)$ for every $\pi\in \mathfrak S_N$:
    \[
    W_D(\pi)V^{\otimes N} = V^{\otimes N}W_q(\pi).
    \]
    Applying \Cref{lem:rt-schur} to the $P$ factors in \Cref{lem:rt-sw} gives
    \[
    \begin{aligned}
    \Sch_DV^{\otimes N}\Sch_q^\dagger
    =\bigoplus_{\substack{\lambda\vdash N\\\ell(\lambda)\le q}}
    V_\lambda\otimes I_{ P_\lambda},\qquad
    V_\lambda: Q_\lambda^{(q)}\to Q_\lambda^{(D)},
    \qquad V_\lambda^\dagger V_\lambda=I,
    \end{aligned}
    \]
    where $V_\lambda^\dagger V_\lambda =I$ follows from $V^{\otimes N}$ being an isometry. Moreover, no Schur label with $\ell(\lambda)>q$ occurs, since such labels are absent from the Schur decomposition of $(\mathbb C^q)^{\otimes N}$.
    Let $X_{\lambda\lambda}$ denote the unnormalized diagonal blocks of $(\Sch_q\otimes I_B)X(\Sch_q^\dagger\otimes I_B)$. Note that after measuring $\lambda$ of the embedded state, the corresponding $\lambda$ diagonal block becomes $(V_\lambda\otimes I_{P_\lambda\otimes B})X_{\lambda\lambda}(V_\lambda^\dagger\otimes I_{P_\lambda\otimes B})$ and
    \[
    \Tr_{Q_\lambda^{(D)}}\left[(V_\lambda\otimes I_{P_\lambda\otimes B})X_{\lambda\lambda}(V_\lambda^\dagger\otimes I_{P_\lambda\otimes B})\right] = \Tr_{Q_\lambda^{(q)}} [X_{\lambda\lambda}],
    \]
    since $V_\lambda^\dagger V_\lambda = I$.
    This indicates that after disregarding the unknown $Q_\lambda^{(D)}$ space, we exactly preserved the marginal on $P_\lambda\otimes B$. Then we insert an arbitrary fixed $\ket{t_\lambda}$ into the $Q_\lambda^{(q)}$ registers, and thus, at the end of Step~3 we have the state
    \[
    \bigoplus_{\substack{\lambda\vdash N\\\ell(\lambda)\le q}}|t_\lambda\rangle\langle t_\lambda|\otimes\Tr_{ Q_\lambda^{(q)}}(X_{\lambda\lambda}),
    \]
    and thus, for $Y=(V^{\otimes N}\otimes I_{ B})X((V^\dagger)^{\otimes N}\otimes I_{B})$:
    \[
    (\cC_{D\to q}^{(N)}\otimes I_B)\left(Y\right) = (\Sch_q^\dagger\otimes  I_B)\left[\bigoplus_{\substack{\lambda\vdash N\\\ell(\lambda)\le q}}|t_\lambda\rangle\langle t_\lambda|\otimes\Tr_{ Q_\lambda^{(q)}}(X_{\lambda\lambda})\right](\Sch_q\otimes  I_B)
    \]

    Now we use \Cref{lem:rt-haar} after applying $\cT_q^{(N)}\otimes I_B$ to $(\cC_{D\to q}^{(N)}\otimes I_B)\left(Y\right)$.
    Indeed:
    \begin{align*}
        (\cT_q^{(N)}\otimes I_B) ((\cC_{D\to q}^{(N)}&\otimes I_B)\left(Y\right)) = (\cT_q^{(N)}\otimes I_B)\left((\Sch_q^\dagger\otimes  I_B)\left[\bigoplus_{\substack{\lambda\vdash N\\\ell(\lambda)\le q}}|t_\lambda\rangle\langle t_\lambda|\otimes\Tr_{ Q_\lambda^{(q)}}(X_{\lambda\lambda})\right](\Sch_q\otimes  I_B)\right)\\
        &=(\Sch_q^\dagger\otimes  I_B)\left(\bigoplus_{\substack{\lambda\vdash N\\\ell(\lambda)\le q}}\frac{I_{Q_\lambda^{(q)}}}{\dim Q_\lambda^{(q)}}\otimes \Tr_{Q_\lambda^{(q)}}\left(\ket{t_\lambda}\bra{t_\lambda}\otimes\Tr_{Q_\lambda^{(q)}}(X_{\lambda\lambda})\right)\right)(\Sch_q\otimes  I_B)\\
        &=(\Sch_q^\dagger\otimes  I_B)\left(\bigoplus_{\substack{\lambda\vdash N\\\ell(\lambda)\le q}}\frac{I_{Q_\lambda^{(q)}}}{\dim Q_\lambda^{(q)}}\otimes \Tr_{Q_\lambda^{(q)}}\left(X_{\lambda\lambda}\right)\right)(\Sch_q\otimes  I_B)\\
        &=(\cT_q^{(N)}\otimes I_B)(X),
    \end{align*}
    where the second equality comes from \Cref{lem:rt-haar} under the Schur-Weyl basis, the third equality follows from $\Tr(\ket{t_\lambda}\bra{t_\lambda})=1$, and the last equality is again \Cref{lem:rt-haar}, this time applied directly.
\end{proof}

\lemmadrhaar*

\end{document}